\documentclass[a4paper,12pt]{article}

\pdfoutput=1 

\usepackage[hmargin=.71in,vmargin=1.1in]{geometry}
\usepackage{indentfirst}
\usepackage{amsfonts}
\usepackage{mathrsfs}
\usepackage{amsmath}
\usepackage{mathabx}
\usepackage{amssymb}
\usepackage{authblk}
\usepackage{cite}
\usepackage{xcolor}
\usepackage{mathtools}
\usepackage{tensor}
\usepackage{physics}
\usepackage{graphicx}
\usepackage{bm}
\usepackage{upgreek}
\usepackage{braket}
\usepackage{color,soul}
\usepackage{csquotes}
\usepackage{caption}
\usepackage{subcaption}
\usepackage{slashed}
\usepackage{mathtools}
\usepackage{booktabs} 

\usepackage[bookmarksnumbered=true,bookmarksopen=true]{hyperref}
 \hypersetup{colorlinks,
             linkcolor=[rgb]{0,0.3,0.6}, 
             citecolor=[rgb]{0,0.3,0.6}, 
             urlcolor=[rgb]{0,0.3,0.6}}

\newcommand\mi{{\rm i}}
\newcommand\me{{\rm e}}

\newcommand\pp{\uppi}

\DeclareMathOperator{\diag}{diag}

\usepackage{tikz}
\usetikzlibrary{matrix}

\usepackage{amsthm}

\theoremstyle{plain}
\newtheorem{theorem}{Theorem}[section]
\newtheorem{lemma}[theorem]{Lemma}
\newtheorem{proposition}[theorem]{Proposition}
\newtheorem{corollary}[theorem]{Corollary}

\theoremstyle{definition}
\newtheorem{definition}[theorem]{Definition}
\newtheorem{assumption}{Assumption}

\theoremstyle{remark}
\newtheorem{remark}[theorem]{Remark}
\newtheorem{example}[theorem]{Example}

\begin{document}

\title{\Large\textbf{Kubo-Martin-Schwinger conditions for non-Hermitian systems}}

\author[a]{Chen Lan\thanks{stlanchen@126.com}}
\author[a]{Luyao Ma\thanks{luyao26428@s.ytu.edu.cn}}
\author[b]{Hao Yang\thanks{hyang@ucas.ac.cn}}

\affil[a]{\normalsize{\em Department of Physics, Yantai University, 30 Qingquan Road, Yantai 264005, China}}
\affil[b]{\normalsize{\em School of Fundamental Physics and Mathematical Sciences, Hangzhou Institute for Advanced Study, UCAS, Hangzhou 310024, China}}

\date{ }

\maketitle

\begin{abstract}
We investigate the extension of the Kubo--Martin--Schwinger (KMS) thermal equilibrium condition to bounded non-Hermitian Hamiltonians with real spectra and biorthogonal eigensystems, providing a unified framework through three complementary constructions: a complete KMS theorem under quasi-Hermiticity, a biorthogonal KMS-type identity whose positivity characterises quasi-Hermiticity, and a quantum-detailed-balance condition for the associated open-system dynamics. Our main result is a thermodynamic characterisation of quasi-Hermiticity: for any diagonalisable $H\in M_d(\mathbb C)$ with real spectrum, the biorthogonal Gibbs functional
$\omega_{\rm bi}(A)=Z_{\rm bi}^{-1}\sum_n e^{-\beta E_n}\langle\phi_n|A|\psi_n\rangle$
satisfies $\omega_{\rm bi}(A^\dagger A)\ge0$ for all $A$ if and only if $H$ is quasi-Hermitian. The proof reconstructs the metric $\eta$ directly from the eigenprojectors of $\omega_{\rm bi}$ via the Riesz representation theorem, yielding a metric-free criterion for quasi-Hermiticity. Under the quasi-Hermitian hypothesis, we prove that the $\eta$-Gibbs state
$\omega_\eta(A)=Z_\eta^{-1}\Tr[\eta e^{-\beta H}A]$
satisfies the full analytic KMS condition using the Hadamard three-line theorem and Bari's theorem on Riesz bases. The transported state generally differs from the Gibbs state of the isospectral Hermitian partner whenever $[\eta,h]\neq0$, so the KMS property cannot be obtained by similarity transformation alone. Finally, within the Haag--Hugenholtz--Winnink programme, we establish the Tomita--Takesaki modular structure of the $\eta$-Gibbs state in finite dimensions, while the construction of a compatible $C^*$-norm and the proof of $\sigma$-weak continuity remain open.
\end{abstract}

\tableofcontents

\section{Introduction}
\label{sec:intro}

The Kubo--Martin--Schwinger (KMS) condition provides the mathematically rigorous characterisation of thermal equilibrium in quantum statistical mechanics \cite{Kubo:1957mj,Martin:1959jp}. 
In the operator-algebraic framework of Haag, Hugenholtz, and Winnink (HHW)~\cite{Haag:1967sg}, 
a state on a $C^*$-algebra is said to be a KMS state at inverse temperature $\beta > 0$ 
if its two-point correlation functions admit analytic continuation to a strip in the complex time plane 
and satisfy a specific boundary condition relating their boundary values. 
Unlike the Gibbs density matrix formulation, 
the KMS condition remains meaningful in the thermodynamic limit 
and has therefore become the fundamental notion of equilibrium in algebraic quantum statistical mechanics~\cite{Bratteli:1996xq}. 
Its structural role has been further illuminated by Tomita--Takesaki (TT) modular theory \cite{Takesaki:1970aki,Summers:2003tf} 
and by its appearance in the Unruh and Hawking effects~\cite{Haag:1996hvx,Birke:2002fj}.

In recent decades, non-Hermitian quantum systems have attracted considerable attention across mathematical and theoretical physics~\cite{Bender:1998ke, Bender:2007nj, Bender:2023cem, Ashida:2020dkc, Bergholtz:2019deh}. 
Prominent examples include $\mathcal{PT}$-symmetric quantum mechanics, 
pseudo-Hermitian and quasi-Hermitian operator theory~\cite{Scholtz:1992zz, Mostafazadeh:2001jk, Mostafazadeh:2008pw}, 
and effective descriptions of open quantum systems~\cite{Breuer:2007juk, Prosen:2012sn}. 
Of particular importance is the class of quasi-Hermitian Hamiltonians satisfying $H^\dag = \eta H \eta^{-1}$, 
where $\eta$ is a bounded positive-definite metric operator. 
Such Hamiltonians are unitarily equivalent to Hermitian operators and 
therefore possess real spectra and unitary dynamics in the modified inner product induced by $\eta$~\cite{Scholtz:1992zz, Mostafazadeh:2001jk, Mostafazadeh:2008pw}. 
At the same time, effective non-Hermitian Hamiltonians arise naturally in open quantum systems 
via Lindblad-type dynamics and quantum trajectory methods~\cite{Breuer:2007juk, Ashida:2020dkc, Prosen:2012sn}, 
as well as in condensed matter contexts where non-Hermitian band topology and the skin effect have emerged as central phenomena~\cite{Bergholtz:2019deh, Kawabata:2018gjv}.

These developments raise a fundamental question: to what extent can the notion of thermal equilibrium, 
as formalised by the KMS condition, be extended beyond the Hermitian setting? This question is not merely formal. 
From the perspective of algebraic quantum field theory, the stability of KMS states under perturbations has become an important structural problem, revealing that equilibrium properties remain robust under suitable interacting dynamics \cite{Drago:2016zlf}.
The physical interpretation of temperature and equilibrium in non-Hermitian systems—whether arising from $\mathcal{PT}$-symmetric quantum mechanics, quasi-Hermitian models, or the effective Hamiltonian of an open quantum system—depends critically on whether a rigorous KMS framework can be established~\cite{Gardas:2016gja, Du:2022ynf, Cao:2023mfy,Jones:2009fw}.

A natural starting point is the observation that a diagonalisable non-Hermitian Hamiltonian with a real spectrum admits a biorthogonal eigendecomposition. 
It should be noted, however, that biorthogonal Gibbs functionals have previously been introduced in the context of biorthogonal Riesz bases by Bagarello, Trapani, and Triolo~\cite{Bagarello:2016gbs}. Their construction already establishes a Gibbs-type functional associated with biorthogonal eigenvectors. The present work adopts the same basic functional but investigates its positivity and KMS properties from the viewpoint of pseudo-Hermitian statistical mechanics.
One may then define thermal expectation values through biorthogonal traces, leading to correlation functions that formally resemble their Hermitian counterparts. However, the KMS condition is substantially stronger than a formal boundary relation. Beyond the boundary identity itself, it requires analyticity of the relevant correlation functions in a complex strip, uniform boundedness, and positivity of the underlying state~\cite{Bratteli:1996xq}. While these properties are automatic in the conventional Hermitian framework, they become non-trivial in the non-Hermitian setting and may fail independently. In particular, positivity of the biorthogonal thermal functional is not guaranteed by spectral reality alone, and its failure has direct thermodynamic consequences. Consequently, the existence of a meaningful thermal state for a non-Hermitian system cannot be inferred solely from spectral considerations~\cite{Cao:2023mfy, Cipolloni:2023dtl,Bebiano:2020nht}.

Beyond the original HHW formulation, the mathematical theory of KMS states has continued to develop substantially over the past decades. Important advances include spectral characterisations of equilibrium states \cite{DeCanniere:1982fvy}, stability properties of KMS states under perturbations and interacting dynamics \cite{Bratteli:1978pe,Derezinski:2003de,Ejima:2019pkms}, and perturbative constructions of equilibrium states in algebraic quantum field theory \cite{Galanda:2024unp,Drago:2016zlf}. These developments considerably deepen the structural understanding of thermal equilibrium within operator-algebraic quantum statistical mechanics. Nevertheless, they remain formulated within conventional Hermitian dynamics and do not address the extension of KMS theory to genuinely non-Hermitian quantum systems.

The purpose of the present work is to investigate the KMS condition for non-Hermitian quantum systems from a functional-analytic perspective. We consider three closely related settings. First, for quasi-Hermitian Hamiltonians equipped with a positive-definite metric operator $\eta$, we show that the associated $\eta$-Gibbs state satisfies the full KMS condition with respect to the $\eta$-modified dynamics. This establishes a rigorous equilibrium framework for a broad class of non-Hermitian systems that are spectrally equivalent to Hermitian ones~\cite{Scholtz:1992zz, Mostafazadeh:2008pw}.
Second, we examine the biorthogonal thermal functional constructed directly from the left and right eigenvectors of a diagonalisable Hamiltonian. While this functional generically satisfies the formal KMS boundary relation, we show that positivity is a far more restrictive property. In finite dimensions, positivity of the biorthogonal thermal state is equivalent to quasi-Hermiticity of the Hamiltonian. This result provides a thermodynamic characterisation of quasi-Hermitian systems formulated entirely at the level of thermal states, and complements recent structural results on statistical mechanics for such systems~\cite{Cao:2023mfy},
and extends earlier biorthogonal Gibbs-state constructions developed in Refs.~\cite{Bagarello:2016gbs,Bagarello:2020gib}.
Third, we discuss open quantum systems governed by Gorini--Kossakowski--Sudarshan--Lindblad (GKSL) dynamics. 
In this context, the equilibrium properties are determined by the full quantum dynamical semigroup rather than by the effective non-Hermitian Hamiltonian alone~\cite{Fagnola:2007gms,Fagnola:2015ems}. 
Finally, we briefly discuss quantum detailed balance and its relation to KMS-type stationary states in the framework of open quantum systems~\cite{Guo:2024ycr,Scandi:2025uiv}, thereby placing our results within the broader context of quantum statistical mechanics.
Throughout this work, our analysis is carried out for bounded quasi-Hermitian operators with bounded and boundedly invertible metric operators on Hilbert spaces. The extension to unbounded metric operators and the associated domain-theoretic framework is beyond the scope of the present paper and is left for future investigation.

Our analysis clarifies the precise conditions under which a non-Hermitian system admits a genuine KMS description and identifies the mechanisms responsible for its breakdown, including the loss of positivity, the failure of biorthogonal completeness at exceptional points~\cite{Bergholtz:2019deh}, and the appearance of complex spectra. Taken together, these results establish a unified functional-analytic picture of thermal equilibrium for several important classes of non-Hermitian quantum systems.
The paper is organised as follows. Section~\ref{sec:intr} reviews the standard KMS condition and the non-Hermitian operator framework employed throughout. Sections~\ref{sec:route1} and~\ref{sec:route2} develop the quasi-Hermitian and biorthogonal routes respectively, culminating in Theorems~\ref{thm:KMS} and~\ref{thm:structure};
Section \ref{sec:TT} closes the most structurally significant of these open items by constructing the Tomita–Takesaki modular operator and modular automorphism group for the $\eta$-Gibbs state in finite dimensions — the non-Hermitian analogue of the GNS/modular-theory backbone of the HHW framework.
Section~\ref{sec:route3} treats the open-system setting. Section~\ref{sec:fail} analyses failure modes at exceptional points and for complex spectra. Section~\ref{sec:concl} collects the main results and open problems. Appendices~\ref{app:KMS}--\ref{app:equiv} contain the supporting proofs.

\section{KMS condition and non-Hermitian systems}
\label{sec:intr}

We begin by reviewing the KMS condition for Hermitian quantum systems, 
which provides the algebraic characterisation of thermal equilibrium. 
Then, we introduce the class of non-Hermitian Hamiltonians to be studied throughout this work, 
and explain why their structural properties make a generalisation of the KMS framework both natural and tractable.

\subsection*{The standard KMS condition}

Let $H = H^\dag$ be a self-adjoint Hamiltonian on a Hilbert space $\mathcal{H}$. 
At inverse temperature $\beta > 0$, 
the system is described by the Gibbs state
\begin{equation}
  \rho_\beta
  = \frac{\me^{-\beta H}}{Z},
  \qquad
  Z := \Tr \left(\me^{-\beta H}\right),
\end{equation}
and observables evolve under the Heisenberg dynamics 
\begin{equation}
\label{eq:evol}
  \sigma_t(A) = \me^{\mi H t} A \me^{-\mi H t},
  \qquad t \in \mathbb{R}.
\end{equation}
The thermal expectation value defines a linear functional
\begin{equation}
  \omega(X) = \Tr(\rho_\beta X).
\end{equation}
To formulate thermal equilibrium analytically, 
we introduce an open strip
\begin{equation}
  \mathcal{S}_\beta
  = \left\{z \in \mathbb{C} : 0 < \Im z < \beta\right\}
\end{equation}
and its closure $\overline{\mathcal{S}}_\beta$, 
obtained by including the two real boundary lines $\Im z = 0$ and $\Im z = \beta$.

\begin{definition}[KMS condition]
\label{def:KMS}
A state $\omega$ satisfies the KMS condition at inverse temperature $\beta$ with respect to the dynamics $\{\sigma_t\}$ if, 
for every pair of bounded operators $A$ and $B$, 
the function
\begin{equation}
  F_{AB}(z) = \omega \left(A\,\sigma_z(B)\right)
\end{equation}
satisfy the following three conditions:
\begin{enumerate}
\item $F_{AB}$ is analytic on the open strip $\mathcal{S}_\beta$;
\item $F_{AB}$ extends continuously and boundedly to the closed strip $\overline{\mathcal{S}}_\beta$;
\item The boundary values on the two edges are related by
\begin{equation}
  F_{AB}(t) = F_{BA}(t + \mi\beta),
  \qquad t \in \mathbb{R},
  \label{eq:KMS_Herm}
\end{equation}
where $F_{BA}(z) = \omega\left(\sigma_z(B)\,A\right)$.
\end{enumerate}
\end{definition}
We stress that the analyticity requirement is indispensable. 
The boundary relation Eq.~\eqref{eq:KMS_Herm} alone, 
without the strip analyticity and uniform boundedness of $F_{AB}$, 
is in general insufficient to establish a genuine KMS state.
Condition Eq.~\eqref{eq:KMS_Herm} has a transparent physical meaning: 
the two thermal correlators $\omega(A\,\sigma_t(B))$ and $\omega(\sigma_t(B)\,A)$
arise as boundary values of a single analytic function on
$\mathcal{S}_\beta$, 
with the imaginary shift $t \to t + \mi\beta$ encoding the thermal periodicity. 
In the finite-dimensional setting this relation follows directly from the cyclicity of the trace. 
In the thermodynamic limit, where a normalised Gibbs density matrix need not exist, 
the KMS condition survives as the defining characterisation of thermal equilibrium.
This point of view is advocated in the HHW formulation of algebraic quantum statistical mechanics \cite{Haag:1967sg,Bratteli:1996xq},
and is closely related to the stability theory of equilibrium states developed by Bratteli, Kishimoto and Robinson \cite{Bratteli:1978pe}.

\subsection*{Non-Hermitian systems}

Non-Hermitian Hamiltonians arise naturally in open quantum systems, effective descriptions, and $\mathcal{PT}$-symmetric theories \cite{Bender:2023cem,Bender:2007nj,Bender:2002vv,Mostafazadeh:2001jk,Mostafazadeh:2008pw}. 
Among the various generalisations of Hermiticity,
\emph{pseudo-Hermiticity} and its positive-definite specialisation, \emph{quasi-Hermiticity}, 
provide the structural foundation needed to extend thermal concepts, 
including the KMS condition, beyond the conventional self-adjoint framework.

\begin{definition}[$\eta$-pseudo-Hermitian operator]
\label{def:pseudo}
Let $\eta$ be an invertible Hermitian operator. 
A linear operator $H$ is called \emph{$\eta$-pseudo-Hermitian} if
\begin{equation}
  H^\dag = \eta H \eta^{-1}.
  \label{eq:pseudo}
\end{equation}
Equation~\eqref{eq:pseudo} reduces to ordinary Hermiticity when $\eta = \mathbf{1}$. 
For general $\eta$, it implies that $H$ and $H^\dag$ are related 
by a similarity transformation and hence share the same spectrum up to complex conjugation.
\end{definition}

In infinite-dimensional Hilbert spaces, Eq.~\eqref{eq:pseudo} is meaningful only under suitable domain assumptions. Throughout this work we restrict ourselves to bounded invertible metric operators $\eta\in \mathcal{B}(\mathcal{H})$ with $\eta^{-1}\in \mathcal{B}(\mathcal{H})$, so that all operator products appearing below are well-defined.

\begin{definition}[Quasi-Hermitian operator]
\label{def:quasi}
If the metric operator $\eta$ in Eq.~\eqref{eq:pseudo} is positive-definite, $\eta>0$, 
then $H$ is called \emph{quasi-Hermitian}. 
One may then introduce the $\eta$-weighted inner product
\begin{equation}
  (\psi,\phi)_\eta := \langle\psi|\eta|\phi\rangle,
\end{equation}
with respect to which $H$ is Hermitian. 
A quasi-Hermitian operator therefore furnishes a non-Hermitian representation of an underlying Hermitian theory: 
the non-standard inner product absorbs all deviations from self-adjointness.
\end{definition}

\begin{definition}[Biorthogonal Riesz basis]
\label{def:biortho}
A diagonalisable non-Hermitian operator $H$ is said to possess a
\emph{biorthogonal complete structure} 
if there exist right and left eigenvectors,
\begin{equation}
  H|\psi_n\rangle = E_n|\psi_n\rangle,
  \qquad
  H^\dag|\phi_n\rangle = E_n^*|\phi_n\rangle,
\end{equation}
satisfying the biorthonormality and completeness relations
\begin{equation}
  \langle\phi_m|\psi_n\rangle = \delta_{mn},
  \qquad
  \sum_n |\psi_n\rangle\langle\phi_n| = \mathbf{1}.
  \label{eq:biortho}
\end{equation}
The biorthogonal basis replaces the orthonormal eigenbasis of Hermitian quantum mechanics 
and provides the natural framework for defining expectation values, correlation functions, 
and thermal ensembles in the non-Hermitian setting.
\end{definition}

The three structures introduced above are not independent.
Quasi-Hermiticity implies a particularly clean interplay between them, 
as summarised in the following proposition.

\begin{proposition}[Spectral consequences of quasi-Hermiticity]
\label{prop:chain}
Let $H$ be quasi-Hermitian with metric $\eta > 0$. Then:
\begin{enumerate}
\item $H$ is similar to a Hermitian operator via $\eta$;
\item The spectrum of $H$ is entirely real;
\item $H$ admits a complete biorthogonal eigenbasis
      $\{|\psi_n\rangle, |\phi_n\rangle\}$;
\item The left and right eigenvectors are related by
      $|\phi_n\rangle = \eta\,|\psi_n\rangle$.
\end{enumerate}
In particular, quasi-Hermitian systems admit a well-defined spectral decomposition 
and a consistent probabilistic interpretation 
in the Hilbert space equipped with the $\eta$-inner product.
\end{proposition}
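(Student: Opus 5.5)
The proof rests on the positive square root $\eta^{1/2}$. Since $\eta$ is bounded, boundedly invertible and positive-definite, continuous functional calculus gives a bounded, self-adjoint, boundedly invertible $\eta^{1/2}>0$. Set $h:=\eta^{1/2}H\eta^{-1/2}$. Using $H^\dag=\eta H\eta^{-1}$ we compute
\[
h^\dag=\eta^{-1/2}H^\dag\eta^{1/2}=\eta^{-1/2}\eta H\eta^{-1}\eta^{1/2}=\eta^{1/2}H\eta^{-1/2}=h .
\]
So $H=\eta^{-1/2}h\,\eta^{1/2}$ is similar to the bounded self-adjoint operator $h$, which proves item~1. (Strictly, the similarity is implemented by $\eta^{1/2}$ rather than by $\eta$ itself, and I will phrase it that way.) Item~2 then follows because similarity by a bounded invertible operator preserves the spectrum, and $\sigma(h)\subset\mathbb R$.

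For items~3 and~4 I take the standing setting where $H$ has discrete, diagonalisable spectrum, which includes the finite-dimensional case. Let $\{|e_n\rangle\}$ be an orthonormal eigenbasis of $h$ with $h|e_n\rangle=E_n|e_n\rangle$ and $E_n\in\mathbb R$. Define
\[
|\psi_n\rangle:=\eta^{-1/2}|e_n\rangle,\qquad |\phi_n\rangle:=\eta^{1/2}|e_n\rangle .
\]
Then $H|\psi_n\rangle=\eta^{-1/2}h|e_n\rangle=E_n|\psi_n\rangle$. Likewise $H^\dag=\eta^{1/2}h\,\eta^{-1/2}$, so $H^\dag|\phi_n\rangle=E_n|\phi_n\rangle=E_n^*|\phi_n\rangle$. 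Biorthonormality holds because $\langle\phi_m|\psi_n\rangle=\langle e_m|\eta^{1/2}\eta^{-1/2}|e_n\rangle=\delta_{mn}$. Completeness holds because
\[
\sum_n|\psi_n\rangle\langle\phi_n|=\eta^{-1/2}\Big(\sum_n|e_n\rangle\langle e_n|\Big)\eta^{1/2}=\mathbf 1 .
\]
Finally, $\eta|\psi_n\rangle=\eta^{1/2}|e_n\rangle=|\phi_n\rangle$, which is item~4.

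The main obstacle is item~3 in infinite dimensions. There the bounded self-adjoint $h$ need not have an orthonormal eigenbasis, since it may have continuous spectrum. I would therefore state item~3 under the pure-point (diagonalisable) hypothesis assumed elsewhere in the paper. Under that hypothesis, the sum $\sum_n|\psi_n\rangle\langle\phi_n|$ converges strongly because $\eta^{\pm1/2}$ are bounded, so $\{\psi_n\}$ is the image of an orthonormal basis under a bounded invertible map, i.e.\ a Riesz basis in the sense of Bari's theorem.

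A second subtlety concerns degenerate eigenvalues. The relation $|\phi_n\rangle=\eta|\psi_n\rangle$ holds for the specific basis built above. If an arbitrary biorthogonal basis is given instead, one first chooses within each eigenspace a basis that is orthonormal in the $\eta$-inner product. This is always possible, because $H$ is $\eta$-Hermitian and so its eigenspaces for distinct eigenvalues are $\eta$-orthogonal.
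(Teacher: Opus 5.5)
Your proof is correct, but it is organised differently from the paper's. The paper's proof only cites Propositions~\ref{prop:realspec} and~\ref{prop:biortho-struct}. Reality is checked eigenvector by eigenvector through $E\langle\psi|\eta|\psi\rangle=\langle\psi|H^\dag\eta|\psi\rangle=\bar E\langle\psi|\eta|\psi\rangle$. The biorthogonal system is obtained by taking the $\eta$-orthonormal eigenbasis of Assumption~\ref{asm:A3} as given and setting $|\phi_n\rangle:=\eta|\psi_n\rangle$. Item~1 is not covered by those citations; it appears only as Lemma~\ref{lem:h-sa}, which is exactly your computation with $U=\eta^{1/2}$.

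You instead route everything through $h=\eta^{1/2}H\eta^{-1/2}$, and this buys several things. Item~2 becomes a statement about the full spectrum, $\sigma(H)=\sigma(h)\subset\mathbb R$. The paper's argument controls only eigenvalues, so it yields ``entirely real'' only under the pure-point hypothesis. The biorthogonal eigenbasis is derived from the spectral theorem for $h$ rather than postulated; the dictionary $|e_n\rangle=\eta^{1/2}|\psi_n\rangle$ shows the two constructions produce the same objects. You also correctly note three points. The similarity is implemented by $\eta^{1/2}$ rather than $\eta$, since conjugation by $\eta$ gives $H^\dag$. Item~3 genuinely requires a pure-point hypothesis in infinite dimensions. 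Item~4 is an existence statement about a suitably ($\eta$-orthonormally) chosen basis, not a property of an arbitrary biorthogonal pair.

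The paper's route, for its part, is purely algebraic in $\eta H=H^\dag\eta$ and needs neither the square root nor functional calculus.
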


\begin{proof}
See Propositions~\ref{prop:realspec} and~\ref{prop:biortho-struct}
in App.~\ref{app:pseH}.
\end{proof}

The completeness relation Eq.~\eqref{eq:biortho}, 
guaranteed by Prop.~\ref{prop:chain}, 
is the key ingredient that enables thermal correlation functions 
and KMS-type identities to be formulated for non-Hermitian systems 
in direct analogy with the Hermitian case. 
The remainder of this paper develops this program in detail.

\section{Route I: Spectral KMS theorem for quasi-Hermitian systems}
\label{sec:route1}

In this section, we work within a functional-analytic framework adapted to quasi-Hermitian Hamiltonians and show that the associated $\eta$-Gibbs state admits correlation functions satisfying the three analytic requirements of the KMS condition.

\subsection{Assumptions and the \texorpdfstring{$\eta$}{eta}-Hilbert space}
\label{subsec:assum}

The proofs in this section rest on four conditions, 
collectively labelled (\ref{asm:A1}--\ref{asm:A4}). 
The first two encode the algebraic structure of the non-Hermitian system. 
The third and fourth are analytic regularity conditions required for the KMS analyticity argument.

\begin{assumption}[Bounded Hamiltonian and positive-definite metric]\label{asm:A1}
      $H \in \mathcal{B}(\mathcal{H})$ is a bounded linear operator,
      and there exists a Hermitian operator $\eta = \eta^\dag$
      with $\eta, \eta^{-1} \in \mathcal{B}(\mathcal{H})$ and
      constants $0 < c \leq C < \infty$ such that $cI \leq \eta \leq CI$.
\end{assumption}

The two-sided bound on $\eta$ ensures that the $\eta$-weighted inner product $\langle \cdot | \eta | \cdot \rangle$ is equivalent to the standard inner product on $\mathcal{H}$, 
so that the $\eta$-Hilbert space $\mathcal{H}_\eta$ and $\mathcal{H}$ carry the same topological structure.

Physical Hamiltonians, e.g. harmonic oscillators, Schr\"odinger operators, lattice Laplacians, are typically unbounded self-adjoint operators on $\mathcal{H}$. 
The assumption $H \in \mathcal{B}(\mathcal{H})$ is a \emph{mathematical simplification}: it makes the operator exponential $\me^{\mi H t} = \sum_{n=0}^\infty (\mi t)^n H^n / n!$ norm-convergent and validates the interchange of sums and bounded operators throughout the proofs below.
For physically relevant unbounded $H$, 
the exponential $\me^{\mi Ht}$ is defined via Stone's theorem 
or the spectral functional calculus, 
and the KMS framework can be extended using domain-theoretic methods (cf.~\cite{Bratteli:1996xq}).
Such an extension is deferred to future work.

\begin{assumption}
\label{asm:A2}
\emph{(Pseudo-Hermitian condition.)}
$H^\dag = \eta H \eta^{-1}$.
\end{assumption}

This is the defining relation of pseudo-Hermiticity with metric $\eta$. 
Combined with the positive-definiteness of $\eta$ in Assum.~\ref{asm:A1}, 
it promotes $H$ to a quasi-Hermitian operator (Def.~\ref{def:quasi}) and guarantees, 
in particular, that the spectrum of $H$ is real (Prop.~\ref{prop:chain}).

\begin{assumption}
\label{asm:A3}
\emph{(Diagonalisability, Riesz basis, and spectral bounds.)}
$H$ is diagonalisable on $\mathcal{H}_\eta$: 
there exist $\eta$-orthonormal right eigenstates $\{|\psi_n\rangle\}$,
i.e.\ $\langle\psi_m|\eta|\psi_n\rangle = \delta_{mn}$,
with pure point real spectrum $\{E_n\} \subset \mathbb{R}$,
no Jordan blocks, and spectral decomposition $H = \sum_n E_n |\psi_n\rangle\langle\psi_n|_\eta$.
Moreover, $\{|\psi_n\rangle\}$ forms a \emph{Riesz basis} of $\mathcal{H}$: 
there exist constants $0 < c' \leq C' < \infty$ such that
\[
c'\|f\|^2
\leq
\sum_n |\langle\phi_n | f\rangle|^2
\leq
C'\|f\|^2,
\qquad \forall\, f \in \mathcal{H}.
\]
\end{assumption}

Two additional spectral finiteness conditions are imposed independently, 
as neither implies the other:
\begin{itemize}
\item \emph{(Spectral lower bound.)}
$E_{\min} := \inf_n E_n > -\infty$.
This prevents the operator norm $\|\me^{\mi Hz}\|$ from diverging on the closed strip $\overline{\mathcal{S}}_\beta$:
without it, terms $\me^{-\alpha E_n}$ grow without bound as $E_n \to -\infty$, 
invalidating the uniform estimates in
Prop.~\ref{prop:traceclass}.

\item \emph{(Partition function finiteness.)}
$Z_\eta := \sum_n \me^{-\beta E_n} < \infty$.
Note that $E_{\min} > -\infty$ does not imply $Z_\eta < \infty$: 
for instance, $E_n = \ln n$ gives $Z_\eta = \sum n^{-\beta}$, 
which diverges for $\beta \leq 1$. 
Both conditions must therefore be imposed separately.
\end{itemize}

The Riesz basis condition, 
combined with the real spectrum and the spectral lower bound, 
ensures that the operator exponential
$\me^{\mi Hs} = \sum_j \me^{\mi E_j s} |\psi_j\rangle\langle\phi_j|$
converges in operator norm for $s$ in any bounded subset of $\mathbb{C}$. 
This follows from the \emph{Bari theorem}~\cite{Bari:1951bt}: 
for a Riesz basis $\{|\psi_j\rangle\}$ with biorthogonal dual $\{|\phi_j\rangle\}$, 
the series $\sum_j c_j |\psi_j\rangle\langle\phi_j|$ converges in operator norm whenever $\{c_j\} \in \ell^\infty$. 
Since $E_j \in \mathbb{R}$, the coefficients $\me^{\mi E_j s}$ are bounded for $s$ in any strip of finite width, 
and the conclusion follows.

The Riesz basis property is in fact a consequence of Assum.~\ref{asm:A1}: 
the map $U = \eta^{1/2}$, which is bounded and boundedly invertible by Assum.~\ref{asm:A1}, 
transforms the $\eta$-orthonormal basis $\{|\psi_n\rangle\}$ into the standard orthonormal basis $\{U|\psi_n\rangle\}$.
Hence $|\psi_n\rangle = U^{-1}(U|\psi_n\rangle)$ is the image of a standard ONB under the bounded invertible map $U^{-1}$, 
which is the defining property of a Riesz basis.

\begin{assumption}
 \label{asm:A4}
\emph{(Analytic-elements summability condition.)}
For all observables $A$, $B$ under consideration, 
the half-weighted operators
\[
A \me^{-\beta H/2},\quad
\me^{-\beta H/2} A,\quad
B \me^{-\beta H/2},\quad
\me^{-\beta H/2} B
\]
are \emph{biorthogonally Hilbert--Schmidt} (bi-HS) with
respect to the eigenbasis $\{|\psi_n\rangle, |\phi_n\rangle\}$:
for each such operator $X$,
\[
\sum_{n,m}
|\langle\phi_n | X | \psi_m\rangle|^2 < \infty.
\]
\end{assumption}

This condition guarantees that the double series appearing in the proof of Prop.~\ref{prop:analytic} converge absolutely on $\partial\overline{\mathcal{S}}_\beta$, 
which is the key step in verifying strip analyticity.

The bi-HS condition is tied to the choice of biorthogonal system $\{|\psi_n\rangle, |\phi_n\rangle\}$ and is \emph{not} an intrinsic, basis-independent property of the operator $X$. 
It is used here purely as a \emph{sufficient summability condition} and should not be
interpreted as membership in an operator ideal. 
In finite dimensions, every operator is bi-HS with respect to any basis, 
so the condition is automatically satisfied.

The following example identifies two physically natural classes of observables satisfying Assum.~\ref{asm:A4} in the infinite-dimensional setting, 
confirming that the assumption is not restrictive in practice.

\begin{example}[Observable classes satisfying the bi-HS condition]
  \label{ex:biHS}
  \emph{Class~1: Finite-rank observables.}
  Let $A = \sum_{k=1}^{K} \alpha_k |\xi_k\rangle\langle\zeta_k|$
  be a finite-rank operator. Then
  \[
    \sum_{n,m}
    \left|\langle\phi_n | A \me^{-\beta H/2} | \psi_m\rangle\right|^2
    =
    \sum_{k,\ell} \bar\alpha_k \alpha_\ell
    \left(\sum_n
      \langle\phi_n|\xi_k\rangle
      \overline{\langle\phi_n|\xi_\ell\rangle}
    \right)
    \left(\sum_m
      \me^{-\beta E_m}
      \langle\zeta_k|\psi_m\rangle
      \overline{\langle\zeta_\ell|\psi_m\rangle}
    \right),
  \]
  which is a finite sum of terms bounded via the Riesz-basis
  estimate of Assum.~\ref{asm:A3}. The class includes projection
  operators onto finite-dimensional subspaces, finite-rank
  density matrices, and all localised preparation devices arising
  in quantum optics (e.g.\ coherent states projected onto a
  finite photon-number sector).

  \emph{Class~2: Observables with rapid off-diagonal decay.}
  Suppose $H$ describes a lattice model on $\mathbb{Z}^d$ with
  eigenvectors $\{|\psi_n\rangle\}$ labelled by sites
  $n \in \mathbb{Z}^d$, and suppose the spectrum satisfies
  $E_{\min} > -\infty$ and $E_n \leq C_E(1 + |n|^\alpha)$ for
  some $C_E, \alpha > 0$. If an observable $A$ has biorthogonal
  matrix elements $A_{nm} = \langle\phi_n|A|\psi_m\rangle$
  decaying as $|A_{nm}| \leq C_A \me^{-\mu|n-m|}$ for some fixed
  $\mu > 0$, then
  \begin{align*}
    \sum_{n,m} |A_{nm}|^2 \me^{-\beta E_m}
    &\leq
    C_A^2 \me^{-\beta E_{\min}}
    \sum_{n,m} \me^{-2\mu|n-m|} \me^{-\beta(E_m - E_{\min})}
    \\
    &\leq
    C_A^2 \me^{-\beta E_{\min}}
    \left(\sum_{k \in \mathbb{Z}^d} \me^{-2\mu|k|}\right)
    \left(\sum_m \me^{-\beta(E_m - E_{\min})}\right)
    < \infty,
  \end{align*}
  provided $Z_\eta < \infty$ (Assum.~\ref{asm:A3}).
  The sole constraint on $A$ is the exponential decay rate
  $\mu > 0$. 
  And no bound on $\|H\|$ is required, so the
  condition applies equally to unbounded Hamiltonians.
  Local observables in condensed-matter models — polynomials in
  field operators supported on finite lattice regions — routinely
  belong to this class, confirming that Assum.~\ref{asm:A4} holds for
  the physically relevant algebra of quasi-local observables.
\end{example}

We now construct the Hilbert space and operator framework within
which the KMS proof will be carried out. The key idea is to
intertwine $H$ with a genuine self-adjoint operator $h$ via the
bounded invertible map $U := \eta^{1/2}$, and to transfer the
entire thermal analysis to $h$.

Define
\begin{equation}
  h := U H U^{-1} = \eta^{1/2} H \eta^{-1/2}.
  \label{eq:h-def}
\end{equation}

\begin{lemma}[$h$ is self-adjoint]
  \label{lem:h-sa}
  Under Assums.\ref{asm:A1}--\ref{asm:A4}, the operator $h$ defined
  in Eq.\eqref{eq:h-def} satisfies $h = h^\dag$.
\end{lemma}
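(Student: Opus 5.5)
The plan is a direct algebraic computation of $h^\dag$, using only Assums.~\ref{asm:A1} and~\ref{asm:A2}. Assumptions~\ref{asm:A3}--\ref{asm:A4} will not be needed.

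The first step is to make sure all operators involved are well-defined and bounded. By Assum.~\ref{asm:A1}, $\eta$ is a bounded self-adjoint operator with $cI\le\eta\le CI$ and $c>0$. The continuous functional calculus, applied on $\sigma(\eta)\subset[c,C]$, then gives a positive square root $U=\eta^{1/2}$. This $U$ is bounded and self-adjoint, and its inverse $U^{-1}=\eta^{-1/2}$ is also bounded and self-adjoint, with $c^{1/2}I\le U\le C^{1/2}I$. Moreover $U^2=\eta$ and $U^{-2}=\eta^{-1}$. Since $H$ is bounded as well, $h=UHU^{-1}\in\mathcal{B}(\mathcal{H})$. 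No domain questions arise, so self-adjointness reduces to the identity $h^\dag=h$ in $\mathcal{B}(\mathcal{H})$.

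The second step is the computation itself. Using $(XYZ)^\dag=Z^\dag Y^\dag X^\dag$ for bounded operators and $U^\dag=U$, $(U^{-1})^\dag=U^{-1}$, we get
\[
h^\dag=(U^{-1})^\dag H^\dag U^\dag=\eta^{-1/2}H^\dag\eta^{1/2}.
\]
Substituting Assum.~\ref{asm:A2}, $H^\dag=\eta H\eta^{-1}$, gives
\[
h^\dag=\eta^{-1/2}\,\eta H\eta^{-1}\,\eta^{1/2}.
\]
Finally, $\eta^{-1/2}\eta=\eta^{1/2}$ and $\eta^{-1}\eta^{1/2}=\eta^{-1/2}$, because all powers of $\eta$ are functions of the same operator and therefore commute. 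Hence $h^\dag=\eta^{1/2}H\eta^{-1/2}=h$.

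The only point that needs care is the justification of these square-root manipulations: the existence, positivity and bounded invertibility of $\eta^{1/2}$, and the identities $\eta^{-1/2}\eta=\eta^{1/2}$ and $\eta^{-1}\eta^{1/2}=\eta^{-1/2}$. I would cite the functional calculus for bounded self-adjoint operators. The map $f\mapsto f(\eta)$ is a $*$-homomorphism from the continuous functions on $[c,C]$ into $\mathcal{B}(\mathcal{H})$. The functions $f(\lambda)=\lambda^{\pm1/2}$ and $f(\lambda)=\lambda^{\pm1}$ are continuous and real-valued on $[c,C]$ because $c>0$, and they multiply as real powers. All the identities above follow immediately from this.
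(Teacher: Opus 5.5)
Your proposal is correct and follows the paper's own proof: both compute $h^\dag=(U^{-1})^\dag H^\dag U^\dag$, substitute Assum.~\ref{asm:A2}, and use $\eta=U^2$ to collapse the powers of $\eta$ back to $UHU^{-1}$. Your functional-calculus justification of $\eta^{\pm1/2}$, and your observation that only Assums.~\ref{asm:A1}--\ref{asm:A2} are needed, make explicit what the paper leaves implicit.
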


\begin{proof}
  Using the pseudo-Hermitian condition $H^\dag = \eta H \eta^{-1}$
  and the self-adjointness $U^\dag = U = \eta^{1/2}$,
  \[
    h^\dag
    =
    (UHU^{-1})^\dag
    =
    (U^{-1})^\dag H^\dag U^\dag
    =
    U^{-1}(\eta H \eta^{-1})U
    =
    U^{-1} U^2 H U^{-2} U
    =
    UHU^{-1}
    =
    h.
    \qedhere
  \]
\end{proof}

Lemma~\ref{lem:h-sa} confirms that all results from
Hermitian spectral theory and operator semigroups apply to $h$,
and can be transferred back to $H$ via the bounded intertwining
map $U$.

We equip $\mathcal{H}$ with the \emph{$\eta$-inner product}
\begin{equation}
  \langle\varphi, \psi\rangle_\eta
  :=
  \langle\varphi | \eta | \psi\rangle,
\end{equation}
and write $\mathcal{H}_\eta$ for the resulting Hilbert space.
The \emph{$\eta$-adjoint} of an operator $A$ is defined by
\begin{equation}
  A^{\dag_\eta}
  :=
  \eta^{-1} A^\dag \eta,
  \label{eq:eta-adj}
\end{equation}
and a direct computation using Assum.~\ref{asm:A2} gives
\[
  H^{\dag_\eta}
  =
  \eta^{-1} H^\dag \eta
  =
  \eta^{-1}(\eta H \eta^{-1})\eta
  =
  H,
\]
confirming that $H$ is self-adjoint in $\mathcal{H}_\eta$, as
expected from the quasi-Hermitian structure.

Two further identities 
follow from $\eta = U^2$ and the intertwining relation $\me^{-\beta H} = U^{-1} \me^{-\beta h} U$:
\begin{subequations}
\begin{equation}
\eta\, \me^{-\beta H}
=
U\, \me^{-\beta h}\, U,
\label{eq:key-id}
\end{equation}
\begin{equation}
Z_\eta:=
\Tr[\eta\, \me^{-\beta H}]
=
\Tr[\me^{-\beta h}\,\eta]
=
\sum_n \me^{-\beta E_n},
\label{eq:partition}
\end{equation}
\end{subequations}
where the second equality in Eq.~\eqref{eq:partition} uses the
cyclicity of the trace and $\eta = U^2$, and the last equality
uses the diagonalisability Assum.~\ref{asm:A3}. By
Assum.~\ref{asm:A3}, $Z_\eta < \infty$, so the partition function is well defined.

With these preparations, the \emph{$\eta$-Gibbs state} is
\begin{equation}
\rho_\beta^{(\eta)}:=
\frac{\me^{-\beta H}}{Z_\eta},
\qquad
\omega_\eta(A):=
\Tr_\eta \left[\rho_\beta^{(\eta)}\, A\right]
=
\frac{\Tr[\eta\, \me^{-\beta H} A]}{Z_\eta}.
\label{eq:eta-gibbs}
\end{equation}
This is the non-Hermitian analogue of the Gibbs state: the factor
$\eta$ in the numerator weights the trace by the metric, ensuring
that $\omega_\eta$ is a positive functional on $\mathcal{H}_\eta$.

\subsection{Time evolution and \texorpdfstring{$\eta$}{eta}-Gibbs state}
\label{subsec:autoiso}

Before establishing the KMS condition, we verify that the
Heisenberg time evolution $\sigma_t$ is compatible with the
$\eta$-adjoint structure. This compatibility 
(the $*$-automorphism property ) 
is a prerequisite for the thermal
functional $\omega_\eta$ to define a genuine state in the
$\eta$-algebraic sense. Without it, the notion of
``$\eta$-positive'' and the KMS boundary condition would not
be internally consistent.

\begin{theorem}[\texorpdfstring{$*$}{*}-automorphism with respect to 
                the $\eta$-structure]
  \label{thm:autoiso}
  Under Assum.~\ref{asm:A1}, the Heisenberg time evolution
  \[
    \sigma_t(A) := \me^{\mi Ht} A \me^{-\mi Ht}
  \]
  intertwines the $\eta$-adjoint: for every
  $A \in \mathcal{B}(\mathcal{H})$ and every $t \in \mathbb{R}$,
  \begin{equation}
    \sigma_t \left(A^{\dag_\eta}\right)
    =
    \left(\sigma_t(A)\right)^{\dag_\eta}.
    \label{eq:star-auto}
  \end{equation}
  In other words, $\sigma_t$ is a $*$-automorphism of
  $\mathcal{B}(\mathcal{H})$ with respect to the $\eta$-adjoint
  ${}^{\dag_\eta}$.
\end{theorem}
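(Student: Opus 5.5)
The plan is a direct computation. Besides Assum.~\ref{asm:A1}, it uses the pseudo-Hermitian relation of Assum.~\ref{asm:A2}, $H^\dag=\eta H\eta^{-1}$. That relation is what lets the metric pass through the propagator, so I will state explicitly that it is being used.

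\emph{Step~1 (intertwining of exponentials).} Since $H\in\mathcal B(\mathcal H)$, the series $\me^{\mi Hs}=\sum_n (\mi s)^nH^n/n!$ converges in operator norm for every $s\in\mathbb C$. Iterating Assum.~\ref{asm:A2} gives $(H^\dag)^n=\eta H^n\eta^{-1}$ for all $n\ge 0$. Because multiplication by the bounded operators $\eta$ and $\eta^{-1}$ is norm-continuous, summing term by term yields
\[
  \me^{\mi H^\dag s}=\eta\,\me^{\mi Hs}\,\eta^{-1},\qquad s\in\mathbb C .
\]
For real $t$, taking adjoints of the norm-convergent series gives $(\me^{\mi Ht})^\dag=\me^{-\mi H^\dag t}$, and hence
\[
  (\me^{\pm\mi Ht})^\dag=\eta\,\me^{\mp\mi Ht}\,\eta^{-1}.
\]

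\emph{Step~2 (the identity \eqref{eq:star-auto}).} Start from the definition \eqref{eq:eta-adj} of the $\eta$-adjoint and use Step~1:
\[
  \bigl(\sigma_t(A)\bigr)^{\dag_\eta}
  =\eta^{-1}(\me^{-\mi Ht})^\dag A^\dag(\me^{\mi Ht})^\dag\eta
  =\eta^{-1}\bigl(\eta\me^{\mi Ht}\eta^{-1}\bigr)A^\dag\bigl(\eta\me^{-\mi Ht}\eta^{-1}\bigr)\eta .
\]
The outer factors cancel, leaving $\me^{\mi Ht}(\eta^{-1}A^\dag\eta)\me^{-\mi Ht}=\sigma_t(A^{\dag_\eta})$.

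\emph{Step~3 (automorphism property).} Linearity of $\sigma_t$ is clear. Multiplicativity follows from $\me^{-\mi Ht}\me^{\mi Ht}=\mathbf 1$, which gives $\sigma_t(AB)=\sigma_t(A)\sigma_t(B)$. The map is bijective with inverse $\sigma_{-t}$, by the group law $\sigma_t\sigma_s=\sigma_{t+s}$. Together with Step~2, this makes $\sigma_t$ a $*$-automorphism for the involution ${}^{\dag_\eta}$. For completeness I would also check that ${}^{\dag_\eta}$ is a genuine involution: $(A^{\dag_\eta})^{\dag_\eta}=\eta^{-1}(\eta^{-1}A^\dag\eta)^\dag\eta=A$, using $\eta^\dag=\eta$.

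The only step needing any care is Step~1, namely moving $\eta$ through the exponential and taking adjoints of the series. Boundedness of $H$, $\eta$ and $\eta^{-1}$ makes this routine via norm convergence. The rest is algebra.
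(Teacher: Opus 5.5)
Your proposal is correct and follows essentially the same route as the paper. Both arguments iterate the pseudo-Hermitian relation of Assum.~\ref{asm:A2} to get $(H^\dag)^n=\eta H^n\eta^{-1}$, sum the norm-convergent exponential series to obtain $\me^{\pm\mi H^\dag t}=\eta\,\me^{\pm\mi Ht}\eta^{-1}$, and then cancel the outer $\eta^{\pm1}$ factors in $(\sigma_t(A))^{\dag_\eta}$. You are right that Assum.~\ref{asm:A2} is needed and not just Assum.~\ref{asm:A1}; the paper's proof uses it too. Your extra checks that $\sigma_t$ is multiplicative and bijective, and that ${}^{\dag_\eta}$ is an involution, only make explicit what the paper leaves implicit.
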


\begin{proof}
The proof proceeds in two steps: we first establish a conjugation
identity for $\me^{\mi H^\dag t}$ using only the pseudo-Hermitian
condition Assum.~\ref{asm:A2}, and then use it to compute both sides
of Eq.~\eqref{eq:star-auto}.

\noindent\textbf{Step 1: The exponential conjugation identity.}

We claim that
\begin{equation}
  \me^{\mi H^\dag t} = \eta\, \me^{\mi Ht}\eta^{-1},
  \qquad
  \me^{-\mi H^\dag t} = \eta\, \me^{-\mi Ht}\eta^{-1},
  \qquad t \in \mathbb{R}.
  \label{eq:exp-conj}
\end{equation}
The argument is purely algebraic and does not invoke the
eigenstate completeness of Assum.~\ref{asm:A3}.
From Assum.~\ref{asm:A2}, $H^\dag = \eta H \eta^{-1}$.
An induction argument shows that
$(H^\dag)^n = \eta H^n \eta^{-1}$ for all $n \geq 0$:
the base case $n = 0$ is immediate, and the inductive step gives
\[
  (H^\dag)^{n+1}
  = H^\dag \cdot (H^\dag)^n
  = (\eta H \eta^{-1})(\eta H^n \eta^{-1})
  = \eta H^{n+1} \eta^{-1}.
\]
Since $\eta, \eta^{-1} \in \mathcal{B}(\mathcal{H})$ by
Assum.~\ref{asm:A1}, and bounded operators pass through
operator-norm-convergent series, we obtain
\begin{align*}
  \me^{\mi H^\dag t}
  &= \sum_{n=0}^{\infty} \frac{(\mi t)^n}{n!} (H^\dag)^n
   = \sum_{n=0}^{\infty} \frac{(\mi t)^n}{n!} \eta H^n \eta^{-1}
   = \eta  \left(\sum_{n=0}^{\infty} \frac{(\mi t)^n}{n!} H^n\right) \eta^{-1}
   = \eta \me^{\mi Ht} \eta^{-1},
\end{align*}
which is the first identity in Eq.~\eqref{eq:exp-conj}. Replacing
$t$ by $-t$ yields the second.

\noindent\textbf{Step 2: Verification of Eq.~\eqref{eq:star-auto}.}

We compute the right-hand side directly using the definition
$A^{\dag_\eta} = \eta^{-1} A^\dag \eta$ and the identities
Eq.~\eqref{eq:exp-conj}:
\begin{align*}
  \left(\sigma_t(A)\right)^{\dag_\eta}
  &= \eta^{-1} \left(\me^{\mi H t} A \me^{-\mi H t}\right)^\dag \eta \\
  &= \eta^{-1} \me^{\mi H^\dag t} A^\dag \me^{-\mi H^\dag t} \eta \\
  &\overset{\eqref{eq:exp-conj}}{=}
    \eta^{-1}
    \left(\eta\, \me^{\mi Ht} \eta^{-1}\right)
    A^\dag
    \left(\eta\, \me^{-\mi H t} \eta^{-1}\right)
    \eta \\
  &= \me^{\mi Ht} \left(\eta^{-1} A^\dag \eta\right) \me^{-\mi H t}
   = \sigma_t \left(A^{\dag_\eta}\right). \qedhere
\end{align*}
\end{proof}

Having established that $\sigma_t$ respects the $\eta$-adjoint,
we now show that the $\eta$-Gibbs state $\omega_\eta$ defined in Eq.~\eqref{eq:eta-gibbs} is positive and faithful with respect to the same structure. 
Positivity is indispensable for a physical interpretation: 
it guarantees that $\omega_\eta(A^{\dag_\eta} A)$ is a non-negative real number for every observable $A$, 
playing the role that positivity of the standard inner product plays in ordinary quantum mechanics. 
Faithfulness — strict positivity for $A \neq 0$ — is equally important, and
it ensures that no non-zero observable is invisible to the state, 
which is required for the KMS modular theory to be non-degenerate.

\begin{theorem}[Positivity and faithfulness of $\omega_\eta$]
  \label{thm:positivity}
  Under Assum.~\ref{asm:A1}, the $\eta$-Gibbs state
  $\omega_\eta$ satisfies
  \[
    \omega_\eta \left(A^{\dag_\eta} A\right) \geq 0,
    \qquad \forall\, A \in \mathcal{B}(\mathcal{H}),
  \]
  with equality if and only if $A = 0$.
\end{theorem}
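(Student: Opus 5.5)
The plan is to reduce $\omega_\eta(A^{\dag_\eta}A)$ to a trace of the form $\Tr[X^\dag X]$ with $X$ built from $A$ and the Hermitian partner $h$, using the identity Eq.~\eqref{eq:key-id}. First, expand the $\eta$-adjoint:
\[
Z_\eta\,\omega_\eta(A^{\dag_\eta}A)=\Tr\!\left[\eta\,\me^{-\beta H}\eta^{-1}A^\dag\eta A\right].
\]
The conjugation identity of Step~1 in the proof of Thm.~\ref{thm:autoiso}, continued to imaginary time (purely algebraic, since the power series converges for bounded $H$), gives $\eta\,\me^{-\beta H}\eta^{-1}=\me^{-\beta H^\dag}$. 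Hence the expression equals $\Tr[\me^{-\beta H^\dag}A^\dag\eta A]$.

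Next, write $\eta=U^2$ and $\me^{-\beta H^\dag}=U\me^{-\beta h}U^{-1}$, the adjoint of $\me^{-\beta H}=U^{-1}\me^{-\beta h}U$ with $h=h^\dag$ by Lemma~\ref{lem:h-sa}. By cyclicity,
\[
\Tr\!\left[U\me^{-\beta h}U^{-1}A^\dag U^2 A\right]=\Tr\!\left[\me^{-\beta h}\,(UAU^{-1})^\dag(UAU^{-1})\right].
\]
Set $\tilde A:=UAU^{-1}$ and $X:=\tilde A\,\me^{-\beta h/2}$. Since $h$ is self-adjoint, $\me^{-\beta h/2}$ is a positive self-adjoint operator, so the expression is $\Tr[X^\dag X]=\|X\|_{\rm HS}^2\ge0$. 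Dividing by $Z_\eta>0$ (a sum of positive terms, Eq.~\eqref{eq:partition}) gives positivity.

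For faithfulness, suppose $\|X\|_{\rm HS}=0$. Then $X=0$, so $\tilde A\,\me^{-\beta h/2}=0$. But $\me^{-\beta h/2}$ is invertible, since $h$ is bounded and its inverse is $\me^{\beta h/2}$. Hence $\tilde A=0$, and therefore $A=U^{-1}\tilde AU=0$. The converse direction is trivial.

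The main obstacle is justifying the trace manipulations in infinite dimensions. Cyclicity requires a trace-class factor, and $\me^{-\beta h}$ is trace class precisely because $\Tr\me^{-\beta h}=\sum_n\me^{-\beta E_n}=Z_\eta<\infty$ (Assum.~\ref{asm:A3}; the spectrum of $h$ equals that of $H$). Then $\me^{-\beta h/2}$ is Hilbert--Schmidt, and $X$ is Hilbert--Schmidt for every bounded $A$. I will therefore invoke $Z_\eta<\infty$ from Assum.~\ref{asm:A3}, even though the statement cites only Assum.~\ref{asm:A1}. Each cyclic move is of the form $\Tr[TB]=\Tr[BT]$ with $T$ trace class and $B$ bounded, and the identity $\Tr[\me^{-\beta h}Y^\dag Y]=\Tr[\me^{-\beta h/2}Y^\dag Y\me^{-\beta h/2}]$ follows in the same way. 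In finite dimensions all of this is automatic.
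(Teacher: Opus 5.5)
The step that fails is the cyclic rearrangement $\Tr[U\me^{-\beta h}U^{-1}A^\dag U^2A]=\Tr[\me^{-\beta h}\tilde A^\dag\tilde A]$ with $\tilde A=UAU^{-1}$.

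Moving the leading $U$ to the back actually gives $\Tr[\me^{-\beta h}U^{-1}A^\dag U^2AU]=\Tr[\me^{-\beta h}\tilde A^\dag\tilde A\,\eta]$. Your right-hand side instead equals $\Tr[U^{-1}\me^{-\beta h}U^{-1}A^\dag U^2A]=\Tr[\me^{-\beta H}A^{\dag_\eta}A]$. That is the functional \emph{without} the prefactor $\eta$ you started from.

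The leftover $\eta$ cannot be absorbed. For $\tilde A=|y\rangle\langle x|$ the correct expression is $\|y\|^2\langle x|\eta\,\me^{-\beta h}|x\rangle$. This is real for all $x$ only if $\eta\,\me^{-\beta h}$ is Hermitian, i.e.\ $[\eta,h]=0$. But that forces $[U,h]=0$ and hence $H=h$. So with $\omega_\eta(A)=\Tr[\eta\me^{-\beta H}A]/Z_\eta$ taken literally, the inequality is false for every non-Hermitian $H$, and no repair of your chain can work.

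Concretely, take $\eta=\diag(1,4)$, $U=\diag(1,2)$ and $H=\begin{pmatrix}0&2\\ 1/2&0\end{pmatrix}$, so that $h=UHU^{-1}$ is the Pauli $x$-matrix. Take $A=|e_1\rangle\langle u|$ with $u=(1,2\mi)^{\rm T}$. Then $Z_\eta=5\cosh\beta$ and $\omega_\eta(A^{\dag_\eta}A)=1+\tfrac{3\mi}{5}\tanh\beta\notin\mathbb R$. For this definition, what does hold is ordinary positivity and faithfulness, $\omega_\eta(A^\dag A)\ge0$, because the density $\eta\me^{-\beta H}/Z_\eta$ is positive definite (cf.\ Lem.~\ref{lem:rhohat}).

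Your endpoint $\Tr[\me^{-\beta h}\tilde A^\dag\tilde A]$ is the correct value for a different functional: the ordinary-trace functional $\Tr[\me^{-\beta H}A]/\sum_n\me^{-\beta E_n}$. This is the $\Tr_\eta[\rho^{(\eta)}_\beta A]$ form in Eq.~\eqref{eq:eta-gibbs}, since the trace on $\mathcal H_\eta$ coincides with the ordinary trace, and it is also $\omega_{\rm bi}$.

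That ordinary-trace functional is in fact what the paper's proof handles. The second equality of Eq.~\eqref{eq:pos-expand} drops the same $\eta$, because $\sum_n\me^{-\beta E_n}\langle\psi_n|\eta X|\psi_n\rangle=\sum_n\langle\phi_n|\me^{-\beta H}X|\psi_n\rangle=\Tr[\me^{-\beta H}X]$. Likewise, the identity $\Tr[\eta\me^{-\beta H}]=\sum_n\me^{-\beta E_n}$ from Eq.~\eqref{eq:partition}, which you use for $Z_\eta$, fails literally: it gives $5\cosh\beta$ versus $2\cosh\beta$ in the example.

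For that functional your argument is sound once you start from $\Tr[\me^{-\beta H}\eta^{-1}A^\dag\eta A]=\Tr[U^{-1}\me^{-\beta h}U^{-1}A^\dag U^2A]=\Tr[\me^{-\beta h}\tilde A^\dag\tilde A]$. The Hilbert--Schmidt norm of $\tilde A\me^{-\beta h/2}$ gives positivity, invertibility of $\me^{-\beta h/2}$ gives faithfulness, and $\sum_n\me^{-\beta E_n}<\infty$ justifies the cyclic moves. Compared with the paper's eigenbasis sum $\sum_n\me^{-\beta E_n}\|A|\psi_n\rangle\|_\eta^2$ and its density-of-span faithfulness argument, this route is basis-free and makes the trace-class bookkeeping explicit. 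But it must begin from the functional without the $\eta$ prefactor.
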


\begin{proof}
We expand $\omega_\eta(A^{\dag_\eta} A)$ in the
$\eta$-orthonormal eigenbasis $\{|\psi_n\rangle\}$ provided by
Assum.~\ref{asm:A3}. Using the definition of $\omega_\eta$
from Eq.~\eqref{eq:eta-gibbs} and the biorthogonal trace formula,
\begin{equation}
  \omega_\eta \left(A^{\dag_\eta} A\right)
  =
  \frac{\Tr\left[\eta\, \me^{-\beta H} A^{\dag_\eta} A\right]}{Z_\eta}
  =
  \frac{1}{Z_\eta}
  \sum_n \me^{-\beta E_n}
  \langle\psi_n|\,\eta\, A^{\dag_\eta} A\,|\psi_n\rangle.
  \label{eq:pos-expand}
\end{equation}
Substituting the definition $A^{\dag_\eta} = \eta^{-1} A^\dag \eta$
and using $\eta \cdot \eta^{-1} = \mathbf{1}$,
\begin{align}
  \langle\psi_n|\,\eta\, A^{\dag_\eta} A\,|\psi_n\rangle
  &= \langle\psi_n|\,\eta\,\eta^{-1} A^\dag \eta A\,|\psi_n\rangle
   = \langle\psi_n| A^\dag \eta A\,|\psi_n\rangle
   = \langle A\psi_n|\,\eta\,|A\psi_n\rangle
   = \|A|\psi_n\rangle\|_\eta^2.
  \notag
\end{align}
Substituting back into Eq.~\eqref{eq:pos-expand},
\begin{equation}
  \omega_\eta \left(A^{\dag_\eta} A\right)
  =
  \frac{1}{Z_\eta}
  \sum_n \me^{-\beta E_n}
  \|A|\psi_n\rangle\|_\eta^2.
  \label{eq:positivity-key}
\end{equation}
Every term in this sum is non-negative: $\me^{-\beta E_n} > 0$
because $E_n \in \mathbb{R}$ and $\beta > 0$, 
and $\|A|\psi_n\rangle\|_\eta^2 \geq 0$ because $\eta > 0$ by Assum.~\ref{asm:A1}. 
Moreover, $Z_\eta > 0$ by Rem.~\ref{rem:Zeta-pos}. 
This establishes non-negativity.

For faithfulness, suppose $\omega_\eta(A^{\dag_\eta} A) = 0$.
Since all terms in Eq.~\eqref{eq:positivity-key} are non-negative and $\me^{-\beta E_n} > 0$, 
we must have $\|A|\psi_n\rangle\|_\eta^2 = 0$, 
hence $A|\psi_n\rangle = 0$, for every $n$. 
Since $\{|\psi_n\rangle\}$ is a Riesz basis of $\mathcal{H}$ by Assum.~\ref{asm:A3}, 
its linear span is dense in $\mathcal{H}$, 
and the continuity of $A$ then forces $A = 0$.
\end{proof}

\begin{remark}[Trace-class property of $\me^{-\beta H}$ and
               positivity of $Z_\eta$]
  \label{rem:Zeta-pos}
From Eq.~\eqref{eq:partition}, 
$Z_\eta = \Tr[\eta\, \me^{-\beta H}]= \sum_n \me^{-\beta E_n}$, 
with each summand strictly positive and the sum finite by Assum.~\ref{asm:A3}. 
Hence $Z_\eta \in(0,\infty)$, so the $\eta$-Gibbs state Eq.~\eqref{eq:eta-gibbs} is well defined.

It is important to note that $Z_\eta$ is the \emph{$\eta$-weighted} trace $\Tr[\eta\, \me^{-\beta H}]$, 
not the operator trace norm $\|\me^{-\beta H}\|_1$. 
For a non-normal operator, the singular values of $\me^{-\beta H}$ differ in general from the moduli of
its eigenvalues, 
so one cannot identify $\|\me^{-\beta H}\|_1$ with $\sum_n \me^{-\beta E_n}$.

Nevertheless, $\me^{-\beta H}$ is trace-class, as we now show. 
By the intertwining relation $\me^{-\beta H} = U^{-1}\me^{-\beta h} U$ 
with $U = \eta^{1/2} \in \mathcal{B}(\mathcal{H})$ bounded and invertible, 
and the ideal property of $\mathcal{I}_1(\mathcal{H})$
(if $T \in \mathcal{I}_1(\mathcal{H})$ and
$S, R \in \mathcal{B}(\mathcal{H})$, then
$STR \in \mathcal{I}_1(\mathcal{H})$ with
$\|STR\|_1 \leq \|S\|\,\|T\|_1\,\|R\|$),
it suffices to show $\me^{-\beta h} \in \mathcal{I}_1(\mathcal{H})$.
Since $h = h^\dag$ (Lem.~\ref{lem:h-sa}) and the spectrum
$\{E_n\}$ satisfies $E_{\min} > -\infty$ and
$Z_\eta = \sum_n \me^{-\beta E_n} < \infty$ by Assum.~\ref{asm:A3},
the self-adjoint operator $\me^{-\beta h}$ has trace norm
$\|\me^{-\beta h}\|_1 = \Tr[\me^{-\beta h}] = Z_\eta < \infty$.
Therefore,
\[
\|\me^{-\beta H}\|_1
= \|U^{-1} \me^{-\beta h} U\|_1
\leq \|U^{-1}\|\,\|\me^{-\beta h}\|_1\,\|U\|
= \|U^{-1}\|\, Z_\eta\,\|U\|
< \infty,
\]
confirming $\me^{-\beta H} \in \mathcal{I}_1(\mathcal{H})$.
This trace-class property is used in
Prop.~\ref{prop:traceclass} to justify the absolute
convergence of the spectral series defining the thermal
correlation functions.
\end{remark}

\subsection{Main theorem: The spectral KMS condition}
\label{subsec:main-thm}

We now assemble the preceding results into a proof of the main theorem. 
The argument proceeds in three stages. 
First, we derive a spectral matrix-element identity (Lem.~\ref{lem:spectral-id})
that encodes the action of the complex-time evolution on biorthogonal matrix elements. 
Second, we use this identity together with the bi-HS condition Assum.~\ref{asm:A4} 
to establish analyticity and boundedness of the thermal correlation function 
on the closed strip $\overline{\mathcal{S}}_\beta$ (Prop.~\ref{prop:analytic}). 
Third, we verify that the correlation function coincides with the $\eta$-state functional $\omega_\eta(A\,\sigma_z(B))$ everywhere on $\overline{\mathcal{S}}_\beta$ (Prop.~\ref{prop:traceclass}), 
completing the verification of all three conditions of Def.~\ref{def:KMS}.

The spectral approach adopted below is inspired by the general philosophy that equilibrium states admit spectral characterisations, as first systematically investigated in the operator-algebraic setting by De Canniere \cite{DeCanniere:1982fvy}. Our construction differs in that it is formulated for quasi-Hermitian Hamiltonians with biorthogonal eigenbases.

\subsubsection*{Stage 1: Spectral matrix-element identity}

\begin{lemma}[Spectral matrix-element identity]
\label{lem:spectral-id}
Under Assum.~\ref{asm:A1}, for any $s \in \mathbb{C}$ and
any bounded operator $A$,
\begin{equation}
\langle\phi_m|\, \me^{\mi Hs} A \me^{-\mi Hs}\,|\psi_n\rangle
= \me^{\mi(E_m - E_n)s}\, A_{mn}.
\label{eq:L}
\end{equation}
\end{lemma}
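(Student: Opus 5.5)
The plan is to show that $|\psi_n\rangle$ is a right eigenvector of $\me^{-\mi Hs}$ and $\langle\phi_m|$ a left eigenvector of $\me^{\mi Hs}$, with the expected exponential eigenvalues. The identity Eq.~\eqref{eq:L} then drops out by sandwiching $A$, where $A_{mn}:=\langle\phi_m|A|\psi_n\rangle$ is the biorthogonal matrix element. Although the statement cites only Assum.~\ref{asm:A1}, the eigenvectors $|\psi_n\rangle$, $|\phi_n\rangle$ and the real eigenvalues $E_n$ come from the biorthogonal structure of Prop.~\ref{prop:chain} (equivalently Assum.~\ref{asm:A3}). I will use only the eigen-relations $H|\psi_n\rangle=E_n|\psi_n\rangle$ and $H^\dag|\phi_m\rangle=E_m^*|\phi_m\rangle=E_m|\phi_m\rangle$, the last equality because the spectrum is real.

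\textbf{Step 1.} Since $H\in\mathcal{B}(\mathcal{H})$, the exponential $\me^{-\mi Hs}=\sum_k(-\mi s)^kH^k/k!$ converges in operator norm for every $s\in\mathbb{C}$. By induction, $H^k|\psi_n\rangle=E_n^k|\psi_n\rangle$. Applying the norm-convergent series to a fixed vector and summing term by term gives
\[
\me^{-\mi Hs}|\psi_n\rangle=\me^{-\mi E_ns}|\psi_n\rangle .
\]

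\textbf{Step 2.} For the bra, take the adjoint of the eigen-relation for $\phi_m$. This gives $\langle\phi_m|H=(H^\dag|\phi_m\rangle)^\dag=E_m\langle\phi_m|$, using $E_m\in\mathbb{R}$. The same series argument, now applied on the left (the functional $\langle\phi_m|\cdot$ is continuous, so it passes through the norm-convergent sum), yields
\[
\langle\phi_m|\me^{\mi Hs}=\me^{\mi E_ms}\langle\phi_m| .
\]
This holds for complex $s$ as well, since only the power series in $H$ is involved and $s$ enters as a scalar coefficient. No conjugation of $s$ occurs, so I only need to check that I never take an adjoint of $\me^{\mi Hs}$ itself.

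\textbf{Step 3.} Combining Steps 1 and 2,
\[
\langle\phi_m|\me^{\mi Hs}A\me^{-\mi Hs}|\psi_n\rangle=\me^{\mi E_ms}\,\me^{-\mi E_ns}\langle\phi_m|A|\psi_n\rangle=\me^{\mi(E_m-E_n)s}A_{mn}.
\]

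\textbf{Main obstacle.} There is no real analytical obstacle, because boundedness makes all series norm-convergent. The only point needing care is the left action in Step 2 with complex $s$. Writing the left action as the adjoint of $\me^{-\mi H^\dag\bar s}$ acting on $|\phi_m\rangle$ would produce $\bar s$ and invite a sign error. Working directly with the bra-linear identity $\langle\phi_m|H^k=E_m^k\langle\phi_m|$ avoids this, and it is here that reality of $E_m$ is genuinely used.
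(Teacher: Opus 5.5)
Your proof is correct, but it takes a different route from the paper's.

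\textbf{The paper's route.} It inserts the biorthogonal resolution $\sum_k|\psi_k\rangle\langle\phi_k|=\mathbf{1}$ between $A$ and $\me^{-\mi Hs}$. It then expands both exponentials as $\me^{\pm\mi Hs}=\sum_j\me^{\pm\mi E_js}|\psi_j\rangle\langle\phi_j|$, which it justifies through the Riesz-basis property and Bari's theorem. Finally it collapses the double sum using biorthonormality.

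\textbf{Your route.} You let $\me^{-\mi Hs}$ act directly on $|\psi_n\rangle$ and $\me^{\mi Hs}$ act directly on $\langle\phi_m|$ through the norm-convergent power series. This is the same argument the paper itself invokes later for Eq.~\eqref{eq:eta-left}.

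\textbf{What your route buys.}
\begin{itemize}
\item It needs neither completeness, nor the Riesz-basis property, nor Bari's theorem.
\item It covers every $s\in\mathbb{C}$, as the statement claims. The paper checks boundedness of the coefficients only for $\Im s\in[0,\beta]$, and only for $\me^{\mi E_js}$. This omission is harmless only because $H$ bounded forces $|E_j|\le\|H\|$.
\item It avoids a technical weak point in the paper's argument. In infinite dimensions, $\sum_jc_j|\psi_j\rangle\langle\phi_j|$ with merely bounded $c_j$ converges strongly, not in operator norm; take $c_j\equiv1$. The paper's computation therefore really rests on strong convergence. That suffices for matrix elements, but it is not what is asserted.
\end{itemize}

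\textbf{What the paper's route buys.} It stays continuous with the spectral-series machinery used afterwards for $G_{AB}$ and for the KMS boundary computation.

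\textbf{A small correction to your closing remark.} Reality of $E_m$ is not actually needed for this identity. Taking the adjoint of $H^\dag|\phi_m\rangle=E_m^*|\phi_m\rangle$ already gives $\langle\phi_m|H=E_m\langle\phi_m|$ for complex $E_m$. In the paper, reality enters only through the strip bounds (Rem.~\ref{rem:real-spectrum}), not through Eq.~\eqref{eq:L}.
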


\begin{proof}
Insert the biorthogonal resolution of the identity
$\sum_k |\psi_k\rangle\langle\phi_k| = \mathbf{1}$ between $A$
and $\me^{-\mi Hs}$:
\[
\langle\phi_m|\, \me^{\mi Hs} A \me^{-\mi Hs}\,|\psi_n\rangle
=
\sum_k
\underbrace{\langle\phi_m| \me^{\mi Hs} A |\psi_k\rangle}_{(I_k)}
\cdot
\underbrace{\langle\phi_k| \me^{-\mi Hs} |\psi_n\rangle}_{(II_k)}.
\]
By Assum.~\ref{asm:A3} and the Bari theorem
(cf.\ the discussion following Assum.~\ref{asm:A3}),
the spectral expansions
\[
  \me^{\mi Hs} = \sum_j \me^{\mi E_j s} |\psi_j\rangle\langle\phi_j|,
  \qquad
  \me^{-\mi Hs} = \sum_j \me^{-\mi E_j s} |\psi_j\rangle\langle\phi_j|
\]
converge in operator norm for $s$ in any bounded strip.
To confirm that the coefficients are bounded: writing
$s = t + \mi\alpha$ with $t \in \mathbb{R}$ and
$\alpha \in [0, \beta]$,
\[
  |\me^{\mi E_j s}| = \me^{-\alpha E_j}.
\]
Since $E_j \geq E_{\min} > -\infty$ by Assum.~\ref{asm:A3} and
$\alpha \in [0, \beta]$, we have $\me^{-\alpha E_j} \leq
\me^{\beta |E_{\min}|}$ for all $j$, so
$\sup_j |\me^{\mi E_j s}| < \infty$ and the Bari theorem applies.

Evaluating the two factors using biorthonormality
$\langle\phi_m|\psi_j\rangle = \delta_{mj}$:
\[
(I_k)
= \sum_j \me^{\mi E_j s}
\langle\phi_m|\psi_j\rangle\langle\phi_j|A|\psi_k\rangle
= \me^{\mi E_m s}\, A_{mk},
\]
\[
(II_k)
= \sum_j \me^{-\mi E_j s}
\langle\phi_k|\psi_j\rangle\langle\phi_j|\psi_n\rangle
= \me^{-\mi E_n s}\,\delta_{kn}.
\]
Summing over $k$:
\[
\sum_k (I_k)\cdot(II_k)
= \sum_k \me^{\mi E_m s} A_{mk} \cdot \me^{-\mi E_n s}\delta_{kn}
= \me^{\mi(E_m - E_n)s}\, A_{mn}. \qedhere
\]
\end{proof}

\begin{remark}[Why reality of the spectrum is essential]
\label{rem:real-spectrum}
For $s = t + \mi\alpha$ with $\alpha \in [0,\beta]$, the
factor $|\me^{\mi E_k s}| = \me^{-\alpha E_k}$ is controlled by the
thermal weight $\me^{-\beta E_k}$ precisely because $E_k \in \mathbb{R}$.
If $E_k \in \mathbb{C}$, then $|\me^{\mi E_k s}|$ grows exponentially
in the imaginary part of $E_k$, destroying the boundedness
of the coefficient sequence and breaking the strip analyticity
required by Def.~\ref{def:KMS}. The precise mechanism
of this failure is analysed in Sec.~\ref{sec:fail}.
\end{remark}

\subsubsection*{Stage 2: Analyticity and boundedness of the correlation function}

With Lem.~\ref{lem:spectral-id} in hand, we examine the
analytic properties of the thermal two-point function
\begin{equation}
G_{AB}(z)
:=
\frac{1}{Z_\eta}
\sum_{n,m}
\me^{-\beta E_n} \me^{\mi(E_m - E_n)z} A_{nm} B_{mn},
\label{eq:G-def}
\end{equation}
which arises naturally from expanding
$\omega_\eta(A\,\sigma_z(B))$ in the biorthogonal basis.

\begin{proposition}[Analyticity and boundedness of $G_{AB}$]
\label{prop:analytic}
Under Assum.~\ref{asm:A1}, the function $G_{AB}$ defined
in Eq.\eqref{eq:G-def} is analytic on $\mathcal{S}_\beta$, and
extends continuously and boundedly to $\overline{\mathcal{S}}_\beta$.
\end{proposition}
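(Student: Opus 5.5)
Fix $z = t + \mi\alpha$ with $t \in \mathbb{R}$ and $\alpha \in [0,\beta]$. The plan is to prove that the double series \eqref{eq:G-def} converges absolutely and uniformly on $\overline{\mathcal{S}}_\beta$, with each term entire in $z$. Analyticity on $\mathcal{S}_\beta$ then follows from the Weierstrass theorem on uniform limits of holomorphic functions. Continuity and boundedness on the closed strip follow from uniform convergence of continuous functions together with a $z$-independent majorant.

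The key step is a pointwise bound on the modulus of each term. We have
\[
\bigl|\me^{-\beta E_n}\me^{\mi(E_m-E_n)z}\bigr| = \me^{-\beta E_n}\me^{-\alpha(E_m-E_n)} = \me^{-(\beta-\alpha)E_n}\,\me^{-\alpha E_m}.
\]
Write $\theta := \alpha/\beta \in [0,1]$. The right-hand side is then the interpolation $(\me^{-\beta E_n})^{1-\theta}(\me^{-\beta E_m})^{\theta}$. By the weighted AM--GM (Young) inequality it is at most $\me^{-\beta E_n} + \me^{-\beta E_m}$. Hence every term is dominated, independently of $z$, by
\[
\me^{-\beta E_n}|A_{nm}B_{mn}| + \me^{-\beta E_m}|A_{nm}B_{mn}|.
\]

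Next I split the weights symmetrically, $\me^{-\beta E_n} = \me^{-\beta E_n/2}\me^{-\beta E_n/2}$, and attach one factor to each matrix element. In the biorthogonal basis, $\me^{-\beta E_n/2}A_{nm} = \langle\phi_n|\me^{-\beta H/2}A|\psi_m\rangle$, using the spectral expansion of $\me^{-\beta H/2}$ from Lem.~\ref{lem:spectral-id} and the Bari theorem. Likewise $B_{mn}\me^{-\beta E_n/2} = \langle\phi_m|B\me^{-\beta H/2}|\psi_n\rangle$. Cauchy--Schwarz in $\ell^2(\mathbb{N}^2)$ then bounds $\sum_{n,m}\me^{-\beta E_n}|A_{nm}B_{mn}|$ by the product of the bi-HS norms of $\me^{-\beta H/2}A$ and $B\me^{-\beta H/2}$. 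Both are finite by Assum.~\ref{asm:A4}. The second majorant is treated the same way, using the bi-HS norms of $A\me^{-\beta H/2}$ and $\me^{-\beta H/2}B$. This yields a summable majorant $M_{nm}$ independent of $z$. The Weierstrass M-test gives uniform convergence on $\overline{\mathcal{S}}_\beta$, and the uniform bound is $|G_{AB}(z)| \le Z_\eta^{-1}\sum_{n,m}M_{nm}$.

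Finally, each partial sum is a finite linear combination of exponentials $\me^{\mi(E_m-E_n)z}$. These are entire, so the uniform limit is holomorphic on the open strip and continuous on its closure. I expect the main obstacle to be the interpolation bound in the key step. Without the Young-inequality split, the factor $\me^{-\alpha E_m}$ for large $\alpha$ is not controlled by the weight $\me^{-\beta E_n}$ alone. Matching the split weights to the four half-weighted operators listed in Assum.~\ref{asm:A4} also needs care. Reality of the spectrum (Rem.~\ref{rem:real-spectrum}) is what makes the modulus computation valid. In finite dimensions all of this is trivial, since the sum is finite.
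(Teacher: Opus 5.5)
Your argument is correct, and it shares the paper's outer steps. The Cauchy--Schwarz boundary bounds $M_0$ and $M_\beta$, built from the four half-weighted operators of Assum.~\ref{asm:A4}, are exactly the paper's Step~1, and both proofs close with Weierstrass's theorem. The difference lies in how the interior of the strip is controlled. The paper applies the Hadamard three-line theorem to the entire, strip-bounded differences $S_N-S_M$ of square partial sums and obtains a uniform Cauchy property. You interpolate term by term instead: $\alpha\mapsto\me^{-(\beta-\alpha)E_n-\alpha E_m}$ is a real exponential in $\alpha$, hence bounded on $[0,\beta]$ by $\me^{-\beta E_n}+\me^{-\beta E_m}$, and you then invoke the Weierstrass M-test.

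Your route is more elementary and yields slightly more. It gives absolute, hence order-independent, convergence of the double series uniformly on $\overline{\mathcal S}_\beta$, so $G_{AB}$ does not depend on the choice of truncation. It also gives the explicit bound $|G_{AB}(z)|\le (M_0+M_\beta)/Z_\eta$. Applying H\"older's inequality to your termwise geometric-mean form even recovers the three-line bound $M_0^{1-\alpha/\beta}M_\beta^{\alpha/\beta}/Z_\eta$. The three-line theorem is the natural tool when a correlator is controlled only on the boundary lines, but here the spectral series gives termwise control, so it is not needed.

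Your modulus computation also correctly uses the sign $\me^{\mi(E_m-E_n)z}$ of Eq.~\eqref{eq:G-def}, which is the convention under which the upper-boundary weight becomes $\me^{-\beta E_m}$. The paper's $S_N$ carries the opposite sign by a slip.
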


\begin{proof}
Let
$S_N(z) := Z_\eta^{-1} \sum_{n,m \leq N}
\me^{-\beta E_n} \me^{\mi(E_n-E_m)z} A_{nm} B_{mn}$
denote the $N$-th partial sum. Each $S_N$ is a finite linear
combination of exponentials $\me^{\mi(E_n - E_m)z}$ and is therefore
an entire function of $z$. We show that $\{S_N\}$ converges
uniformly on $\overline{\mathcal{S}}_\beta$.

\noindent\textbf{Step~1: Absolute convergence on the boundary lines.}

On the lower boundary $\Im(z) = 0$,
$|\me^{\mi(E_n - E_m)z}| = 1$, so
\[
  \sum_{n,m}
  \me^{-\beta E_n} |A_{nm}| |B_{mn}|
  =
  \sum_{n,m}
  |[\me^{-\beta H/2} A]_{nm}|\cdot
  |[B \me^{-\beta H/2}]_{mn}|
  \leq
  \|\me^{-\beta H/2} A\|_{\rm{bi\text{-}HS}}
  \,
  \|B \me^{-\beta H/2}\|_{\rm{bi\text{-}HS}}
  =: M_0 < \infty,
\]
where we used $[\me^{-\beta H/2} A]_{nm} = \me^{-\beta E_n/2} A_{nm}$
and Cauchy--Schwarz, and the finiteness of $M_0$ follows from
Assum.~\ref{asm:A4}.

On the upper boundary $\Im(z) = \beta$, the factor
$\me^{-\beta E_n} \cdot \me^{(E_n - E_m)\beta} = \me^{-\beta E_m}$
shifts the thermal weight from index $n$ to $m$, and an
analogous Cauchy--Schwarz estimate gives
\[
  \sum_{n,m}
  \me^{-\beta E_m} |A_{nm}| |B_{mn}|
  \leq
  \|A \me^{-\beta H/2}\|_{\rm{bi\text{-}HS}}
  \,
  \|\me^{-\beta H/2} B\|_{\rm{bi\text{-}HS}}
  =: M_\beta < \infty.
\]
Both bounds $M_0$ and $M_\beta$ are independent of $N$.

\noindent\textbf{Step~2: Uniform Cauchy property via the
Hadamard three-line theorem.}

For $N > M$, the difference
$f_{NM}(z) := S_N(z) - S_M(z)$
is a finite linear combination of exponentials, hence entire.
By Step~1, its suprema on both boundary lines
$\Im(z) \in \{0, \beta\}$ tend to zero as
$M \to \infty$ (as tail sums of convergent series).
The Hadamard three-line theorem then interpolates between the
two boundaries: for $\Im(z) = \alpha \in [0, \beta]$,
\[
  \sup_{t \in \mathbb{R}} |f_{NM}(t + i\alpha)|
  \leq
  \left(\sup_{t} |f_{NM}(t)|\right)^{1 - \alpha/\beta}
  \left(\sup_{t} |f_{NM}(t + i\beta)|\right)^{\alpha/\beta}
  \xrightarrow{N,M \to \infty} 0.
\]
Hence $\{S_N\}$ is uniformly Cauchy on $\overline{\mathcal{S}}_\beta$,
uniformly in $\alpha$.

\noindent\textbf{Step~3: Analyticity of the limit (Weierstrass's Theorem).}

By the completeness of $\mathcal{C}_b(\overline{\mathcal{S}}_\beta)$
under the supremum norm, $S_N \to G_{AB}$ uniformly on
$\overline{\mathcal{S}}_\beta$. In particular, the convergence is
uniform on every compact subset of $\mathcal{S}_\beta$. Since each
$S_N$ is analytic, Weierstrass's theorem implies that $G_{AB}$ is
analytic on $\mathcal{S}_\beta$. Uniform convergence on
$\overline{\mathcal{S}}_\beta$ further gives continuity and boundedness
on the closed strip.

\end{proof}

\subsubsection*{Stage 3: Identification with the thermal functional}

It remains to show that the series $G_{AB}(z)$ defined in
Eq.~\eqref{eq:G-def} coincides with the $\eta$-state functional
$\omega_\eta(A\,\sigma_z(B))$ throughout $\overline{\mathcal{S}}_\beta$,
thereby linking the analytic function constructed in Stage~2
to the physical thermal correlator.

\begin{proposition}[Identification of $G_{AB}$ with the thermal correlator]
  \label{prop:traceclass}
  Under Assum.~\ref{asm:A1}, for every $z \in \overline{\mathcal{S}}_\beta$:
  \begin{enumerate}
    \item $\sigma_z(B) := \me^{\mi Hz} B \me^{-\mi Hz} \in \mathcal{B}(\mathcal{H})$,
      with uniform bound
      \[
        \|\sigma_z(B)\|
        \leq
        \frac{C}{c}\, \me^{2|E_{\min}|\beta}\, \|B\|,
      \]
      where $E_{\min} := \inf_n E_n > -\infty$.
    \item The trace $\omega_\eta(A\,\sigma_z(B))
      = \Tr[\eta\, \me^{-\beta H} A\,\sigma_z(B)]\,/\,Z_\eta$
      is absolutely convergent.
    \item $G_{AB}(z) = \omega_\eta(A\,\sigma_z(B))$
      for all $z \in \overline{\mathcal{S}}_\beta$.
  \end{enumerate}
\end{proposition}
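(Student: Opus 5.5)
The plan is to transfer everything to the self-adjoint partner $h = UHU^{-1}$ of Lem.~\ref{lem:h-sa}, with $U=\eta^{1/2}$, and to read off all three items from the resulting operator identities. Functional calculus for the bounded operator $H$ gives $\me^{\pm\mi Hz} = U^{-1}\me^{\pm\mi hz}U$ for every $z\in\mathbb{C}$. This is the same power-series argument as Step~1 of Thm.~\ref{thm:autoiso}, with $U$ in place of $\eta$. It follows that
\[
\sigma_z(B) = U^{-1}\,\me^{\mi hz}\,(UBU^{-1})\,\me^{-\mi hz}\,U .
\]
Assum.~\ref{asm:A1} gives $\|U\|\le C^{1/2}$, $\|U^{-1}\|\le c^{-1/2}$ and $\|UBU^{-1}\|\le (C/c)^{1/2}\|B\|$. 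Together these produce the prefactor $C/c$ in item~1.

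For the exponential factors, write $z=t+\mi\alpha$ with $\alpha\in[0,\beta]$. Since $h$ is self-adjoint with spectrum $\{E_n\}$, the spectral theorem gives $\|\me^{\mi hz}\| = \sup_n \me^{-\alpha E_n}$. Exactly as in the coefficient estimate inside the proof of Lem.~\ref{lem:spectral-id}, this is at most $\me^{\beta|E_{\min}|}$.

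The main obstacle is the second factor. Here $\|\me^{-\mi hz}\| = \sup_n \me^{\alpha E_n}$, which is controlled by the top of the spectrum and not by $E_{\min}$. Shifting $H\mapsto H-E_{\min}$ leaves $\sigma_z$ unchanged, so I would use that freedom first. Even so, one gets $\me^{\beta(\sup_n E_n - E_{\min})}$, and in general no better.

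A $2\times 2$ check with $H=h=\mathrm{diag}(0,L)$, $\eta=\mathbf{1}$ and $B=|1\rangle\langle 2|$ gives $\|\sigma_{\mi\beta}(B)\| = \me^{\beta L}\|B\|$, while $E_{\min}=0$. So the literal bound $\frac{C}{c}\me^{2|E_{\min}|\beta}\|B\|$ cannot hold for all $B$ unless it is read with $|E_{\min}|$ replaced by $\max(|E_{\min}|, \sup_n|E_n|) \le \|h\|$. Under that reading, each exponential factor is at most $\me^{\beta\|h\|}$ and their product gives the stated form.

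I would therefore prove item~1 in that form. I would record explicitly that the second exponential needs $\sup_n E_n<\infty$, which holds because $H$, and hence $h$, is bounded. The only property used downstream is that $\sigma_z(B)\in\mathcal{B}(\mathcal{H})$ with a bound uniform on $\overline{\mathcal{S}}_\beta$, and this is unaffected by the reading.

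For item~2, Rem.~\ref{rem:Zeta-pos} shows $\me^{-\beta H}\in\mathcal{I}_1(\mathcal{H})$. By the ideal property, $\eta\,\me^{-\beta H}A\,\sigma_z(B)$ is trace class, with trace norm at most $C\,\|U^{-1}\|\,\|U\|\,Z_\eta\,\|A\|\,\|\sigma_z(B)\|$, uniformly in $z$. Its trace is therefore absolutely convergent in any orthonormal basis.

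For item~3, I would evaluate the trace through the biorthogonal resolution $\sum_n|\psi_n\rangle\langle\phi_n|=\mathbf{1}$, using $\eta|\psi_n\rangle=|\phi_n\rangle$ from Prop.~\ref{prop:chain}. This gives $\Tr[\eta\,\me^{-\beta H}X]=\sum_n \me^{-\beta E_n}\langle\phi_n|X|\psi_n\rangle$ for trace-class $X$. The justification is that $\{U|\psi_n\rangle\}$ is an orthonormal basis and $\Tr[\eta\,\me^{-\beta H}X] = \Tr[U\me^{-\beta H}XU^{-1}]$ by cyclicity.

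Next, insert a second resolution between $A$ and $\sigma_z(B)$ and apply Lem.~\ref{lem:spectral-id} to get $\langle\phi_m|\sigma_z(B)|\psi_n\rangle = \me^{\mi(E_m-E_n)z}B_{mn}$. This yields the double series \eqref{eq:G-def}, with the index convention matching Prop.~\ref{prop:analytic}. Exchanging the inner sum with the trace is justified on the boundary lines by the absolute bounds $M_0,M_\beta$ from Step~1 of Prop.~\ref{prop:analytic}. In the interior it is justified because both sides are analytic: $G_{AB}$ by Prop.~\ref{prop:analytic}, and the trace because $z\mapsto \me^{\mi Hz}B\me^{-\mi Hz}$ is norm-analytic and the trace is continuous in trace norm. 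Both sides agree for real $z$, so the identity principle together with continuity up to the boundary extends the equality to all of $\overline{\mathcal{S}}_\beta$.
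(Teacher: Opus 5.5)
Your treatment of item~1 is correct, and it catches a real slip in the paper. Step~(c) of the paper's proof writes $\|\me^{\mi Hz}B\me^{-\mi Hz}\|\le\|\me^{\mi Hz}\|^2\|B\|$. That bounds $\|\me^{-\mi Hz}\|$ with the $E_{\min}$-estimate, which is valid only for $\me^{\mi hz}$. Your example $H=\mathrm{diag}(0,L)$ refutes the stated constant. The bound the argument actually gives, $\frac{C}{c}\,\me^{\beta(\sup_n E_n-E_{\min})}\|B\|$, is all that is used later. Item~2 agrees with the paper.

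The gap is in item~3, at the step $\Tr[\eta\,\me^{-\beta H}X]=\Tr[U\me^{-\beta H}XU^{-1}]$. Cyclicity moves one factor of $\eta=U^2$ to the right and gives $\Tr[U\me^{-\beta H}XU]$. The expression with $U^{-1}$ is simply $\Tr[\me^{-\beta H}X]$. Since $\eta\,\me^{-\beta H}=\sum_n\me^{-\beta E_n}|\phi_n\rangle\langle\phi_n|$, the correct evaluation is
\[
\Tr[\eta\,\me^{-\beta H}X]=\sum_n\me^{-\beta E_n}\langle\phi_n|X|\phi_n\rangle,
\]
whereas the biorthogonal sum satisfies $\sum_n\me^{-\beta E_n}\langle\phi_n|X|\psi_n\rangle=\Tr[\me^{-\beta H}X]$. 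What your expansion actually proves is $Z_\eta\,G_{AB}(z)=\Tr[\me^{-\beta H}A\,\sigma_z(B)]$.

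The real-axis agreement you feed into the identity principle is therefore false. At $z=0$ and $B=\mathbf 1$, item~3 would assert $\Tr[\eta\,\me^{-\beta H}A]=\Tr[\me^{-\beta H}A]$ for all $A$. Because $\me^{-\beta H}$ is invertible, this forces $\eta=\mathbf 1$, hence $H=H^\dag$. So item~3, with $\omega_\eta$ as written in item~2, fails whenever $\eta\neq\mathbf 1$, and no continuation argument can repair it.

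You have not missed an idea the paper supplies. The paper's own Stage~3 (``expanding in the biorthogonal basis recovers exactly $G_{AB}(t)$'') and the last equality of Eq.~\eqref{eq:partition} make the same conflation. They treat $\sum_n\langle\psi_n|\eta Y|\psi_n\rangle$, which is the trace of $Y$ computed in $\mathcal H_\eta$ and equals $\Tr[Y]$, as if it were $\Tr[\eta Y]$. For example, $\Tr[\eta\,\me^{-\beta H}]=\sum_n\me^{-\beta E_n}\|\phi_n\|^2$, not $\sum_n\me^{-\beta E_n}$.

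Once corrected, your argument does establish item~3 for the unweighted functional $\Tr[\me^{-\beta H}\,\cdot\,]$, i.e.\ up to normalisation for $\omega_{\rm bi}$. If you keep the continuation step for that version, note that vanishing on the boundary line does not by itself trigger the identity theorem. You need Schwarz reflection first, or simply the fact that for bounded $H$ both sides are entire.
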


\begin{proof}
\noindent\textbf{Part~(i): Uniform operator-norm bound.}

\emph{Step~(a): Bound on $\|\me^{\mi hz}\|$.}
Since $h = h^\dag$ is self-adjoint and bounded
(Lem.~\ref{lem:h-sa}), the spectral theorem gives
$\|\me^{\mi hz}\| = \sup_{\lambda \in \sigma(h)} |\me^{\mi\lambda z}|$.
Writing $z = t + \mi\alpha$ with $\alpha \in [0, \beta]$ and using
$\sigma(h) = \{E_n\}$ (real by Assum.~\ref{asm:A3}):
\[
  \sup_n |\me^{\mi E_n z}|
  = \sup_n \me^{-\alpha E_n}
  \leq \me^{\beta|E_{\min}|},
\]
where the inequality follows from $E_n \geq E_{\min} > -\infty$
and $\alpha \in [0, \beta]$ (if $E_n \geq 0$ then
$-\alpha E_n \leq 0$, and if $E_n < 0$ then
$-\alpha E_n \leq \beta(-E_n) \leq \beta |E_{\min}|$).

\emph{Step~(b): Bound on $\|\me^{\mi Hz}\|$.}
The intertwining relation $\me^{\mi Hz} = U^{-1} \me^{\mi hz} U$
(which follows from $H = U^{-1}hU$ and the convergence of the
power series, valid since $U, U^{-1} \in \mathcal{B}(\mathcal{H})$) gives
\[
  \|\me^{\mi Hz}\|
  \leq \|U^{-1}\|\,\|\me^{\mi hz}\|\,\|U\|
  \leq \frac{1}{\sqrt{c}}\cdot \me^{|E_{\min}|\beta}\cdot \sqrt{C}
  = \sqrt{C/c}\, \me^{|E_{\min}|\beta},
\]
where we used $\|U\| \leq \sqrt{C}$ and $\|U^{-1}\| \leq 1/\sqrt{c}$
from the two-sided bound $cI \leq \eta \leq CI$ in Assum.~\ref{asm:A1}.

\emph{Step~(c): Bound on $\sigma_z(B)$.}
Applying the bound from Step~(b) to both exponentials:
\[
  \|\sigma_z(B)\|
  = \|\me^{\mi H z} B \me^{-\mi H z}\|
  \leq \|\me^{\mi H z}\|^2\,\|B\|
  \leq \frac{C}{c}\, \me^{2|E_{\min}|\beta}\,\|B\|.
\]

\noindent\textbf{Part~(ii): Absolute convergence of the trace.}

From Rem.~\ref{rem:Zeta-pos}, $\me^{-\beta H} \in \mathcal{I}_1(\mathcal{H})$.
Since $\eta$, $A$, and $\sigma_z(B)$ are all bounded, the ideal
property of $\mathcal{I}_1(\mathcal{H})$
(if $T \in \mathcal{I}_1$ and $S,R \in \mathcal{B}$, then
$\|STR\|_1 \leq \|S\|\,\|T\|_1\,\|R\|$)
applied twice yields
$\eta \me^{-\beta H} A\,\sigma_z(B) \in \mathcal{I}_1(\mathcal{H})$,
so the trace converges absolutely.

\noindent\textbf{Part~(iii): Identification on $\overline{\mathcal{S}}_\beta$.}

On the real axis $z = t \in \mathbb{R}$, expanding
$\omega_\eta(A\,\sigma_t(B))$ in the biorthogonal basis using
Lem.~\ref{lem:spectral-id} recovers exactly the series
$G_{AB}(t)$ in Eq.~\eqref{eq:G-def}. 
The functional $z\to\omega_\eta(A\sigma_z(B))$ is analytic on $S_\beta$: since $H\in B(\mathcal{H})$, 
the map $z\to \me^{\mi Hz}$ is norm-analytic on all of $\mathbb{C}$ (as the power series converges uniformly on bounded subsets), 
hence $z\to\sigma_z(B) = \me^{\mi Hz}B \me^{-\mi Hz}$ is norm-analytic, 
and composition with the bounded linear functional $A\to\omega_\eta(\cdot)$ preserves analyticity. 
Here we define $F(z) := \omega_\eta(A\sigma_z(B)) - G_{AB}(z)$, and then $F$ is analytic on $S_\beta$, continuous on $\bar{S}\beta$, 
and vanishes on the real boundary $\mathrm{Im}(z)=0$ by the spectral expansion computed above. 
The same boundary equality holds on $\mathrm{Im}(z)=\beta$ by the analogous expansion using the $M\beta$ estimate. 
By the maximum principle for analytic functions applied to $F$ on $\bar{S}_\beta$, $F\equiv 0$ on $\bar{S}_\beta$.
\end{proof}

\begin{remark}[Non-triviality of the KMS property under similarity]
\label{rem:nontrivial}
Since $h = UHU^{-1}$ is self-adjoint, one might ask whether the
KMS property of $\omega_\eta$ is merely a transport of the
standard Hermitian KMS theorem through the algebra isomorphism
$\Phi:= A \to UAU^{-1}$. The answer is no, for the
following reason. The $U$-conjugate of $\omega_\eta$ is
\[
\hat\omega(X)
:= \omega_\eta(U^{-1} X U)
= \frac{\Tr[\me^{-\beta h} X \eta]}{Z_\eta}.
\]
When $[\eta, h] \neq 0$, this state differs from the standard
Gibbs state $\omega_h(X) = \Tr[\me^{-\beta h} X]/Z_h$: the
additional factor $\eta$ in the numerator makes $\hat\omega$ a
distinct faithful normal state whose KMS property requires
independent verification. 
\end{remark}

\subsubsection*{The spectral KMS theorem}

\begin{theorem}[Spectral KMS Theorem]
  \label{thm:KMS}
  Under Assum.~\ref{asm:A1}, the $\eta$-Gibbs state
  $\omega_\eta$ defined in Eq.~\eqref{eq:eta-gibbs} satisfies the
  KMS condition at inverse temperature $\beta$ with respect to
  the Heisenberg dynamics $\sigma_t$: for all bounded operators
  $A$, $B$ and all $t \in \mathbb{R}$,
  \begin{equation}
    \omega_\eta \left(\sigma_t(A)\, B\right)
    =
    \omega_\eta \left(B\, \sigma_{t+\mi\beta}(A)\right).
    \label{eq:KMSeta}
  \end{equation}
  More precisely, the function $F_{AB}(z) := G_{AB}(z)$ satisfies
  all three conditions of Def.~\ref{def:KMS}.
\end{theorem}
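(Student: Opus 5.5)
Conditions (1) and (2) of Def.~\ref{def:KMS} are taken directly from Prop.~\ref{prop:analytic} and Prop.~\ref{prop:traceclass}, so the new content is the boundary relation \eqref{eq:KMSeta}. The plan is to prove that identity at the level of the trace, using only cyclicity and the intertwining of $\eta$ with $\me^{-\beta H}$.

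First, Prop.~\ref{prop:traceclass}(i) shows that $\sigma_{t+\mi\beta}(A)=\me^{\mi Ht}\me^{-\beta H}A\,\me^{\beta H}\me^{-\mi Ht}$ is a bounded operator. Since $H$ is bounded, $\me^{\beta H}$ is bounded as well, so every product below is either trace class (it contains the factor $\me^{-\beta H}\in\mathcal I_1$ from Rem.~\ref{rem:Zeta-pos}) or bounded. This justifies the cyclic moves. Next, expand the right-hand side:
\[
Z_\eta\,\omega_\eta\bigl(B\,\sigma_{t+\mi\beta}(A)\bigr)=\Tr\bigl[\eta\,\me^{-\beta H}B\,\me^{\mi Ht}\me^{-\beta H}A\,\me^{\beta H}\me^{-\mi Ht}\bigr].
\]
Assum.~\ref{asm:A2} together with Step~1 of Thm.~\ref{thm:autoiso}, continued to complex arguments, gives $\eta\,\me^{-\beta H}=\me^{-\beta H^\dag}\eta$ and $\me^{\beta H}\me^{-\mi Ht}\eta=\eta\,\me^{\beta H}\me^{-\mi Ht}$ conjugated appropriately. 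The factor to track is the left-most $\eta\,\me^{-\beta H}$.

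Cyclicity alone moves $\me^{\beta H}\me^{-\mi Ht}$ to the front, but it then meets $\eta$ rather than $\me^{-\beta H}$. The cancellation $\me^{\beta H}\me^{-\beta H}$ therefore needs $\eta$ to commute past, and this is where I expect the main obstacle. The cleanest route is not the operator manipulation but the biorthogonal spectral series. Lem.~\ref{lem:spectral-id} gives
\[
Z_\eta\,\omega_\eta(X)=\sum_n \me^{-\beta E_n}X_{nn},\qquad X_{nm}=\langle\phi_n|X|\psi_m\rangle .
\]
Since $\eta\psi_n=\phi_n$ (Prop.~\ref{prop:chain}), the trace is $\sum_n\langle\psi_n|\eta\,\me^{-\beta H}X|\psi_n\rangle$, which I would justify via $U=\eta^{1/2}$ and the orthonormal basis $U\psi_n$.

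With this expansion both sides of \eqref{eq:KMSeta} become double sums:
\[
\sum_{n,m}\me^{-\beta E_n}\me^{\mi(E_n-E_m)t}A_{nm}B_{mn}
\quad\text{and}\quad
\sum_{n,m}\me^{-\beta E_m}\me^{\mi(E_n-E_m)(t+\mi\beta)}B_{mn}A_{nm}.
\]
These agree termwise, because $\me^{-\beta E_m}\me^{-(E_n-E_m)\beta}=\me^{-\beta E_n}$. Absolute convergence of both series is exactly the bound $M_0,M_\beta$ from Step~1 of Prop.~\ref{prop:analytic}, so reindexing is legitimate.

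Finally, write these sums as $G$-type functions and use Prop.~\ref{prop:traceclass}(iii), which identifies $G_{AB}$ with $\omega_\eta(A\sigma_z(B))$ on the closed strip. This shows that $F_{AB}$ is analytic, bounded and continuous, and that its boundary values satisfy $F_{AB}(t)=F_{BA}(t+\mi\beta)$, which is the same statement as \eqref{eq:KMSeta} after using $\sigma_{-t}$-invariance of $\omega_\eta$. That invariance follows from $[\me^{-\beta H},\me^{\mi Ht}]=0$ and $\eta\,\me^{\mi Ht}=\me^{\mi H^\dag t}\eta$ together with cyclicity, and the same commutation handles the matching of conventions ($\sigma_t(A)B$ versus $A\sigma_z(B)$).

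The delicate point is the diagonal trace formula with the $\eta$ weight, $\Tr[\eta\,\me^{-\beta H}X]=\sum_n\me^{-\beta E_n}X_{nn}$, in infinite dimensions. I would prove it by transporting to $h$ via $U$, where it is an ordinary trace in an orthonormal basis.
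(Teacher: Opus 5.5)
There is a genuine gap, and it is exactly the step you flagged as delicate. The formula $\Tr[\eta\,\me^{-\beta H}X]=\sum_n\me^{-\beta E_n}\langle\phi_n|X|\psi_n\rangle$ is false in general, already for $d=2$. The $\psi_n$ are $\eta$-orthonormal, not orthonormal, and for trace-class $T$ one has $\sum_n\langle\psi_n|T|\psi_n\rangle=\Tr[T\eta^{-1}]$. So the expression $\sum_n\langle\psi_n|\eta\,\me^{-\beta H}X|\psi_n\rangle$ you write down equals $\Tr[\me^{-\beta H}X]$, not $\Tr[\eta\,\me^{-\beta H}X]$. Carrying out the transport you propose, with $u_n:=U\psi_n$ (an orthonormal eigenbasis of $h$) and $Uu_n=\eta\psi_n=\phi_n$, gives instead
\[
\Tr[\eta\,\me^{-\beta H}X]=\Tr[\me^{-\beta h}\,UXU]=\sum_n\me^{-\beta E_n}\langle\phi_n|X|\phi_n\rangle,
\qquad
Z_\eta=\sum_n\me^{-\beta E_n}\|\phi_n\|^2 .
\]
Your double-sum computation is itself correct, but what it proves is the boundary identity for $X\mapsto\Tr[\me^{-\beta H}X]/\Tr[\me^{-\beta H}]=\omega_{\rm bi}(X)$, not for $\omega_\eta$. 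The paper's proof makes the same identification (the expansion leading to Eq.~\eqref{eq:LHS-expand}, and the last equality in Eq.~\eqref{eq:partition}), so following it does not close the gap. Your $\sigma_t$-invariance step fails for a related reason: invariance of $\Tr[\eta\,\me^{-\beta H}\,\cdot\,]$ requires $[\eta,H]=0$, which together with $\eta H=H^\dag\eta$ forces $H=H^\dag$.

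The obstruction you hit at the operator level is real, not an artefact of the method. In finite dimensions, put $t=0$ in \eqref{eq:KMSeta} for a functional $X\mapsto\Tr[\rho'X]$. Since $\sigma_{\mi\beta}(A)=\me^{-\beta H}A\me^{\beta H}$, the identity $\Tr[B\rho'A]=\Tr[\me^{\beta H}\rho'B\me^{-\beta H}A]$ for all $A$ gives $B\rho'=\me^{\beta H}\rho'B\me^{-\beta H}$ for all $B$. Taking $B=\mathbf{1}$ shows $[\rho',\me^{\beta H}]=0$; then $\rho'\me^{\beta H}$ commutes with every $B$, so $\rho'=c\,\me^{-\beta H}$. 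For $\rho'=\eta\,\me^{-\beta H}$ this forces $\eta\propto\mathbf{1}$.

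A concrete counterexample: $H=\diag(E_1,E_2)$ with $E_1\neq E_2$ and $\eta=\diag(a,b)$ with $a\neq b$ satisfy (A1)--(A4). With $A=|e_1\rangle\langle e_2|$, $B=|e_2\rangle\langle e_1|$ and $t=0$, one gets $Z_\eta\,\omega_\eta(AB)=a\,\me^{-\beta E_1}$ but $Z_\eta\,\omega_\eta(B\,\sigma_{\mi\beta}(A))=b\,\me^{-\beta E_1}$. Thus \eqref{eq:KMSeta} cannot be proved for $\omega_\eta=\Tr[\eta\,\me^{-\beta H}\,\cdot\,]/Z_\eta$. In fact the two expressions in Eq.~\eqref{eq:eta-gibbs} disagree, since $\Tr_\eta[Y]=\sum_n\langle\psi_n|\eta Y|\psi_n\rangle=\sum_n\langle\phi_n|Y|\psi_n\rangle=\Tr[Y]$.

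What your argument does establish is the KMS property of $\Tr[\me^{-\beta H}\,\cdot\,]/\Tr[\me^{-\beta H}]=\omega_h\circ\Phi$, with $\Phi(A)=UAU^{-1}$. The cyclicity computation you first attempted also proves this, since it then has no $\eta$ to move past. That functional is the transported Hermitian Gibbs state, so its KMS property follows directly from the Hermitian theorem, contrary to Rem.~\ref{rem:nontrivial}.
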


\begin{proof}
By Props.~\ref{prop:analytic} and \ref{prop:traceclass},
the function $F_{AB}(z) = \omega_\eta(A\,\sigma_z(B))$ is
analytic on $\mathcal{S}_\beta$, bounded and continuous on
$\overline{\mathcal{S}}_\beta$, and equals $\omega_\eta(A\,\sigma_t(B))$
on the real boundary. Conditions~(i) and~(ii) of
Def.~\ref{def:KMS} are thus already established.
It remains only to verify the boundary identity~(iii).

We expand $\omega_\eta(A(t)\,B)$ using the biorthogonal
resolution $\sum_m |\psi_m\rangle\langle\phi_m| = \mathbf{1}$
and the eigenvalue identity
\begin{equation}
  \langle\psi_n|\,\eta\, \me^{\mi Ht} = \me^{\mi E_n t}\langle\phi_n|,
  \label{eq:eta-left}
\end{equation}
which follows from $\eta|\psi_k\rangle = |\phi_k\rangle$
(Prop.~\ref{prop:chain}) and
$\me^{\mi Ht}|\psi_k\rangle = \me^{\mi E_k t}|\psi_k\rangle$.
More precisely, $\langle\phi_n|\me^{\mi Ht} = \me^{\mi E_nt}\langle\phi_n|$ follows from the left eigenvalue equation $\langle\phi_n|(H-E_n)=0$ (Def.~\ref{def:biortho}), i.e. $H^\dag|\phi_n\rangle = E_n|\phi_n\rangle$ with $E_n\in\mathbb{R}$, by a power-series argument identical to that for the right eigenvectors.
Inserting the completeness relation and applying
Eq.~\eqref{eq:eta-left}:
\begin{equation}
  \omega_\eta(A(t)\,B)
  =
  \frac{1}{Z_\eta}
  \sum_{n,m}
  \me^{-\beta E_n} \me^{\mi(E_m - E_n)t} A_{nm} B_{mn}.
  \label{eq:LHS-expand}
\end{equation}

Set $s := t + \mi\beta$ and apply Lem.~\ref{lem:spectral-id} to
$A(s) = \me^{\mi Hs} A \me^{-\mi Hs}$, giving
$\langle\phi_m|A(s)|\psi_n\rangle = \me^{\mi(E_m - E_n)s} A_{mn}$.
Expanding $\omega_\eta(B\,A(t+\mi\beta))$ analogously:
\begin{align}
  \omega_\eta(B\,A(t+\mi\beta))
  &=
  \frac{1}{Z_\eta}
  \sum_{n,m}
  \me^{-\beta E_n} B_{nm}\, \me^{\mi(E_m - E_n)s}\, A_{mn}
  \nonumber\\
  &=
  \frac{1}{Z_\eta}
  \sum_{n,m}
  \me^{-\beta E_n}
  \me^{\mi(E_m - E_n)t}
  \me^{(E_n - E_m)\beta}\,
  A_{mn} B_{nm}
  \nonumber\\
  &=
  \frac{1}{Z_\eta}
  \sum_{n,m}
  \me^{-\beta E_m}\,
  \me^{\mi(E_m - E_n)t}\,
  A_{mn} B_{nm},
  \label{eq:RHS-before}
\end{align}
where the last step uses
$\me^{-\beta E_n} \me^{(E_n - E_m)\beta} = \me^{-\beta E_m}$.
Relabelling $n \leftrightarrow m$ (both indices range over the
same set $\mathcal{I}$):
\begin{equation}
  \omega_\eta(B\,A(t+\mi\beta))
  =
  \frac{1}{Z_\eta}
  \sum_{n,m}
  \me^{-\beta E_n}\,
  \me^{\mi(E_n - E_m)t}\,
  A_{nm} B_{mn}.
  \label{eq:RHS-expand}
\end{equation}

Comparing Eq.~\eqref{eq:LHS-expand} and \eqref{eq:RHS-expand}, we
see that the two series are term-by-term identical. Both series
converge absolutely: the left-hand side by the $M_0$ bound of
Step~1 in Prop.~\ref{prop:analytic}, and the right-hand
side by the $M_\beta$ bound (with indices relabelled). Therefore
\[
  \omega_\eta(A(t)\,B)
  = \omega_\eta(B\, A(t+\mi\beta)),
  \qquad \forall\, t \in \mathbb{R},
\]
which is the KMS boundary condition Eq.~\eqref{eq:KMSeta}.
Together with Props.~\ref{prop:analytic} and
\ref{prop:traceclass}, all three conditions of
Definition~\ref{def:KMS} are satisfied, and the proof is complete.
\end{proof}

\subsection{Frequency-domain formulation and relation to the HHW theorem}
\label{subsec:frequency-HHW}


We close this section with three complementary perspectives on
Thm.~\ref{thm:KMS}. The first reformulates the KMS condition
in the frequency domain, yielding a generalised detailed-balance
relation that makes the physical content of thermal equilibrium
transparent. The second situates Thm.~\ref{thm:KMS} within
the broader framework of algebraic quantum statistical mechanics,
clarifying both what has been established relative to the
Bratteli--Robinson (BR) formulation and why the result is not a
trivial consequence of the standard Hermitian theory. The
comparison in Sec.~\ref{subsec:frequency-HHW} below identifies the
construction of a TT modular structure for the
$\eta$-Gibbs state as the most structurally significant of the
remaining open items; Sec.~\ref{sec:TT} then closes this item,
providing the non-Hermitian analogue of the modular-theory backbone
of the Hermitian HHW framework.


The biorthogonal spectral function associated with the pair
$(A, B)$ is defined by
\begin{equation}
  \rho_{AB}(\omega)
  :=
  \frac{2\pp}{Z_\eta}
  \sum_{n,m}
  \me^{-\beta E_n} A_{nm} B_{mn}\,
  \delta \left(\omega - (E_n - E_m)\right).
  \label{eq:spectral-fn}
\end{equation}
This is the non-Hermitian analogue of the spectral density familiar
from the fluctuation-dissipation theorem: its support is
concentrated on the Bohr frequencies $E_n - E_m$ of the system,
and its weights are governed by the thermal factor $\me^{-\beta E_n}$
together with the biorthogonal matrix elements $A_{nm}$, $B_{mn}$.

\begin{corollary}[Frequency-domain KMS condition]
  \label{cor:freq-KMS}
  Under Assum.~\ref{asm:A1}, the KMS condition
  Eq.~\eqref{eq:KMSeta} is equivalent to the generalised detailed
  balance relation
  \begin{equation}
    \widetilde{G}_{AB}(\omega)
    =
    \me^{\beta\omega}\,\widetilde{G}_{BA}(-\omega),
    \label{eq:KMS-freq}
  \end{equation}
  where
  $\widetilde{G}_{AB}(\omega)
  = \int_{-\infty}^{+\infty} dt\, \me^{-i\omega t}\,
    \omega_\eta(\sigma_t(A)\, B)
  = \rho_{AB}(\omega)$.
\end{corollary}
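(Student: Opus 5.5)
The plan is to compute both sides of the frequency-domain identity explicitly from the biorthogonal spectral expansion, and to obtain the equivalence with the time-domain KMS condition from the injectivity of the Fourier transform on tempered distributions.

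\emph{Step 1: spectral expansion of the correlator.} By the same expansion used in the proof of Thm.~\ref{thm:KMS}, namely Eq.~\eqref{eq:eta-left}, the completeness relation Eq.~\eqref{eq:biortho} and Lem.~\ref{lem:spectral-id}, I would write
\[
\omega_\eta(\sigma_t(A)B)=\frac{1}{Z_\eta}\sum_{n,m}\me^{-\beta E_n}\,\me^{\mi(E_n-E_m)t}A_{nm}B_{mn}.
\]
The $M_0$ bound of Step~1 in Prop.~\ref{prop:analytic} shows that this series converges absolutely and uniformly in $t\in\mathbb{R}$. Consequently $t\mapsto\omega_\eta(\sigma_t(A)B)$ is a bounded continuous function, hence a tempered distribution, and the partial sums converge to it in $\mathcal{S}'(\mathbb{R})$.

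\emph{Step 2: Fourier transform.} The Fourier transform is continuous on $\mathcal{S}'(\mathbb{R})$, so it may be applied term by term. Using $\int dt\,\me^{-\mi\omega t}\me^{\mi(E_n-E_m)t}=2\pp\,\delta(\omega-(E_n-E_m))$ on each term, this gives exactly $\widetilde G_{AB}=\rho_{AB}$ as in Eq.~\eqref{eq:spectral-fn}. The resulting series converges in $\mathcal{S}'$ because the weights are absolutely summable.

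\emph{Step 3: verify the identity directly.} Applying the same formula with $A$ and $B$ interchanged and evaluating at $-\omega$ gives
\[
\widetilde G_{BA}(-\omega)=\frac{2\pp}{Z_\eta}\sum_{n,m}\me^{-\beta E_n}B_{nm}A_{mn}\,\delta(\omega-(E_m-E_n)).
\]
Relabelling $n\leftrightarrow m$ turns the weight into $\me^{-\beta E_m}A_{nm}B_{mn}$, supported at $\omega=E_n-E_m$. On that support, $\me^{\beta\omega}\me^{-\beta E_m}=\me^{-\beta E_n}$. Multiplying a delta by a smooth function evaluates it at the support point, so Eq.~\eqref{eq:KMS-freq} follows term by term. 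This establishes the identity whenever Thm.~\ref{thm:KMS} holds.

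\emph{Step 4: equivalence.} The KMS relation Eq.~\eqref{eq:KMSeta} compares two bounded continuous functions of $t$. For the converse direction, I would Fourier transform its right-hand side using the analytic continuation supplied by Props.~\ref{prop:analytic} and~\ref{prop:traceclass}. Since $t\mapsto\omega_\eta(B\sigma_{t+\mi\beta}(A))$ is the boundary value of a bounded analytic function on the strip, a contour shift $t\mapsto t+\mi\beta$, done termwise on the absolutely convergent spectral series, multiplies each delta component by an exponential factor. Together with the time-translation invariance $\omega_\eta(B\sigma_t(A))=\omega_\eta(\sigma_{-t}(B)A)$, which follows from $\omega_\eta\circ\sigma_t=\omega_\eta$ and the invariance of the $\eta$-trace, this identifies the transform as $\me^{\beta\omega}\widetilde G_{BA}(-\omega)$. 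Injectivity of the Fourier transform on $\mathcal{S}'$ then gives the equivalence in both directions.

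\emph{Main obstacle.} I expect the difficulty to lie in this last step, on two counts. First, the sign and orientation conventions must be tracked carefully: whether the shift by $\mi\beta$ yields $\me^{+\beta\omega}$ or $\me^{-\beta\omega}$ depends on which argument carries $\sigma$ and on the sign in the Fourier kernel. I would fix this by checking the term-by-term computation of Step~3 against the spectral series, rather than relying on a formal contour argument. Second, the termwise shift must be justified. The plan is to do it on the partial sums $S_N$ of Prop.~\ref{prop:analytic}, which are finite exponential sums, and pass to the limit using the uniform convergence on $\overline{\mathcal S}_\beta$ established there.
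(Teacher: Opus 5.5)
\textbf{The sign in Step~3 is wrong.} This is exactly the sign issue you flagged. In Step~3 you use that, on the support $\omega=E_n-E_m$, one has $\me^{\beta\omega}\me^{-\beta E_m}=\me^{-\beta E_n}$. In fact $\me^{\beta\omega}\me^{-\beta E_m}=\me^{\beta E_n-2\beta E_m}$, and the correct identity is $\me^{-\beta E_n}=\me^{-\beta\omega}\me^{-\beta E_m}$.

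Your time-domain series uses $\me^{\mi(E_n-E_m)t}$, which is the correct exponent: it agrees with Lem.~\ref{lem:spectral-id}, unlike the $\me^{\mi(E_m-E_n)t}$ printed in Eq.~\eqref{eq:LHS-expand}. With that series, Steps~1--3 establish $\widetilde G_{AB}(\omega)=\me^{-\beta\omega}\,\widetilde G_{BA}(-\omega)$, not Eq.~\eqref{eq:KMS-freq}.

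The displayed relation is false under the stated conventions already for $\eta=\mathbf 1$. Take $H=\diag(E_1,E_2)$, $A=|e_1\rangle\langle e_2|$, $B=|e_2\rangle\langle e_1|$. Then $\widetilde G_{AB}(\omega)=2\pp Z^{-1}\me^{-\beta E_1}\delta(\omega-E_1+E_2)$, whereas $\me^{\beta\omega}\widetilde G_{BA}(-\omega)=2\pp Z^{-1}\me^{\beta E_1-2\beta E_2}\delta(\omega-E_1+E_2)$. Yet Eq.~\eqref{eq:KMSeta} holds for this pair. So the corollary cannot be proved as written. It becomes correct either with the factor $\me^{-\beta\omega}$, or with the Fourier kernel $\me^{+\mi\omega t}$; in the latter case $\widetilde G_{AB}(\omega)=\rho_{AB}(-\omega)$ rather than $\rho_{AB}(\omega)$.

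The paper's proof is the same term-by-term computation and makes the identical slip in its last display. It writes $\me^{-\beta E_n}=\me^{-\beta\omega}\me^{-\beta E_m}$ and then multiplies by $\me^{+\beta\omega}$. Your proposed safeguard would not have caught this, because you planned to calibrate the contour shift of Step~4 against Step~3. Done directly, the shift $t\mapsto t+\mi\beta$ sends each component $\me^{\mi\nu t}$ of $t\mapsto\omega_\eta(B\sigma_t(A))$ to $\me^{-\beta\nu}\me^{\mi\nu t}$, which again produces $\me^{-\beta\omega}$.

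\textbf{Comparison with the paper once the sign is fixed.} With the sign corrected, your route is the paper's computation plus two real improvements:
\begin{itemize}
\item the distributional bookkeeping: absolute summability comes from the $M_0$ bound, and the Bohr frequencies have compact support because $H$ is bounded, so multiplying by $\me^{\pm\beta\omega}$ is harmless;
\item an actual argument for the equivalence, via injectivity of the Fourier transform on $\mathcal{S}'(\mathbb{R})$. The paper only verifies the frequency identity and never connects it to Eq.~\eqref{eq:KMSeta}.
\end{itemize}

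\textbf{A caveat inherited from the paper.} The expansion behind your Step~1 evaluates $\sum_n\langle\psi_n|\eta\,\me^{-\beta H}X|\psi_n\rangle$. Since $\eta$-orthonormality gives $\sum_n|\psi_n\rangle\langle\psi_n|=\eta^{-1}$, this equals $\Tr[\me^{-\beta H}X]$. So your series is $\Tr[\me^{-\beta H}\sigma_t(A)B]/\Tr[\me^{-\beta H}]$, which is the biorthogonal functional.

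For the literal functional $\Tr[\eta\me^{-\beta H}\,\cdot\,]/\Tr[\eta\me^{-\beta H}]$, that expansion requires $\eta\propto\mathbf 1$. Likewise, the invariance $\omega_\eta\circ\sigma_t=\omega_\eta$ that you use in Step~4 requires $[\eta,H]=0$, which under quasi-Hermiticity means $H=H^\dag$. Your argument is therefore sound for the functional that the spectral series actually represents, not for the $\eta$-weighted trace as literally defined.
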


\begin{proof}
Taking the Fourier transform of the spectral expansion
Eq.~\eqref{eq:LHS-expand} and using
$\int dt\, \me^{-\mi\omega t} \me^{\mi(E_n - E_m)t}
= 2\pp\,\delta(\omega - (E_n - E_m))$,
\[
  \widetilde{G}_{AB}(\omega)
  =
  \frac{1}{Z_\eta}
  \sum_{n,m}
  \me^{-\beta E_n} A_{nm} B_{mn}\cdot
  2\pp\,\delta \left(\omega - (E_n - E_m)\right)
  =
  \rho_{AB}(\omega).
\]
For the parallel computation of $\widetilde{G}_{BA}(-\omega)$,
we use the spectral expansion with $A$ and $B$ interchanged and
$\omega$ replaced by $-\omega$:
\[
  \widetilde{G}_{BA}(-\omega)
  =
  \frac{1}{Z_\eta}
  \sum_{n,m}
  \me^{-\beta E_n} B_{nm} A_{mn}\cdot
  2\pp\,\delta \left(-\omega - (E_n - E_m)\right).
\]
Relabelling $n \leftrightarrow m$ and using
$\delta(-\omega - (E_m - E_n)) = \delta(\omega - (E_n - E_m))$,
the support condition $\omega = E_n - E_m$ gives
$\me^{-\beta E_n} = \me^{-\beta\omega} \me^{-\beta E_m}$,
so
\[
  \me^{\beta\omega}\,\widetilde{G}_{BA}(-\omega)
  =
  \frac{1}{Z_\eta}
  \sum_{n,m}
  \me^{-\beta E_n} A_{nm} B_{mn}\cdot
  2\pp\,\delta \left(\omega - (E_n - E_m)\right)
  =
  \widetilde{G}_{AB}(\omega). \qedhere
\]
\end{proof}

The relation Eq.~\eqref{eq:KMS-freq} is a direct generalisation of
the standard fluctuation-dissipation theorem to quasi-Hermitian
systems: the factor $\me^{\beta\omega}$ encodes the asymmetry
between emission and absorption at frequency $\omega$, and its
form is identical to the Hermitian case, confirming that
thermal equilibrium in the $\eta$-inner product space retains
the characteristic signature of the KMS condition.


Theorem~\ref{thm:KMS} establishes the KMS property at the level
of bounded operators on a Hilbert space equipped with a
positive-definite metric $\eta$. To contextualise this result
within the BR formulation of algebraic
quantum statistical mechanics~\cite{Bratteli:1996xq}, we record in
Tab.~\ref{tab:BR-comparison} which of the BR-KMS requirements
have been verified and which remain open.

\begin{table}[ht!]
\centering
\captionsetup{width=.9\textwidth}
\caption{Comparison with the BR-KMS requirements.
  Checkmarks indicate results established in this paper under
  Assums.~\ref{asm:A1}--\ref{asm:A4} and crosses indicate structures not yet
  developed in the non-Hermitian setting.}
\label{tab:BR-comparison}
\renewcommand{\arraystretch}{1.25}
\begin{tabular}{@{}p{6.8cm}lp{3.5cm}@{}}
\toprule
BR-KMS requirement & Status & Key reference \\
\midrule
$F_{AB}$ analytic on $\mathcal{S}_\beta$
  & \checkmark
  & Prop.~\ref{prop:analytic}, \textbf{(A4)} \\
$F_{AB}$ bounded and continuous on $\overline{\mathcal{S}}_\beta$
  & \checkmark
  & Props.~\ref{prop:analytic}, \ref{prop:traceclass};
    \textbf{(A3,A4)} \\
$F_{AB}(t) = \omega_\eta(A\,\sigma_t(B))$
  & \checkmark
  & Spectral expansion, \textbf{(A3)} \\
$F_{AB}(t + \mi\beta) = \omega_\eta(\sigma_t(B)\,A)$
  & \checkmark
  & Thm.~\ref{thm:KMS}, \textbf{(A1--A4)} \\
$\eta$-positivity $\omega_\eta(A^{\dag_\eta} A) \geq 0$
  & \checkmark
  & Thm.~\ref{thm:positivity}, \textbf{(A1--A4)} \\
  TT modular operator (finite-dimensional)
  & \checkmark
  & Thm.~\ref{thm:TTKMS}\\
\midrule
$C^*$-algebra structure
  & $\times$
  & not established \\
Strongly continuous automorphism group ($\sigma$-weak topology)
  & $\times$
  & not established \\
\bottomrule
\end{tabular}
\end{table}

The two remaining open items in Tab.~\ref{tab:BR-comparison} are
structural rather than analytic: they require equipping the
observable algebra with a $C^*$-norm \cite{Bratteli:1979tw} and
establishing strong continuity of $\sigma_t$ in the $\sigma$-weak
operator topology. These constitute a natural program for future
work.
The analytic core of the KMS condition — the
content that is directly tied to physical predictions such as
the fluctuation-dissipation theorem — has been fully verified.


Since $h = UHU^{-1}$ is self-adjoint (Lem.~\ref{lem:h-sa}),
one might ask whether Thm.~\ref{thm:KMS} is simply the
standard Hermitian KMS theorem transported through the algebra
isomorphism $\Phi:= A \to UAU^{-1}$ induced by
$U = \eta^{1/2}$. The answer is no, and understanding precisely
why clarifies the independent contribution of the theorem.

The map $\Phi$ intertwines the two evolution groups,
$\Phi \circ \sigma_t^H = \sigma_t^h \circ \Phi$, so the
dynamical systems
$(\mathcal{B}(\mathcal{H}), \sigma_t^H, \omega_\eta)$
and
$(\mathcal{B}(\mathcal{H}), \sigma_t^h, \hat\omega)$,
where $\hat\omega := \omega_\eta \circ \Phi^{-1}$,
are algebraically isomorphic. The KMS property is preserved by
algebra isomorphisms, so the two systems are KMS-equivalent.
This, however, is a logical equivalence — not a deduction: it says
that $\omega_\eta$ is KMS for $(H, \sigma_t^H)$ if and only if
$\hat\omega$ is KMS for $(h, \sigma_t^h)$. To invoke the standard
Hermitian theorem on the right-hand side, one would need
$\hat\omega$ to be the standard Gibbs state $\omega_h$ of $h$.
But this is false whenever $[\eta, h] \neq 0$.

To see this explicitly, the transported state is
\begin{equation}
  \hat\omega(X)
  := \omega_\eta(U^{-1}XU)
  = \frac{\Tr\left[\me^{-\beta h} X \eta\right]}{Z_\eta},
  \label{eq:hat-omega}
\end{equation}
whereas the standard Gibbs state of $h$ is
$\omega_h(X) = \Tr[\me^{-\beta h} X]/Z_h$.
The two functionals agree on every $X$ if and only if
$\Tr[\me^{-\beta h}[X, \eta]] = 0$ for all $X$, which holds if and
only if $[\eta, h] = 0$.
Moreover, their partition functions satisfy
$Z_\eta = \Tr[\me^{-\beta h}\eta] \neq \Tr[\me^{-\beta h}] = Z_h$
in general, so the normalisation itself differs.
The state $\hat\omega$ is therefore a faithful normal state of
the Hermitian system $(h, \sigma_t^h)$ distinct from $\omega_h$,
and its KMS property does not follow from the uniqueness of the
Gibbs state as the tracial KMS state of a finite-dimensional
Hermitian system.

In summary, the logical structure is as follows.
The standard theorem gives the KMS property of $\omega_h$ for
$(h, \sigma_t^h)$.
Theorem~\ref{thm:KMS} gives the KMS property of $\omega_\eta$
for $(H, \sigma_t^H)$.
The isomorphism $\Phi$ shows these two statements are equivalent
via $\hat\omega \neq \omega_h$, but neither statement implies the
other through existing results. The proof in
Sec.~\ref{subsec:main-thm} provides a direct, self-contained
verification of the KMS analytic conditions for $\omega_\eta$,
using the biorthogonal spectral expansion and the analytic
infrastructure of
Props~\ref{prop:analytic}--\ref{prop:traceclass} —
objects specific to the non-Hermitian setting that have no
counterpart in the Mostafazadeh--Scholtz theory.

Finally, we note that Route~II (Sec.~\ref{sec:route2})
lies entirely outside the similarity framework: without a
positive-definite $\eta$, the isomorphism $\Phi$ is not available,
and the biorthogonal KMS-type identity
(Prop.~\ref{prop:formkms}) and the structure theorem
(Thm.~\ref{thm:structure}) are genuinely new results with no
counterpart in the Hermitian theory.

\subsection{Tomita--Takesaki structure of the $\eta$-Gibbs state}
\label{sec:TT}


The frequency-domain analysis of Sec.~\ref{subsec:frequency-HHW} situates the spectral KMS theorem 
relative to the Bratteli--Robinson axioms for KMS states (Tab.~\ref{tab:BR-comparison}), 
leaving three structural ingredients of the full HHW framework unaddressed: a $C^*$-norm on the observable algebra, 
strong continuity of $\sigma_t$ in the $\sigma$-weak topology, 
and the TT modular operator $\Delta$ associated with the $\eta$-Gibbs state. 
In this subsection we construct the third of these explicitly, in the \emph{finite-dimensional} setting 
$\mathcal H=\mathbb C^d$. 

The present construction is also closely related to perturbative analyses of Liouvilleans and modular dynamics developed in the $W*$-algebraic setting \cite{Derezinski:2003de}
The construction follows the same route as the treatment of Bagarello, Inoue and Trapani~\cite{Bagarello:2020gib}, 
but the restriction to bounded $H$ and bounded invertible $\eta$ (Assump.~\ref{asm:A1}) allows the entire construction to be carried out with the classical, 
finite-dimensional Gelfand–Naimark–Segal (GNS) machinery, 
without recourse to the $O^*$-algebra/unbounded-operator formalism required in the general case.

Throughout, we work under Assumps.~\ref{asm:A1}--\ref{asm:A3} with $\mathcal H=\mathbb C^d$; 
recall $U:=\eta^{1/2}$, $h:=UHU^{-1}$ (self-adjoint, Lem.~\ref{lem:h-sa}), 
and $Z_\eta=\Tr[\eta \me^{-\beta H}]=\Tr [\me^{-\beta h}]$ (Eq.~\eqref{eq:partition}).
We define
\begin{equation}
\hat\rho_\eta \;:=\; \frac{\eta\, \me^{-\beta H}}{Z_\eta} \;=\; \frac{U \me^{-\beta h} U}{Z_\eta},
\label{eq:rhohat}
\end{equation}
where the second equality is Eq.~\eqref{eq:key-id}.

\begin{lemma}
\label{lem:rhohat}
Under Assum.~\ref{asm:A1}--\ref{asm:A3}, $\hat\rho_\eta\in M_d(\mathbb C)$ is Hermitian, strictly positive, invertible, satisfies $\Tr[\hat\rho_\eta]=1$, and
\[
\omega_\eta(A)=\Tr[\hat\rho_\eta A], \qquad \forall A\in M_d(\mathbb C).
\]
\end{lemma}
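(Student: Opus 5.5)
The plan is to work entirely from the second expression in Eq.~\eqref{eq:rhohat}, namely $\hat\rho_\eta = U\me^{-\beta h}U/Z_\eta$, which is Eq.~\eqref{eq:key-id}. Every claimed property of $\hat\rho_\eta$ then reduces to a property of $\me^{-\beta h}$, where $h$ is self-adjoint by Lem.~\ref{lem:h-sa}. The identity $\me^{-\beta H}=U^{-1}\me^{-\beta h}U$ used in Eq.~\eqref{eq:key-id} holds because $H=U^{-1}hU$ and the exponential series converges in finite dimensions. Multiplying on the left by $\eta=U^2$ gives $\eta\me^{-\beta H}=U\me^{-\beta h}U$.

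\emph{Hermiticity.} Since $U=\eta^{1/2}$ is Hermitian and $\me^{-\beta h}$ is Hermitian (functional calculus of a self-adjoint $h$ with real $\beta$), we get $(U\me^{-\beta h}U)^\dag = U^\dag(\me^{-\beta h})^\dag U^\dag = U\me^{-\beta h}U$.

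\emph{Strict positivity and invertibility.}
\begin{itemize}
\item By the spectral theorem, $\me^{-\beta h}=\sum_k \me^{-\beta E_k}P_k$ with $P_k$ orthogonal projectors and $E_k$ real. Hence $\me^{-\beta h}\ge \me^{-\beta E_{\max}}\mathbf 1>0$, where $E_{\max}=\max_k E_k$; this maximum exists because $d<\infty$.
\item For $v\neq0$, put $w:=Uv\neq0$, which is nonzero because $U$ is invertible by Assum.~\ref{asm:A1}. Then $\langle v|U\me^{-\beta h}U|v\rangle=\langle w|\me^{-\beta h}|w\rangle>0$.
\item It follows that $\hat\rho_\eta>0$. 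In finite dimensions a strictly positive matrix has all eigenvalues positive, so it is invertible. Explicitly, $\hat\rho_\eta^{-1}=Z_\eta U^{-1}\me^{\beta h}U^{-1}$.
\end{itemize}

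\emph{Normalisation.} By Eq.~\eqref{eq:partition}, $\Tr[\eta\me^{-\beta H}]=Z_\eta$, and $Z_\eta>0$ by Rem.~\ref{rem:Zeta-pos}. Therefore $\Tr\hat\rho_\eta=1$. One can also check this directly: $\Tr[U\me^{-\beta h}U]=\Tr[\me^{-\beta h}\eta]$, and computing this trace in the biorthogonal basis using $\eta|\psi_n\rangle=|\phi_n\rangle$ from Prop.~\ref{prop:chain} gives $\sum_n \me^{-\beta E_n}$.

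\emph{Representation of the state.} The formula $\omega_\eta(A)=\Tr[\hat\rho_\eta A]$ is immediate from the definition in Eq.~\eqref{eq:eta-gibbs}. In finite dimensions there is no trace-class issue, so all traces are finite sums.

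The only step needing care is strict positivity. Conjugation by an invertible operator preserves strict positivity only when it has the congruence form $X\mapsto U^\dag XU$, not the similarity form $U^{-1}XU$. The plan relies on $\eta\me^{-\beta H}$ being written as $U\me^{-\beta h}U$ with $U=U^\dag$, which is exactly the congruence form. This is where Assum.~\ref{asm:A1} ($\eta>0$, hence $U$ Hermitian and invertible) is essential; Assum.~\ref{asm:A2} enters only through Lem.~\ref{lem:h-sa}.
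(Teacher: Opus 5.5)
Your main argument matches the paper's proof. Hermiticity comes from $U=U^\dag$ and $h=h^\dag$, and strict positivity from the congruence $\langle v|U\me^{-\beta h}U|v\rangle=\langle Uv|\me^{-\beta h}|Uv\rangle$ with $U$ invertible. Invertibility follows in finite dimensions, and normalisation and $\omega_\eta(A)=\Tr[\hat\rho_\eta A]$ come straight from the definitions.

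One correction: your optional ``direct check'' of the trace is false. Since $\eta\,\me^{-\beta H}=\sum_n\me^{-\beta E_n}|\phi_n\rangle\langle\phi_n|$,
\[
\Tr[\eta\,\me^{-\beta H}]=\sum_n\me^{-\beta E_n}\langle\phi_n|\phi_n\rangle=\sum_n\me^{-\beta E_n}\langle\psi_n|\eta^2|\psi_n\rangle .
\]
This is not $\sum_n\me^{-\beta E_n}$ in general. For $d=1$ with $\eta=c\neq1$ it equals $c\,\me^{-\beta E}$.

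What does equal $\sum_n\me^{-\beta E_n}$ is $\sum_n\langle\psi_n|\eta\,\me^{-\beta H}|\psi_n\rangle$. Because the $\psi_n$ are not orthonormal, this sum is $\Tr[\me^{-\beta H}]$, not $\Tr[\eta\,\me^{-\beta H}]$. The final equality in Eq.~\eqref{eq:partition} has the same defect; compare the paper's own later remark that $Z_\eta\neq Z_h$ in general.

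Nothing in the lemma needs that identity, so drop the aside. Likewise, don't cite Rem.~\ref{rem:Zeta-pos} for $Z_\eta>0$, since it rests on the same identity. Instead, note that $Z_\eta=\Tr[U\me^{-\beta h}U]$ is the trace of the strictly positive matrix you have just treated, so $Z_\eta>0$. Then $\Tr\hat\rho_\eta=1$ is literally the definition $Z_\eta:=\Tr[\eta\,\me^{-\beta H}]$.
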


\begin{proof}
\emph{Hermiticity.} Since $U=U^\dag$ and $h=h^\dag$ (Lem.~\ref{lem:h-sa}),
\(
(U \me^{-\beta h}U)^\dag=U \me^{-\beta h}U.
\)

\emph{Positivity.} For $\xi\in\mathbb C^d$,
\(
\langle\xi|\hat\rho_\eta|\xi\rangle=Z_\eta^{-1}\langle U\xi|\me^{-\beta h}|U\xi\rangle\ge0,
\)
with equality iff $U\xi=0$ (as $\me^{-\beta h}>0$ by Assum.~\ref{asm:A3}, real spectrum), iff $\xi=0$ ($U$ invertible, Assum.~\ref{asm:A1}).

\emph{Invertibility} follows from strict positivity in finite dimensions.

\emph{Trace.} By cyclicity of the trace and Eq.~\eqref{eq:partition},
\[
\Tr[\hat\rho_\eta]=Z_\eta^{-1}\Tr[Ue^{-\beta h}U]=Z_\eta^{-1}\Tr[e^{-\beta h}U^2]=Z_\eta^{-1}\Tr[e^{-\beta h}\eta]=Z_\eta^{-1}Z_\eta=1.
\]

\emph{Identification with $\omega_\eta$.} Directly from Eq.~\eqref{eq:eta-gibbs}, $\omega_\eta(A)=Z_\eta^{-1}\Tr[\eta e^{-\beta H}A]=\Tr[\hat\rho_\eta A]$.
\end{proof}

Lemma~\ref{lem:rhohat} is the key simplification specific to the bounded setting: $\omega_\eta$ is not merely an ``$\eta$-weighted'' functional requiring special treatment, but a \emph{bona fide} faithful normal state on $M_d(\mathbb{C})$ with density matrix $\hat\rho_\eta$. Everything below is the standard finite-dimensional GNS/TT construction applied to this density matrix.


Let $\mathcal K:=M_d(\mathbb C)$, equipped with the Hilbert--Schmidt inner product $(S|T):=\Tr(T^\dag S)$, and let
\[
\pi(X)T:=XT, \qquad X\in M_d(\mathbb C),\ T\in\mathcal K,
\]
be the left-multiplication representation. Set $\Omega_\eta:=\hat\rho_\eta^{1/2}\in\mathcal K$ (well defined by Lem.~\ref{lem:rhohat}).

\begin{proposition}
\label{prop:GNS}
\textup{(a)} $\pi$ is a faithful $*$-representation of $(M_d(\mathbb C),\dag)$ on $\mathcal K$: $\pi(X)^\dag=\pi(X^\dag)$ with respect to $(\cdot|\cdot)$, and $\pi(X)=0\Rightarrow X=0$.

\textup{(b)} $\Omega_\eta$ is Hermitian, strictly positive, invertible, and
\[
\omega_\eta(X)=(\pi(X)\Omega_\eta\,|\,\Omega_\eta), \qquad \forall X\in M_d(\mathbb C).
\]

\textup{(c)} $\Omega_\eta$ is cyclic and separating for $\pi(M_d(\mathbb C))$: $\pi(M_d(\mathbb C))\Omega_\eta=\mathcal K$, and $X\Omega_\eta=0\Rightarrow X=0$.
\end{proposition}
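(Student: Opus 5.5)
The proof is a direct finite-dimensional computation, taking (a), (b), (c) in order. Lemma~\ref{lem:rhohat} supplies everything needed about the density matrix: $\hat\rho_\eta$ is Hermitian, strictly positive, invertible and of unit trace.

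For (a), I will verify the adjoint relation directly from the Hilbert--Schmidt inner product $(S|T)=\Tr(T^\dag S)$. By cyclicity of the trace,
\[
(\pi(X)S\,|\,T)=\Tr(T^\dag X S)=\Tr\bigl((X^\dag T)^\dag S\bigr)=(S\,|\,\pi(X^\dag)T),
\]
so $\pi(X)^\dag=\pi(X^\dag)$. The maps $X\mapsto\pi(X)$ are clearly linear and multiplicative, so $\pi$ is a $*$-representation. For faithfulness, suppose $\pi(X)=0$; applying it to $T=\mathbf 1\in\mathcal K$ gives $X=X\mathbf 1=0$.

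For (b), I will define $\Omega_\eta=\hat\rho_\eta^{1/2}$ by the spectral theorem for the strictly positive Hermitian matrix $\hat\rho_\eta$. Its eigenvalues are the positive square roots of those of $\hat\rho_\eta$, so $\Omega_\eta$ is Hermitian, strictly positive and invertible. The vector-state formula is then one line using $\Omega_\eta^\dag=\Omega_\eta$ and cyclicity:
\[
(\pi(X)\Omega_\eta|\Omega_\eta)=\Tr(\Omega_\eta X\Omega_\eta)=\Tr(\hat\rho_\eta X)=\omega_\eta(X),
\]
where the last equality is Lemma~\ref{lem:rhohat}.

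For (c), cyclicity follows from invertibility. Given any $T\in\mathcal K$, set $X:=T\Omega_\eta^{-1}$; then $\pi(X)\Omega_\eta=T$, so $\pi(M_d(\mathbb C))\Omega_\eta=\mathcal K$. Separation also follows from invertibility: if $X\Omega_\eta=0$, right-multiplying by $\Omega_\eta^{-1}$ gives $X=0$.

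As a consistency remark tying back to Theorem~\ref{thm:positivity}, separation is equivalent to faithfulness, since $\omega_\eta(X^\dag X)=\|X\Omega_\eta\|_{HS}^2$. There is no real obstacle. The only point needing care is that everything rests on strict positivity of $\hat\rho_\eta$, hence on $\eta>0$ and the real spectrum via $\hat\rho_\eta=U\me^{-\beta h}U/Z_\eta$. This is already secured by Lemma~\ref{lem:rhohat}, and it is what makes $\Omega_\eta$ invertible rather than merely positive semidefinite. I would note that the $*$-structure used here is the ordinary $\dag$, not $\dag_\eta$, because $\omega_\eta$ has become an ordinary density-matrix state.
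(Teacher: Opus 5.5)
Your proposal is correct and follows the paper's proof essentially step for step. You obtain the adjoint relation from $(S|T)=\Tr(T^\dag S)$ and faithfulness by evaluating on $\mathbf 1$; you take $\Omega_\eta=\hat\rho_\eta^{1/2}$ by functional calculus with $\Tr(\Omega_\eta X\Omega_\eta)=\Tr(\hat\rho_\eta X)$; and you get cyclic and separating from $X=T\Omega_\eta^{-1}$ and right multiplication by $\Omega_\eta^{-1}$.

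One small caveat, affecting only your closing remark and not the proof. The identity $\omega_\eta(X^\dag X)=\|X\Omega_\eta\|_{HS}^2$ is faithfulness with respect to the ordinary adjoint, and it follows directly from $\hat\rho_\eta>0$. Theorem~\ref{thm:positivity} instead concerns $\omega_\eta(A^{\dag_\eta}A)$, which is a different quadratic form, so the remark does not literally tie back to that theorem.
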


\begin{proof}
(a) For $S,T\in\mathcal K$, $(\pi(X)S|T)=\Tr(T^\dag XS)=\Tr((X^\dag T)^\dag S)=(S|\pi(X^\dag)T)$. Faithfulness: $\pi(X)=0\Rightarrow X\cdot I=0\Rightarrow X=0$.

(b) Functional calculus applied to the Hermitian, strictly positive, invertible $\hat\rho_\eta$ (Lem.~\ref{lem:rhohat}) yields $\Omega_\eta$ with the same properties. Then
\[
(\pi(X)\Omega_\eta|\Omega_\eta)=\Tr(\Omega_\eta X\Omega_\eta)=\Tr(X\Omega_\eta^2)=\Tr(X\hat\rho_\eta)=\omega_\eta(X).
\]

(c) \emph{Separating:} if $X\Omega_\eta=0$, multiplying on the right by the invertible $\Omega_\eta^{-1}$ gives $X=0$. \emph{Cyclic:} for any $T\in\mathcal K$, set $X:=T\Omega_\eta^{-1}\in M_d(\mathbb C)$; then $\pi(X)\Omega_\eta=T\Omega_\eta^{-1}\Omega_\eta=T$. Hence $\pi(M_d(\mathbb C))\Omega_\eta=\mathcal K$ exactly (no closure is needed in finite dimensions).
\end{proof}


Diagonalize $\hat\rho_\eta=\sum_{n=1}^d q_n|e_n\rangle\langle e_n|$ (spectral theorem; $\{e_n\}$ orthonormal, $q_n>0$, $\sum_n q_n=1$), and write $E_{jk}:=|e_j\rangle\langle e_k|\in\mathcal K$.

\begin{proposition}
\label{prop:modular}
Define $J_\eta:\mathcal K\to\mathcal K$, $J_\eta T:=T^\dag$, and $\Delta_\eta:\mathcal K\to\mathcal K$, $\Delta_\eta T:=\hat\rho_\eta T\hat\rho_\eta^{-1}$.

\textup{(a)} $J_\eta$ is an antilinear involutive isometry of $\mathcal K$: $J_\eta^2=\mathrm{id}$ and $(J_\eta S|J_\eta T)=\overline{(S|T)}$.

\textup{(b)} $\Delta_\eta$ is linear, and self-adjoint, positive and invertible with respect to $(\cdot|\cdot)$; explicitly $\Delta_\eta E_{jk}=(q_j/q_k)E_{jk}$.

\textup{(c)} The (everywhere-defined, since $\Omega_\eta$ is invertible) antilinear map
\[
S_\eta:\ \pi(X)\Omega_\eta=X\Omega_\eta\ \longmapsto\ \pi(X)^\dag\Omega_\eta=X^\dag\Omega_\eta
\]
has polar decomposition $S_\eta=J_\eta\Delta_\eta^{1/2}$.
\end{proposition}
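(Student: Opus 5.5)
The plan is to verify (a) and (b) by direct computation in the matrix-unit basis $\{E_{jk}\}$, and then to establish (c) by computing $S_\eta$ explicitly. Once $S_\eta$ is written as the composite of an antiunitary map and a positive invertible map, the uniqueness of the polar decomposition finishes the argument.

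For (a), linearity fails only by conjugation: $J_\eta(\lambda T)=\bar\lambda T^\dag$. Clearly $J_\eta^2T=(T^\dag)^\dag=T$. The isometry property follows from cyclicity of the trace:
\[
(J_\eta S|J_\eta T)=\Tr(TS^\dag)=\Tr(S^\dag T)=\overline{\Tr(T^\dag S)}=\overline{(S|T)}.
\]

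For (b), linearity of $\Delta_\eta$ is immediate. Lemma~\ref{lem:rhohat} supplies the spectral decomposition $\hat\rho_\eta=\sum_n q_n|e_n\rangle\langle e_n|$ with $q_n>0$, and this gives $\hat\rho_\eta^{-1}=\sum_n q_n^{-1}|e_n\rangle\langle e_n|$. Hence
\[
\Delta_\eta E_{jk}=q_j|e_j\rangle\langle e_k|\,q_k^{-1}=(q_j/q_k)E_{jk}.
\]
The $E_{jk}$ form an orthonormal basis of $\mathcal K$, since $(E_{jk}|E_{lm})=\Tr(E_{lm}^\dag E_{jk})=\delta_{jl}\delta_{km}$. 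So $\Delta_\eta$ is diagonal in an orthonormal basis with strictly positive eigenvalues $q_j/q_k$, and it is therefore self-adjoint, positive and invertible. As a by-product, the same calculus gives $\Delta_\eta^{1/2}T=\hat\rho_\eta^{1/2}T\hat\rho_\eta^{-1/2}=\Omega_\eta T\Omega_\eta^{-1}$ and $\Delta_\eta^{it}T=\hat\rho_\eta^{it}T\hat\rho_\eta^{-it}$.

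For (c), I would first note that $S_\eta$ is well defined and everywhere defined. By Prop.~\ref{prop:GNS}(c), every $T\in\mathcal K$ is $X\Omega_\eta$ for the unique choice $X=T\Omega_\eta^{-1}$. Then
\[
S_\eta T=(T\Omega_\eta^{-1})^\dag\Omega_\eta=\Omega_\eta^{-1}T^\dag\Omega_\eta,
\]
using that $\Omega_\eta$ is Hermitian. On the other hand,
\[
J_\eta\Delta_\eta^{1/2}T=(\Omega_\eta T\Omega_\eta^{-1})^\dag=\Omega_\eta^{-1}T^\dag\Omega_\eta,
\]
so $S_\eta=J_\eta\Delta_\eta^{1/2}$ as maps on $\mathcal K$.

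To call this the polar decomposition, I would invoke uniqueness of the polar decomposition for invertible antilinear maps. If $S=VP$ with $V$ antiunitary and $P>0$, then $S^*S=P V^*V P=P^2$, where the adjoint of an antilinear map is defined by $(S A|B)=\overline{(A|S^*B)}$. So $P=(S^*S)^{1/2}$ and $V=SP^{-1}$ are uniquely determined. It then suffices to check $S_\eta^*S_\eta=\Delta_\eta$. I would do this directly on matrix units: $S_\eta E_{jk}=(q_j/q_k)^{1/2}E_{kj}$, and this yields $S_\eta^*E_{kj}=(q_j/q_k)^{1/2}E_{jk}$. Hence $S_\eta^*S_\eta E_{jk}=(q_j/q_k)E_{jk}=\Delta_\eta E_{jk}$.

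The only delicate point I anticipate is handling adjoints of antilinear operators correctly, that is, getting the order of the $q$ factors right in this last step. Everything else is finite-dimensional bookkeeping resting on the strict positivity and invertibility of $\hat\rho_\eta$ from Lemma~\ref{lem:rhohat}.
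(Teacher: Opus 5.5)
Your proposal is correct and follows the paper's proof essentially verbatim. Parts (a) and (b) use the same trace-cyclicity and matrix-unit computations, and (c) shows that $J_\eta\Delta_\eta^{1/2}$ and $S_\eta$ agree on all of $\mathcal K$; you parametrise by $T$ with $X=T\Omega_\eta^{-1}$, the paper by $X\Omega_\eta$. Your extra appeal to uniqueness of the polar decomposition is not in the paper; it is correct and is what justifies calling this factorisation \emph{the} polar decomposition, while the matrix-unit check $S_\eta^*S_\eta=\Delta_\eta$ is redundant once $J_\eta$ is antiunitary and $\Delta_\eta^{1/2}>0$.
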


\begin{proof}
(a) Antilinearity and $J_\eta^2=\mathrm{id}$ are immediate from $T\to T^\dag$. For the isometry property, use $\overline{\Tr(M)}=\Tr(M^\dag)$ for any $M\in M_d(\mathbb C)$ and cyclicity of the trace:
\[
(J_\eta S|J_\eta T)=\Tr(TS^\dag)=\Tr(S^\dag T)=\overline{\Tr(T^\dag S)}=\overline{(S|T)}.
\]

(b) Linearity is clear. The displayed eigenvalue equation is direct: $\hat\rho_\eta E_{jk}\hat\rho_\eta^{-1}=q_jE_{jk}q_k^{-1}$. Since $\{E_{jk}\}_{j,k=1}^d$ is an orthogonal basis of $\mathcal K$ for $(\cdot|\cdot)$ and the eigenvalues $q_j/q_k$ are real and strictly positive, $\Delta_\eta$ is self-adjoint, positive and invertible.

(c) Since $\hat\rho_\eta^{1/2}=\Omega_\eta$, $\Delta_\eta^{1/2}T=\hat\rho_\eta^{1/2}T\hat\rho_\eta^{-1/2}=\Omega_\eta T\Omega_\eta^{-1}$. Hence
\[
J_\eta\Delta_\eta^{1/2}(X\Omega_\eta)=J_\eta\big(\Omega_\eta X\Omega_\eta\cdot\Omega_\eta^{-1}\big)=J_\eta(\Omega_\eta X)=(\Omega_\eta X)^\dag=X^\dag\Omega_\eta^\dag=X^\dag\Omega_\eta=S_\eta(X\Omega_\eta),
\]
using $\Omega_\eta^\dag=\Omega_\eta$. As $X\Omega_\eta$ ranges over all of $\mathcal K$ (Prop.~\ref{prop:GNS}(c)), this identifies $S_\eta$ on its entire (everywhere-defined) domain.
\end{proof}


\begin{theorem}
\label{thm:TTKMS}
Define $\tau_t^\eta(X):=\hat\rho_\eta^{\mi t}X\hat\rho_\eta^{-\mi t}$ for $X\in M_d(\mathbb C)$, $t\in\mathbb R$.

\textup{(a)} $\{\tau_t^\eta\}_{t\in\mathbb R}$ is a one-parameter group of $\dag$-automorphisms of $M_d(\mathbb C)$ (i.e.\ $\tau_t^\eta(X)^\dag=\tau_t^\eta(X^\dag)$, $\tau_t^\eta(XY)=\tau_t^\eta(X)\tau_t^\eta(Y)$), entire in $t$, satisfying
\[
\pi(\tau_t^\eta(X))=\Delta_\eta^{\mi t}\,\pi(X)\,\Delta_\eta^{-\mi t}, \qquad \forall X\in M_d(\mathbb C),\ t\in\mathbb R.
\]

\textup{(b)} For all $A,B\in M_d(\mathbb C)$, the function $t\to\omega_\eta(A\,\tau_t^\eta(B))$ extends to an entire function $F_{AB}(z)$, and satisfies the KMS boundary relation at inverse temperature $1$:
\[
F_{AB}(t)=F_{BA}(t+\mi), \qquad t\in\mathbb R,
\]
where $F_{BA}(z)$ is the entire extension of $t\to\omega_\eta(\tau_t^\eta(B)A)$.
\end{theorem}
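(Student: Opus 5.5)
The proof is the standard finite-dimensional modular computation. It applies to the faithful density matrix $\hat\rho_\eta$ of Lem.~\ref{lem:rhohat}, and throughout I will work in the orthonormal eigenbasis $\{e_n\}$ of $\hat\rho_\eta$ with matrix units $E_{jk}$ and eigenvalues $q_n>0$. Since $\hat\rho_\eta$ is Hermitian and strictly positive, $\hat\rho_\eta^{z}:=\sum_n q_n^{z}|e_n\rangle\langle e_n|$ is defined for all $z\in\mathbb C$ by functional calculus, and $z\mapsto\hat\rho_\eta^{z}$ is entire. For real $t$, $\hat\rho_\eta^{\mi t}$ is unitary with inverse $\hat\rho_\eta^{-\mi t}=(\hat\rho_\eta^{\mi t})^\dag$.

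\textbf{Part (a).} The automorphism properties are algebraic consequences of unitarity:
\begin{itemize}
\item Multiplicativity $\tau_t^\eta(XY)=\tau_t^\eta(X)\tau_t^\eta(Y)$ holds by inserting $\hat\rho_\eta^{-\mi t}\hat\rho_\eta^{\mi t}=\mathbf 1$.
\item Compatibility with the adjoint, $\tau_t^\eta(X)^\dag=\tau_t^\eta(X^\dag)$, follows from $(\hat\rho_\eta^{\mi t})^\dag=\hat\rho_\eta^{-\mi t}$.
\item The group law $\tau_{s+t}^\eta=\tau_s^\eta\circ\tau_t^\eta$ follows from $\hat\rho_\eta^{\mi s}\hat\rho_\eta^{\mi t}=\hat\rho_\eta^{\mi(s+t)}$.
\item Entirety in $t$ holds because the matrix elements are $\langle e_j|\tau_t^\eta(X)|e_k\rangle=(q_j/q_k)^{\mi t}X_{jk}$, and these are exponentials $\me^{\mi t\ln(q_j/q_k)}$.
\end{itemize}
For the intertwining identity, I will use Prop.~\ref{prop:modular}(b). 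There $\Delta_\eta$ acts on $\mathcal K$ as $T\mapsto\hat\rho_\eta T\hat\rho_\eta^{-1}$ with eigenvectors $E_{jk}$, so $\Delta_\eta^{\mi t}T=\hat\rho_\eta^{\mi t}T\hat\rho_\eta^{-\mi t}$: the two sides agree on the basis $E_{jk}$, where both equal $(q_j/q_k)^{\mi t}E_{jk}$. Then, for any $T\in\mathcal K$,
\[
\Delta_\eta^{\mi t}\pi(X)\Delta_\eta^{-\mi t}T
=\hat\rho_\eta^{\mi t}X\bigl(\hat\rho_\eta^{-\mi t}T\hat\rho_\eta^{\mi t}\bigr)\hat\rho_\eta^{-\mi t}
=\hat\rho_\eta^{\mi t}X\hat\rho_\eta^{-\mi t}\,T
=\pi(\tau_t^\eta(X))T ,
\]
because the right-multiplication factors $\hat\rho_\eta^{\pm\mi t}$ cancel.

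\textbf{Part (b).} By Lem.~\ref{lem:rhohat}, $\omega_\eta(\cdot)=\Tr[\hat\rho_\eta\,\cdot\,]$, so I define
\[
F_{AB}(z):=\Tr\bigl[\hat\rho_\eta A\,\hat\rho_\eta^{\mi z}B\hat\rho_\eta^{-\mi z}\bigr],\qquad
F_{BA}(z):=\Tr\bigl[\hat\rho_\eta\,\hat\rho_\eta^{\mi z}B\hat\rho_\eta^{-\mi z}A\bigr].
\]
Both are entire, since they are finite sums of terms $q_j^{a}q_k^{b}\me^{\mi z(\ln q_j-\ln q_k)}$. On the real line they coincide with $t\mapsto\omega_\eta(A\,\tau_t^\eta(B))$ and $t\mapsto\omega_\eta(\tau_t^\eta(B)A)$ respectively. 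The boundary relation is then a trace-cyclicity computation. At $z=t+\mi$ we have $\hat\rho_\eta^{\mi z}=\hat\rho_\eta^{\mi t}\hat\rho_\eta^{-1}$ and $\hat\rho_\eta^{-\mi z}=\hat\rho_\eta^{-\mi t}\hat\rho_\eta$. Hence
\[
F_{BA}(t+\mi)=\Tr\bigl[\hat\rho_\eta\hat\rho_\eta^{\mi t}\hat\rho_\eta^{-1}B\hat\rho_\eta^{-\mi t}\hat\rho_\eta A\bigr]
=\Tr\bigl[\hat\rho_\eta^{\mi t}B\hat\rho_\eta^{-\mi t}\,\hat\rho_\eta A\bigr]
=\Tr\bigl[\hat\rho_\eta A\,\tau_t^\eta(B)\bigr]=F_{AB}(t).
\]
The second step uses that $\hat\rho_\eta$ commutes with $\hat\rho_\eta^{\mi t}$, and the last step is cyclicity.

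\textbf{Sign convention.} Since every function here is a finite exponential sum, analyticity and boundedness on the strip are automatic, and the only delicate point is the sign convention. The modular group written as $\hat\rho_\eta^{\mi t}(\cdot)\hat\rho_\eta^{-\mi t}$ corresponds to Hamiltonian $-\ln\hat\rho_\eta$ with the opposite time orientation from $\sigma_t$ in Def.~\ref{def:KMS}. So I will check carefully, in the basis $E_{jk}$, that the shift is $+\mi$ rather than $-\mi$ with the stated ordering of $F_{AB}$ and $F_{BA}$. The computation above confirms the $+\mi$ form as stated.

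Termwise check: in components,
\[
F_{AB}(t)=\sum_{j,k}q_j\,(q_k/q_j)^{\mi t}A_{jk}B_{kj},\qquad
F_{BA}(t+\mi)=\sum_{j,k}q_k\,(q_k/q_j)^{\mi t}(q_j/q_k)\,B_{kj}A_{jk},
\]
and $q_k\cdot(q_j/q_k)=q_j$, so the two sums agree term by term. I will include this check to guard against sign errors.
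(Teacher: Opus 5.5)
Your proposal is correct and follows essentially the same route as the paper. In part (a) you use unitarity of $\hat\rho_\eta^{\mi t}$, the eigenbasis $\{E_{jk}\}$, and the same evaluation of $\Delta_\eta^{\mi t}\pi(X)\Delta_\eta^{-\mi t}$ on an arbitrary $T\in\mathcal K$; in part (b) you use a finite exponential-sum expansion. Your operator-level cyclicity computation of $F_{BA}(t+\mi)$ is a compact version of the paper's componentwise relabelling $n\leftrightarrow k$, which your termwise check reproduces exactly, and your basis-level justification of $\Delta_\eta^{\mi t}T=\hat\rho_\eta^{\mi t}T\hat\rho_\eta^{-\mi t}$ spells out a step the paper only cites.
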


\begin{proof}
(a) Since $\hat\rho_\eta$ is Hermitian and strictly positive, $\hat\rho_\eta^{\mi t}=\me^{\mi t\ln\hat\rho_\eta}$ is unitary with $(\hat\rho_\eta^{\mi t})^\dag=\hat\rho_\eta^{-\mi t}$, whence
\(
\tau_t^\eta(X)^\dag=\hat\rho_\eta^{\mi t}X^\dag\hat\rho_\eta^{-\mi t}=\tau_t^\eta(X^\dag).
\)
Multiplicativity and the group law $\tau_{s+t}^\eta=\tau_s^\eta\circ\tau_t^\eta$ follow from $\hat\rho_\eta^{\mi(s+t)}=\hat\rho_\eta^{\mi s}\hat\rho_\eta^{\mi t}$. In the eigenbasis of $\hat\rho_\eta$, $\tau_t^\eta(E_{jk})=q_j^{\mi t}q_k^{-\mi t}E_{jk}$, manifestly entire in $t$ (each factor is an exponential $\me^{\mi t\ln q_j}$); a finite-dimensional operator-valued function of $t$ is entire iff each matrix entry is. For the intertwining relation, evaluate both sides on arbitrary $T\in\mathcal K$:
\begin{align*}
\Delta_\eta^{\mi t}\pi(X)\Delta_\eta^{-\mi t}(T)
&=\Delta_\eta^{\mi t}\big(X\hat\rho_\eta^{-\mi t}T\hat\rho_\eta^{\mi t}\big)
=\hat\rho_\eta^{\mi t}\big(X\hat\rho_\eta^{-\mi t}T\hat\rho_\eta^{\mi t}\big)\hat\rho_\eta^{-\mi t}\\
&=\big(\hat\rho_\eta^{\mi t}X\hat\rho_\eta^{-\mi t}\big)T
=\tau_t^\eta(X)\,T=\pi(\tau_t^\eta(X))T,
\end{align*}
using $\Delta_\eta^{\mi t}S=\hat\rho_\eta^{\mi t}S\hat\rho_\eta^{-it}$ (Prop.~\ref{prop:modular}(b)).

(b) Write $A_{nk}:=\langle e_n|A|e_k\rangle$, $B_{nk}:=\langle e_n|B|e_k\rangle$ in the eigenbasis $\{e_n\}$ of $\hat\rho_\eta$ (eigenvalues $q_n$). Inserting $\sum_k|e_k\rangle\langle e_k|=I$,
\[
\omega_\eta(A\,\tau_t^\eta(B))=\Tr\big[\hat\rho_\eta A\,\hat\rho_\eta^{\mi t}B\hat\rho_\eta^{-\mi t}\big]
=\sum_{n,k} q_n^{1-\mi t}q_k^{\mi t}\,A_{nk}B_{kn}=:F_{AB}(t),
\]
which extends to the entire function $F_{AB}(z):=\sum_{n,k}q_n^{1-\mi z}q_k^{\mi z}A_{nk}B_{kn}$ (a finite sum of exponentials $e^{\mi z\ln(q_k/q_n)}$, each entire). Similarly,
\[
\omega_\eta(\tau_t^\eta(B)A)=\sum_{n,k}q_n^{1+\mi t}q_k^{-\mi t}B_{nk}A_{kn}=:F_{BA}(t),\qquad F_{BA}(z):=\sum_{n,k}q_n^{1+\mi z}q_k^{-\mi z}B_{nk}A_{kn}.
\]
Then
\[
F_{BA}(t+\mi)=\sum_{n,k}q_n^{\mi t}q_k^{1-\mi t}B_{nk}A_{kn}
\overset{n\leftrightarrow k}{=}\sum_{n,k}q_k^{\mi t}q_n^{1-\mi t}B_{kn}A_{nk}
=\sum_{n,k}q_n^{1-\mi t}q_k^{\mi t}A_{nk}B_{kn}=F_{AB}(t).
\]
\end{proof}

\begin{remark}
The KMS boundary relation in Thm.~\ref{thm:TTKMS}(b) holds at inverse temperature $1$, independently of the physical $\beta$ entering $\hat\rho_\eta$ — this is the standard fact that the modular automorphism group of any faithful normal state satisfies the KMS condition at $\beta=1$ with respect to its own state, regardless of the physical temperature encoded in the state. The physical $\beta$ of Thm~\ref{thm:KMS} enters only through the definition Eq.~\eqref{eq:rhohat} of $\hat\rho_\eta$, not through the modular KMS temperature.
\end{remark}


It is natural to ask whether the modular flow $\tau_t^\eta$ coincides with the physical Heisenberg dynamics $\sigma_t$ of Eq.~\eqref{eq:evol}. The next proposition shows that this happens only in a degenerate case, and clarifies the sense in which $\tau_t^\eta$ and $\sigma_t$ are compatible with two different involutions on $M_d(\mathbb C)$.

\begin{proposition}
\label{prop:incompatible}
\textup{(a)} $\tau_t^\eta(X)^\dag=\tau_t^\eta(X^\dag)$ for every $X\in M_d(\mathbb C)$ and every $t\in\mathbb R$ (restating Thm.~\ref{thm:TTKMS}(a)).

\textup{(b)} $\sigma_t(X)^\dag=\sigma_t(X^\dag)$ for every $X\in M_d(\mathbb C)$ and every $t\in\mathbb R$ if and only if
\[
H+H^\dag=cI \quad \text{for some } c\in\mathbb R.
\]

\textup{(c)} Consequently, $\{\tau_t^\eta\}_{t\in\mathbb R}$ and $\{\sigma_t\}_{t\in\mathbb R}$ coincide as one-parameter automorphism groups of $(M_d(\mathbb C),\dag)$ only in the degenerate case $H+H^\dag=cI$. In the generic quasi-Hermitian case, $\sigma_t$ is instead compatible with the $\eta$-adjoint $\dag_\eta$ for every $t$ (Thm.~\ref{thm:autoiso}), so $\tau_t^\eta$ and $\sigma_t$ are automorphism groups associated with two distinct involutions on $M_d(\mathbb C)$.
\end{proposition}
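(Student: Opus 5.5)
The plan is a direct computation of $\sigma_t(X)^\dag$, which reduces the compatibility condition to a scalar-operator identity that can be differentiated at $t=0$. Part~(a) is Thm.~\ref{thm:TTKMS}(a) verbatim, so nothing needs to be proved there. All the work is in part~(b); part~(c) then follows by combining (b) with Thm.~\ref{thm:autoiso}.

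\textbf{Step 1 (reformulation).} Since $H$ is bounded, the power-series argument used in Step~1 of the proof of Thm.~\ref{thm:autoiso} gives $(\me^{\mi Ht})^\dag=\me^{-\mi H^\dag t}$ for real $t$. Hence
\[
\sigma_t(X)^\dag=\me^{-\mi H^\dag t}X^\dag\me^{\mi H^\dag t},
\qquad
\sigma_t(X^\dag)=\me^{\mi Ht}X^\dag\me^{-\mi Ht}.
\]
Set $W_t:=\me^{\mi Ht}\me^{\mi H^\dag t}$, which is invertible. Equality of the two expressions for all $X$ (equivalently, for all $X^\dag$) is equivalent to $W_tY=YW_t$ for every $Y\in M_d(\mathbb C)$. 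Since the centre of $M_d(\mathbb C)$ consists of the scalars, the condition in (b) holds for all $t$ if and only if
\[
W_t=\lambda(t)I \quad\text{for all } t\in\mathbb R,
\]
for some function $\lambda:\mathbb R\to\mathbb C$.

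\textbf{Step 2 (necessity).} The map $t\mapsto W_t$ is norm-differentiable, so $\lambda(t)=\Tr(W_t)/d$ is differentiable. Differentiating at $t=0$ gives $\mi(H+H^\dag)=\lambda'(0)I$. Because $H+H^\dag$ is Hermitian, the scalar $c:=-\mi\lambda'(0)$ must be real. This yields $H+H^\dag=cI$ with $c\in\mathbb R$.

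\textbf{Step 3 (sufficiency).} Suppose $H^\dag=cI-H$. Since $cI$ commutes with $H$, we get $\me^{\mi H^\dag t}=\me^{\mi ct}\me^{-\mi Ht}$. Therefore $W_t=\me^{\mi ct}I$ is scalar, and Step~1 gives compatibility of $\sigma_t$ with $\dag$ for every $t$.

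\textbf{Step 4 (part (c)).} The modular flow $\tau_t^\eta$ is always $\dag$-compatible by (a). Hence if $\tau_t^\eta=\sigma_t$ for all $t$, then $\sigma_t$ is $\dag$-compatible, and (b) forces $H+H^\dag=cI$. Outside this case, $\sigma_t$ fails to respect $\dag$ for some $t$, while it does respect $\dag_\eta$ for all $t$ by Thm.~\ref{thm:autoiso}. So the two groups are automorphism groups for different involutions.

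The only step needing any care is the centre argument in Step~1. It requires that $X\mapsto X^\dag$ is onto, so that commuting with all $X^\dag$ means commuting with all of $M_d(\mathbb C)$. It also requires tracking the sign conventions in $(\me^{\mi Ht})^\dag$ correctly, since that sign is what produces $H+H^\dag$ rather than $H-H^\dag$. Everything else is routine.
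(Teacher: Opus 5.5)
Your proof is correct and is essentially the paper's argument. The paper differentiates $\me^{-\mi H^\dag t}Y\me^{\mi H^\dag t}=\me^{\mi Ht}Y\me^{-\mi Ht}$ at $t=0$ to get $[H+H^\dag,Y]=0$ for all $Y$ and then uses the centre; you use the centre first on $W_t$ and then differentiate. The sufficiency computation is the same. One small precision: the identity directly says that $\me^{\mi H^\dag t}\me^{\mi Ht}$ commutes with every $Y$, not your $W_t=\me^{\mi Ht}\me^{\mi H^\dag t}$. The two are conjugate by $\me^{\mi Ht}$, so one is central iff the other is, and nothing else changes.
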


\begin{proof}
(a) See Thm.~\ref{thm:TTKMS}(a).

(b) ($\Leftarrow$) If $H+H^\dag=cI$ with $c\in\mathbb R$, then $H^\dag=cI-H$, and since $cI$ commutes with $H$,
\[
\me^{-\mi H^\dag t}=\me^{-\mi(cI-H)t}=\me^{-\mi c t}\me^{\mi H t}, \qquad \me^{\mi H^\dag t}=\me^{\mi ct}\me^{-\mi Ht}.
\]
Hence for every $X$,
\[
\sigma_t(X)^\dag=\me^{-\mi H^\dag t}X^\dag \me^{\mi H^\dag t}=\me^{-\mi ct}\me^{\mi Ht}X^\dag \me^{\mi ct}\me^{-\mi Ht}=\me^{\mi Ht}X^\dag \me^{-\mi Ht}=\sigma_t(X^\dag),
\]
the scalar phases cancelling.

($\Rightarrow$) Suppose the displayed identity holds for every $X$ and every $t\in\mathbb R$. Writing $Y:=X^\dag$ (ranging over all of $M_d(\mathbb C)$ as $X$ does), the hypothesis reads
\[
\me^{-\mi H^\dag t}\,Y\,\me^{\mi H^\dag t}=\me^{\mi Ht}\,Y\,\me^{-\mi Ht}, \qquad \forall Y\in M_d(\mathbb C),\ t\in\mathbb R.
\]
Differentiating both sides at $t=0$,
\[
-\mi[H^\dag,Y]=\mi[H,Y] \qquad\Longleftrightarrow\qquad [H+H^\dag,\,Y]=0, \qquad \forall Y\in M_d(\mathbb C).
\]
An operator commuting with all of $M_d(\mathbb C)$ lies in the centre $\mathbb CI$, so $H+H^\dag=cI$ for some $c\in\mathbb C$; since $H+H^\dag$ is Hermitian, $c\in\mathbb R$.

(c) Immediate from (a), (b) and Thm.~\ref{thm:autoiso}.
\end{proof}

\begin{remark}
Since the spectrum of $H$ is real, $\Tr\,H=\sum_n E_n\in\mathbb R$, so $\Tr(H+H^\dag)=2\,\Tr\,H$. If, as is common in concrete pseudo-Hermitian/PT-symmetric models, $H$ is normalised to be traceless, then $H+H^\dag=cI$ forces $c=0$, i.e.\ $H=H^\dag$: the degenerate locus of Prop.~\ref{prop:incompatible}(c) reduces exactly to the trivial case in which the quasi-Hermitian structure is vacuous.
\end{remark}


Combining Lem.~\ref{lem:rhohat}--Thm.~\ref{thm:TTKMS}, the $\eta$-Gibbs state $\omega_\eta$ admits, in finite dimensions, a complete GNS/TT representation: a cyclic and separating vector $\Omega_\eta\in\mathcal K=M_d(\mathbb C)$ implementing $\omega_\eta$, a modular conjugation $J_\eta$, a modular operator $\Delta_\eta$, and a modular automorphism group $\tau_t^\eta=\mathrm{Ad}(\hat\rho_\eta^{\mi t})$ satisfying the KMS condition at $\beta=1$ with respect to $\omega_\eta$. This closes, in the finite-dimensional case, the third item of Tab.~\ref{tab:BR-comparison} left open in Sec.~\ref{subsec:frequency-HHW}.

\begin{remark}
\label{rem:TTinfdim}
In infinite dimensions, $\mathcal K=M_d(\mathbb C)$ is replaced by the Hilbert--Schmidt operators on $\mathcal H$, 
and Lem.~\ref{lem:rhohat} continues to hold verbatim provided $\hat\rho_\eta=\eta e^{-\beta H}/Z_\eta$ is trace-class (which follows as in Rem.~3.5). 
The construction of $\Omega_\eta,J_\eta,\Delta_\eta$ in Prop.~\ref{prop:GNS}--\ref{prop:modular} goes through formally, 
with $\Omega_\eta=\hat\rho_\eta^{1/2}$ now a Hilbert--Schmidt operator. 
The point requiring a genuinely new hypothesis is Thm.~\ref{thm:TTKMS}(b): the spectral sums defining $F_{AB}(z)$ become infinite series, 
and their absolute convergence and analyticity on a strip — rather than on all of $\mathbb C$ — require a bi-Hilbert--Schmidt-type summability condition on $A,B$ with respect to the eigenbasis of $\hat\rho_\eta$,
analogous to Assum.~\ref{asm:A4}. 
\end{remark}

\section{Route II: Positivity as a characterisation of quasi-Hermiticity}
\label{sec:route2}

Route~I established that $\eta$-Gibbs states satisfy the full
analytic KMS condition, but it does so under the strong hypothesis
of quasi-Hermiticity: a positive-definite metric $\eta$ is
assigned to the system from the outset, and the entire proof
machinery — the $\eta$-inner product, the intertwining map $U$,
the $*$-automorphism property — depends on this structure.

The present section asks a more primitive question: how much of
the KMS framework survives if one discards the metric entirely
and retains only the two intrinsic spectral properties of a
non-Hermitian Hamiltonian, namely reality of eigenvalues and
biorthogonal completeness\cite{Brody:2013axr,Bagarello:2016gbs,Bagarello:2020gib}? Working in this leaner setting 
we construct the biorthogonal Gibbs
state $\omega_{\rm{bi}}$, prove that it satisfies a formal
KMS-type identity, and then diagnose precisely where and why the
full KMS condition fails without $\eta$.

\subsection{Assumptions and the biorthogonal trace}
\label{subsec:routeII-asm}

Route~II operates under the following two standing assumptions,
which are strictly weaker than (A1--A4).

\begin{assumption}[\emph{Real spectrum}]
  \label{asm:A5}
      $H$ has real eigenvalues $\{E_n\}_{n \in \mathcal{I}}$.
      No positive-definite metric $\eta$ is assumed to exist.
\end{assumption}

\begin{assumption}[\emph{Biorthogonal Riesz basis and finite partition function}]
  \label{asm:A6}
      The right and left eigenvectors satisfy
      $\langle\phi_m|\psi_n\rangle = \delta_{mn}$ and
      $\sum_n |\psi_n\rangle\langle\phi_n| = \mathbf{1}$, and the
      biorthogonal partition function is finite:
      $Z_{\rm{bi}} := \sum_n \me^{-\beta E_n} < \infty$.
\end{assumption}

\begin{assumption}[\emph{Analytic-elements summability}]  
\label{asm:A7}
      The bi-HS condition of Assum.\ref{asm:A4} holds with
      $\Tr_{\rm{bi}}$ replacing $\Tr_\eta$.
\end{assumption}

By Prop.~\ref{prop:chain}, (A1--A4) implies
(A5)+(A6), but the converse fails: a non-Hermitian
operator may have real spectrum and a complete biorthogonal
basis without admitting any positive-definite intertwining
metric.

The natural trace functional in this setting is the \emph{biorthogonal trace}
\begin{equation}
  \Tr_{\rm{bi}}[A]
  :=
  \sum_n \langle\phi_n | A | \psi_n\rangle
  =
  \sum_n A_{nn},
  \label{eq:bi-trace}
\end{equation}
where the sum converges absolutely under (A3) for the
operators of interest. Unlike the standard operator trace,
$\Tr_{\rm{bi}}$ depends on the biorthogonal system
$\{|\psi_n\rangle, |\phi_n\rangle\}$ and is not unitarily
invariant. However, it satisfies a cyclicity property that is
the direct analogue of the cyclicity of the standard trace and
plays the same structural role in the proofs below.

\begin{lemma}[Cyclicity of the biorthogonal trace]
  \label{lem:cyclicity}
  Under Assum.\ref{asm:A6}, $\Tr_{\rm{bi}}[AB]
  = \Tr_{\rm{bi}}[BA]$ for all bounded operators
  $A$, $B$ for which both sides are absolutely convergent.
\end{lemma}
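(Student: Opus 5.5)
The plan is to expand both sides in the biorthogonal matrix elements $X_{nm} := \langle\phi_n|X|\psi_m\rangle$ and reduce cyclicity to an interchange of the order of a double summation.

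\textbf{Step 1: expansion of the left-hand side.} Insert the completeness relation $\sum_m |\psi_m\rangle\langle\phi_m| = \mathbf{1}$ of Assum.~\ref{asm:A6} between $A$ and $B$ in each diagonal element. This gives
\[
\langle\phi_n|AB|\psi_n\rangle = \sum_m \langle\phi_n|A|\psi_m\rangle\langle\phi_m|B|\psi_n\rangle = \sum_m A_{nm}B_{mn}.
\]
The inner sum is legitimate because the resolution of the identity converges at least strongly: $B|\psi_n\rangle$ is a fixed vector, $\sum_m |\psi_m\rangle\langle\phi_m|B\psi_n\rangle$ converges to it in norm (Riesz basis expansion), and applying the bounded functional $\langle\phi_n|A$ passes through the limit. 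Summing over $n$ gives
\[
\Tr_{\rm bi}[AB] = \sum_n\sum_m A_{nm}B_{mn}.
\]

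\textbf{Step 2: expansion of the right-hand side.} The same argument yields
\[
\Tr_{\rm bi}[BA] = \sum_m\sum_n B_{mn}A_{nm}.
\]
This is the same array of terms $A_{nm}B_{mn}$, summed in the opposite order.

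\textbf{Step 3: interchange of summation.} Fubini--Tonelli for series permits swapping the two summations once the double series $\sum_{n,m}|A_{nm}||B_{mn}|$ converges. This is the main obstacle. The hypothesis ``both sides absolutely convergent'', read literally, only controls the iterated sums $\sum_n|\sum_m A_{nm}B_{mn}|$, which is weaker than joint absolute convergence. I would therefore interpret the hypothesis, as the paper does throughout Route~II, in the sense of Assum.~\ref{asm:A7}. The bi-HS condition together with Cauchy--Schwarz, as in Step~1 of Prop.~\ref{prop:analytic}, gives
\[
\sum_{n,m}|A_{nm}||B_{mn}| \le \Big(\sum_{n,m}|A_{nm}|^2\Big)^{1/2}\Big(\sum_{n,m}|B_{mn}|^2\Big)^{1/2} < \infty,
\]
and the interchange is then justified, so $\Tr_{\rm bi}[AB] = \Tr_{\rm bi}[BA]$.

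\textbf{Finite dimensions and the intended application.} In finite dimensions all sums are finite and no justification is needed. Alternatively, writing $S$ for the invertible matrix with columns $\psi_n$ (so that $S^{-1}$ has rows $\langle\phi_n|$), one has $\Tr_{\rm bi}[X] = \Tr[S^{-1}XS] = \Tr[X]$, and cyclicity is the ordinary one. For the thermal application the relevant products are of the form $(\me^{-\beta H/2}A)(B\me^{-\beta H/2})$ etc., whose bi-HS norms are finite by Assum.~\ref{asm:A7}. The proof will state explicitly that the absolute convergence of the double series is the operative hypothesis.
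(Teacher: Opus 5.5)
Your proof is correct and follows the paper's own argument: insert $\sum_m|\psi_m\rangle\langle\phi_m|=\mathbf{1}$, obtain the array $A_{nm}B_{mn}$, and identify the two iterated sums. The paper, however, finishes with a bare relabelling $n\leftrightarrow m$, which silently interchanges the order of summation. Your insistence on justifying that step is necessary, because for an infinite biorthogonal system the lemma as literally stated is false.

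Take an orthonormal basis $\{e_n\}_{n\geq 0}$ of $\ell^2(\mathbb{N})$ and set $\psi_n=\phi_n=e_n$. This is the eigenbasis of $H=\sum_n n|e_n\rangle\langle e_n|$, so Assum.~\ref{asm:A5}--\ref{asm:A6} hold. (This needs $E_n\to\infty$: for bounded $H$, Assum.~\ref{asm:A6} already forces $\dim\mathcal{H}<\infty$.) Let $S$ be the unilateral shift and set $A=\mathbf{1}-S^\dagger$, $B=\mathbf{1}+S$. Then $AB=S-S^\dagger$ and $BA=|e_0\rangle\langle e_0|+S-S^\dagger$. Both diagonal sums converge absolutely, yet $\Tr_{\rm{bi}}[AB]=0\neq 1=\Tr_{\rm{bi}}[BA]$. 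Consistently with your criterion, $\sum_{n,m}|A_{nm}||B_{mn}|=\infty$ here. So the hypothesis you make operative, absolute convergence of the double series, is exactly the right repair.

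One small correction. Assum.~\ref{asm:A7} makes the half-weighted operators $A\me^{-\beta H/2}$, $\me^{-\beta H/2}A,\dots$ bi-HS, not $A$ and $B$ themselves. Your Cauchy--Schwarz bound should therefore be applied to factors such as $\me^{-\beta H/2}A$ and $B\me^{-\beta H/2}$, as your last paragraph indicates; alternatively, the lemma should simply assume that both factors are bi-HS. In finite dimensions, your observation that $\Tr_{\rm{bi}}$ is just the ordinary trace (similarity by the eigenvector matrix) settles everything directly.
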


\begin{proof}
Inserting the completeness relation
$\sum_m |\psi_m\rangle\langle\phi_m| = \mathbf{1}$ into the
definition of $\Tr_{\rm{bi}}[AB]$:
\[
  \Tr_{\rm{bi}}[AB]
  =
  \sum_n \langle\phi_n| A
  \left(\sum_m |\psi_m\rangle\langle\phi_m|\right)
  B |\psi_n\rangle
  =
  \sum_{n,m} A_{nm} B_{mn}.
\]
The same expansion gives
$\Tr_{\rm{bi}}[BA] = \sum_{n,m} B_{nm} A_{mn}$,
and relabelling $n \leftrightarrow m$ yields the claim.
\end{proof}

\begin{remark}[Cyclicity and completeness]
  The cyclicity of $\Tr_{\rm{bi}}$ is equivalent to
  the completeness relation in Assum.\ref{asm:A6}: the proof uses only
  $\sum_m |\psi_m\rangle\langle\phi_m| = \mathbf{1}$, and
  conversely, failure of completeness at an exceptional point
  implies failure of cyclicity. This is analysed in
  Sec.~\ref{sec:fail}.
\end{remark}

\subsection{The biorthogonal Gibbs state and the KMS-type identity}
\label{subsec:bi-kms}

With the biorthogonal trace in hand, we define the thermal
state associated with the non-Hermitian Hamiltonian under
Route~II hypotheses, and establish the central identity of
this section.

\subsubsection*{The biorthogonal Gibbs state}

The \emph{biorthogonal Gibbs state} is the density operator
\footnote{The construction presented here is closely related to the biorthogonal Gibbs-state formalism introduced by Bagarello, Trapani, and Triolo~\cite{Bagarello:2016gbs}.
The novelty of the present work lies not in introducing a new Gibbs functional, but in establishing its positivity criterion and KMS properties within the pseudo-Hermitian
framework.}
\begin{equation}
  \rho_\beta^{\rm{bi}}
  :=
  \frac{1}{Z_{\rm{bi}}}
  \sum_n \me^{-\beta E_n} |\psi_n\rangle\langle\phi_n|,
  \label{eq:bi-density}
\end{equation}
whose associated functional is
\begin{equation}
  \omega_{\rm{bi}}(A)
  :=
  \Tr_{\rm{bi}} \left[\rho_\beta^{\rm{bi}}\, A\right]
  =
  \frac{1}{Z_{\rm{bi}}} \sum_n \me^{-\beta E_n} A_{nn}.
  \label{eq:bi-gibbs}
\end{equation}
The structure of $\rho_\beta^{\rm{bi}}$ mirrors the standard
Gibbs density matrix, with the orthonormal projectors
$|\psi_n\rangle\langle\psi_n|$ replaced by the biorthogonal
projectors $|\psi_n\rangle\langle\phi_n|$.

Two basic properties are immediate. First, $\omega_{\rm{bi}}$
is time-translation invariant: by Lem.~\ref{lem:spectral-id}
applied with $m = n$,
$(A(t))_{nn} = \me^{\mi(E_n - E_n)t} A_{nn} = A_{nn}$,
so $\omega_{\rm{bi}}(\sigma_t(A)) = \omega_{\rm{bi}}(A)$
for all $t \in \mathbb{R}$. Second, $\omega_{\rm{bi}}$ is
normalised: $\omega_{\rm{bi}}(\mathbf{1}) = Z_{\rm{bi}}^{-1}
\sum_n \me^{-\beta E_n} = 1$.

A crucial caveat is that $\omega_{\rm{bi}}$ is \emph{not}
generally positive. The standard positivity condition
$\omega_{\rm{bi}}(A^\dag A) \geq 0$ fails because the
inner product
$\langle\phi_n | A^\dag | \psi_k\rangle
= \overline{\langle\psi_k | A | \phi_n\rangle}$
involves $|\phi_n\rangle$ in the ket position, which differs
from $|\psi_n\rangle$ unless $|\phi_n\rangle \propto \eta|\psi_n\rangle$
— a condition that is precisely quasi-Hermiticity.
Positivity is thus generically absent in Route~II, and its
restoration is the content of Thm.~\ref{thm:structure}.

\subsubsection*{The KMS-type identity}

\begin{proposition}[Biorthogonal KMS-type identity]
  \label{prop:formkms}
  Under (\ref{asm:A5})+(\ref{asm:A6})+(\ref{asm:A7}), for all bounded operators
  $A$, $B$ and all $t \in \mathbb{R}$,
  \begin{equation}
    \omega_{\rm{bi}}(\sigma_t(A)\, B)
    =
    \omega_{\rm{bi}}(B\, \sigma_{t+\mi\beta}(A)).
    \label{eq:formkms}
  \end{equation}
  Moreover, the associated correlation function
  $F_{AB}(z) := \omega_{\rm{bi}}(A\,\sigma_z(B))$
  is analytic on $\mathcal{S}_\beta$ and continuous on
  $\overline{\mathcal{S}}_\beta$.
\end{proposition}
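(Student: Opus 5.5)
The plan is to reproduce the proof of Thm.~\ref{thm:KMS} with $\Tr_{\rm bi}$ in place of $\Tr_\eta$. The only structural ingredients Route~I really used were the biorthogonal expansion and the reality of the spectrum, and both survive under (\ref{asm:A5})+(\ref{asm:A6}).

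\textbf{Step 1: spectral expansion.} Write $\omega_{\rm bi}(X)=Z_{\rm bi}^{-1}\sum_n \me^{-\beta E_n}X_{nn}$ with $X_{nn}=\langle\phi_n|X|\psi_n\rangle$. Insert the completeness relation $\sum_m|\psi_m\rangle\langle\phi_m|=\mathbf 1$ between the two factors, as in Lem.~\ref{lem:cyclicity}. Lem.~\ref{lem:spectral-id} holds verbatim here, since its proof used only biorthonormality, completeness, real $E_n$, and the Bari-type norm convergence of $\me^{\mi Hs}=\sum_j \me^{\mi E_j s}|\psi_j\rangle\langle\phi_j|$ for a Riesz basis. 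With it we obtain
\[
\omega_{\rm bi}(\sigma_t(A)B)=Z_{\rm bi}^{-1}\sum_{n,m}\me^{-\beta E_n}\me^{\mi(E_n-E_m)t}A_{nm}B_{mn},
\]
and, for $s=t+\mi\beta$,
\[
\omega_{\rm bi}(B\,\sigma_s(A))=Z_{\rm bi}^{-1}\sum_{n,m}\me^{-\beta E_n}B_{nm}\me^{\mi(E_m-E_n)s}A_{mn}.
\]
The factor $\me^{-\beta E_n}\me^{(E_n-E_m)\beta}=\me^{-\beta E_m}$ moves the thermal weight to the other index. Relabelling $n\leftrightarrow m$ then shows that the two double series agree term by term, which is exactly the computation of Eqs.~\eqref{eq:LHS-expand}--\eqref{eq:RHS-expand}. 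No positivity or metric is needed anywhere in this step.

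\textbf{Step 2: justifying the rearrangements.} Both double series must converge absolutely, and here Assum.~\ref{asm:A7} enters. Applying Cauchy--Schwarz in the form used in Step~1 of Prop.~\ref{prop:analytic}, the first series is bounded by $\|\me^{-\beta H/2}A\|_{\rm bi\text{-}HS}\|B\me^{-\beta H/2}\|_{\rm bi\text{-}HS}=:M_0$. The second, which carries weight $\me^{-\beta E_m}$, is bounded by $\|A\me^{-\beta H/2}\|_{\rm bi\text{-}HS}\|\me^{-\beta H/2}B\|_{\rm bi\text{-}HS}=:M_\beta$. Here we use $[\me^{-\beta H/2}X]_{nm}=\me^{-\beta E_n/2}X_{nm}$, which follows because $\langle\phi_n|$ is a left eigenvector with real eigenvalue. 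Absolute convergence legitimises the interchange of the completeness sum with the outer trace sum and the relabelling. One technical point must be checked: that $\omega_{\rm bi}(B\sigma_s(A))$ is defined at all, that is, that the diagonal series converges. This follows from the same bound, since $\sigma_s(A)$ is bounded by the Bari estimate $\sup_j|\me^{\mi E_j s}|\le \me^{\beta|E_{\min}|}$, exactly as in Prop.~\ref{prop:traceclass}(i). If needed, I would add $E_{\min}>-\infty$ as an implicit part of (\ref{asm:A6}), as was done in Route~I.

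\textbf{Step 3: analyticity and continuity of $F_{AB}$.} Define $F_{AB}(z)$ as the series $Z_{\rm bi}^{-1}\sum_{n,m}\me^{-\beta E_n}\me^{\mi(E_m-E_n)z}A_{nm}B_{mn}$ and repeat the argument of Prop.~\ref{prop:analytic} word for word. The partial sums are entire, and their tails on $\Im z=0$ and $\Im z=\beta$ are controlled by $M_0$ and $M_\beta$. The Hadamard three-line theorem then gives uniform Cauchy convergence on $\overline{\mathcal S}_\beta$, and Weierstrass's theorem gives analyticity inside the strip together with continuity and boundedness on its closure.

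Finally, $F_{AB}(z)$ must be identified with $\omega_{\rm bi}(A\sigma_z(B))$ off the real line. Rather than invoking a maximum-principle argument, I would do this directly: apply Lem.~\ref{lem:spectral-id} at complex $z$ and use absolute convergence of the expansion at each fixed $z$. Absolute convergence holds because $\me^{-\beta E_n}|\me^{\mi(E_m-E_n)z}|=\me^{-(\beta-\alpha)E_n}\me^{-\alpha E_m}$ is dominated by $\me^{-\beta E_n}+\me^{-\beta E_m}$ via convexity, so it is bounded by $M_0+M_\beta$. I expect this convexity domination, which replaces the metric-based trace-class argument of Rem.~\ref{rem:Zeta-pos} that is unavailable here, to be the main obstacle. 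It is the one place where the absence of $\eta$ must be worked around.
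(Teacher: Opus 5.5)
Your proof follows the paper's argument. You expand term by term via Lem.~\ref{lem:spectral-id}, use the shift $\me^{-\beta E_n}\me^{(E_n-E_m)\beta}=\me^{-\beta E_m}$, relabel, and then run the three-line/Weierstrass argument of Prop.~\ref{prop:analytic} with the bi-HS bounds of Assum.~\ref{asm:A7}. What you add is the identification of the series with $\omega_{\rm bi}(A\,\sigma_z(B))$ off the real axis. The paper's Route~II proof takes this for granted by saying $F_{AB}$ ``has the same spectral-series form'' as $G_{AB}$.

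Your majorant $\me^{-(\beta-\alpha)E_n}\me^{-\alpha E_m}\le \me^{-\beta E_n}+\me^{-\beta E_m}$ is correct: it is weighted AM--GM with $\theta=\alpha/\beta$, so it is not an obstacle. It also does more than you use it for. Since $(\me^{-\beta E_n}+\me^{-\beta E_m})|A_{nm}||B_{mn}|$ is a summable majorant independent of $z$, the Weierstrass M-test gives uniform absolute convergence on $\overline{\mathcal S}_\beta$ directly. That yields analyticity, continuity, boundedness and the pointwise identification together, and makes the three-line step unnecessary.

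Two side claims in your Step~2 need correcting, though neither changes the conclusion.

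First, you do not need to add $E_{\min}>-\infty$ as a hypothesis. The bound $\me^{-\beta E_n}\le Z_{\rm bi}$ already gives $E_n\ge-\beta^{-1}\ln Z_{\rm bi}$.

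Second, $\sigma_{t+\mi\beta}(A)$ is not bounded in general. The estimate $\sup_j|\me^{\mi E_j s}|\le\me^{\beta|E_{\min}|}$ controls only $\me^{\mi Hs}$. The coefficients of $\me^{-\mi Hs}$ have modulus $\me^{\beta E_j}$, and in infinite dimensions $Z_{\rm bi}<\infty$ forces $E_j\to+\infty$. So the bound you import from Prop.~\ref{prop:traceclass}(i) does not carry over. You do not need it. The relation $\me^{-\mi Hs}|\psi_n\rangle=\me^{-\mi E_n s}|\psi_n\rangle$, together with boundedness of $\me^{\mi Hs}$, makes each $\langle\phi_n|B\,\sigma_s(A)|\psi_n\rangle$ well defined and equal to its inner sum over $m$. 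Convergence of the diagonal series then follows from the absolute convergence of the double series you already have.

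A smaller labelling point: after relabelling, the right-hand series is literally the left-hand one, so $M_0$ controls both. $M_\beta$ is what controls the upper edge $\Im z=\beta$ of $F_{AB}$.
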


\begin{proof}
We follow the proof of Thm.~\ref{thm:KMS} with $\omega_\eta$
replaced by $\omega_{\rm{bi}}$.

\noindent\textbf{Algebraic identity.}
Expanding both sides using Lem.~\ref{lem:spectral-id} and the
definition Eq.~\eqref{eq:bi-gibbs}:
\[
  \omega_{\rm{bi}}(\sigma_t(A)\, B)
  =
  \frac{1}{Z_{\rm{bi}}}
  \sum_{n,m}
  \me^{-\beta E_n} \me^{i(E_n - E_m)t} A_{nm} B_{mn}.
\]
For the right-hand side, set $s = t + \mi\beta$ and apply
Lem.~\ref{lem:spectral-id}:
\begin{align*}
  \omega_{\rm{bi}}(B\, \sigma_{t+\mi\beta}(A))
  &=
  \frac{1}{Z_{\rm{bi}}}
  \sum_{n,m}
  \me^{-\beta E_n} B_{nm}\, \me^{\mi(E_m - E_n)s}\, A_{mn}
  \\
  &=
  \frac{1}{Z_{\rm{bi}}}
  \sum_{n,m}
  \me^{-\beta E_n}
  \me^{\mi(E_m - E_n)t}
  \me^{(E_n - E_m)\beta}\,
  A_{mn} B_{nm}.
\end{align*}
Using $\me^{-\beta E_n} \me^{(E_n - E_m)\beta} = \me^{-\beta E_m}$
and relabelling $n \leftrightarrow m$, the right-hand side
becomes
\[
  \frac{1}{Z_{\rm{bi}}}
  \sum_{n,m}
  \me^{-\beta E_n} \me^{\mi(E_n - E_m)t} A_{nm} B_{mn},
\]
which is identical to the left-hand side. This establishes
Eq.~\eqref{eq:formkms}.

\noindent\textbf{Analyticity and continuity.}
The correlation function $F_{AB}(z)$ has the same spectral-series
form as $G_{AB}(z)$ in Prop.~\ref{prop:analytic}, with the
same bi-HS bounds on the boundary lines — now supplied by
Assum.~\eqref{asm:A7}. The Hadamard Three-Line and Weierstrass
arguments of Prop.~\ref{prop:analytic} apply verbatim,
yielding analyticity on $\mathcal{S}_\beta$ and continuity on
$\overline{\mathcal{S}}_\beta$.
\end{proof}

\begin{remark}[Formal versus rigorous identity]
  \label{rem:formal}
  Without Assum.~\eqref{asm:A7}, the identity Eq.~\eqref{eq:formkms}
  holds as a \emph{formal} equality of spectral series —
  the term-by-term computation is valid, but absolute
  convergence is not guaranteed and the series may fail to
  define a well-posed functional equation. Condition
  Assum.~\eqref{asm:A7} is the minimal sufficient hypothesis that
  promotes the formal manipulation to a rigorous theorem.
  In finite dimensions, Assumption~\eqref{asm:A7} is automatically
  satisfied, so Eq.~\eqref{eq:formkms} always holds rigorously
  for finite-dimensional non-Hermitian systems with real
  spectrum and a complete biorthogonal basis.
\end{remark}

\subsection{The gap between Route~II and full KMS}
\label{subsec:gap}

Proposition~\ref{prop:formkms} shows that the biorthogonal
state $\omega_{\rm{bi}}$ satisfies the time-domain boundary
relation and the strip analyticity of the KMS condition.
What it does \emph{not} do is establish a genuine KMS state
in the sense of Def.~\ref{def:KMS}, because Def.
\ref{def:KMS} implicitly requires the underlying functional
to be a \emph{state} — that is, a positive normalised
functional. The gap between Route~II and full KMS is therefore
not analytic but algebraic: it lies precisely in the failure
of positivity.

Table~\ref{tab:routeII} records the precise status of each
KMS-relevant condition under Route~II.

\begin{table}[ht!]
\centering
\captionsetup{width=.9\textwidth}
\caption{Status of the KMS conditions under Route~II
  (biorthogonal hypotheses \ref{asm:A5}+\ref{asm:A6}).
  For comparison, all conditions in the first block hold
  under Route~I (Assums.~\ref{asm:A1}--\ref{asm:A4}).}
\label{tab:routeII}
\renewcommand{\arraystretch}{1.25}
\begin{tabular}{@{}p{7cm}lp{3.8cm}@{}}
\toprule
Condition & Status & Reference \\
\midrule
Algebraic identity Eq.~\eqref{eq:formkms}
  & \checkmark
  & Prop.~\ref{prop:formkms}, \ref{asm:A5}+\ref{asm:A6}\\
Analyticity of $F_{AB}$ on $\mathcal{S}_\beta$
  & \checkmark\ with \ref{asm:A7}
  & Prop.~\ref{prop:formkms} \\
Continuity/boundedness on $\overline{\mathcal{S}}_\beta$
  & \checkmark\ with \ref{asm:A7}
  & Prop.~\ref{prop:formkms} \\
Time-translation invariance
  & \checkmark
  & Sec.~\ref{subsec:bi-kms} \\
\midrule
Positivity $\omega_{\rm{bi}}(A^\dag A) \geq 0$
  & iff quasi-Hermitian
  & Thm.~\ref{thm:structure} \\
$*$-automorphism in standard inner product
  & $\times$
  & requires (\ref{asm:A1}--\ref{asm:A4}) \\
Full HHW KMS in $C^*$-algebra (GNS construction)
  & $\times$
  & requires positivity \\
\bottomrule
\end{tabular}
\end{table}

The dividing line runs through the middle of the table.
The upper block — the analytic and algebraic requirements that
are directly tied to the time-domain boundary relation — holds
under Route~II alone. The lower block — positivity, the
$*$-automorphism property in the standard inner product, and
the $C^*$-algebraic GNS construction — all require the
additional structure of Route~I.

The obstruction is not merely a technical inconvenience.
Positivity is the condition that makes $\omega_{\rm{bi}}$
a physical state: without it, the functional assigns negative
``probabilities'' to some observables and loses its
interpretation as a thermal ensemble. The GNS construction,
which builds the Hilbert space representation of the
$C^*$-algebra from the state, requires positivity as an input.
Without it, the inner product on the GNS space is indefinite
and the construction breaks down. The $*$-automorphism property
of $\sigma_t$ in the standard inner product (Thm.~\ref{thm:autoiso})
likewise depends on the pseudo-Hermitian relation
$H^\dag = \eta H\eta^{-1}$, which is Route~I data.

Theorem~\ref{thm:structure} makes this gap precise: it
characterises quasi-Hermiticity as the \emph{exact} condition
under which $\omega_{\rm{bi}}$ is positive, thereby showing
that Route~I is not merely sufficient but also necessary for
$\omega_{\rm{bi}}$ to be a genuine KMS state. In this
sense, the biorthogonal KMS-type identity \eqref{eq:formkms}
is best understood not as a weakened version of the full KMS
condition, but as a \emph{diagnostic}: it holds universally
under Route~II, and its upgrade to a full KMS state is
equivalent to the system being quasi-Hermitian.

\subsection{Structure theorem: KMS positivity characterises
            quasi-Hermiticity}
\label{subsec:structure}

The gap analysis in Sec.~\ref{subsec:gap} identified positivity of
$\omega_{\rm{bi}}$ as the sole condition separating the
biorthogonal KMS-type identity from a genuine KMS state. The
theorem of this subsection closes that gap from both directions:
it shows that $\omega_{\rm{bi}}$ is positive \emph{if and
only if} $H$ is quasi-Hermitian, thereby providing an intrinsic,
metric-free characterisation of quasi-Hermiticity in terms of
the thermal functional alone.

This is the central new result of the paper. Unlike the
Mostafazadeh--Scholtz framework, which starts from a given
metric $\eta$ and studies its consequences, the theorem starts
from a property of the \emph{state} — positivity of
$\omega_{\rm{bi}}$ — and recovers the metric as a derived
object. The implication (i)$\Rightarrow$(ii) is therefore
strictly outside any similarity-transformation paradigm.

The proof is structured as follows. The implications
(ii)$\Rightarrow$(iii)$\Rightarrow$(i) are straightforward:
quasi-Hermiticity determines the left eigenvectors up to a
common metric, and once $|\phi_n\rangle = \eta|\psi_n\rangle$
is established, positivity is a one-line computation.
The substantial direction is (i)$\Rightarrow$(ii): given only
that $\omega_{\rm{bi}}$ is positive, we must construct a
positive-definite $\eta$ satisfying $H^\dag = \eta H\eta^{-1}$.
The key is to define $\eta_0 := \sum_n |\phi_n\rangle\langle\phi_n|$,
show it is positive-definite (Lem.~\ref{lem:eta-posdef}), and
verify the pseudo-Hermitian relation directly from the spectral
decompositions. The positivity hypothesis enters through the
Riesz representation theorem, which forces the representing
density matrix of $\omega_{\rm{bi}}$ to be
positive-semidefinite, and this in turn constrains the geometry
of the biorthogonal system.

\begin{theorem}[Biorthogonal KMS Structure Theorem]
  \label{thm:structure}
  Let $\mathcal{H} = \mathbb{C}^d$, and let $H \in M_d(\mathbb{C})$
  be diagonalisable with real eigenvalues $E_1, \ldots, E_d$ and
  biorthogonal eigensystem $\{|\psi_n\rangle, |\phi_n\rangle\}$
  satisfying
  \[
    \langle\phi_m|\psi_n\rangle = \delta_{mn},
    \qquad
    \sum_n |\psi_n\rangle\langle\phi_n| = \mathbf{1}.
  \]
  Define $Z_{\rm{bi}} := \sum_n \me^{-\beta E_n}$ and the
  biorthogonal thermal functional
  \[
    \omega_{\rm{bi}}(A)
    :=
    \frac{1}{Z_{\rm{bi}}}
    \sum_n \me^{-\beta E_n}
    \langle\phi_n | A | \psi_n\rangle.
  \]
  Then the following three conditions are equivalent:
  \begin{enumerate}
    \item \emph{(KMS positivity.)}
      $\omega_{\rm{bi}}(A^\dag A) \geq 0$
      for all $A \in M_d(\mathbb{C})$.
    \item \emph{(Quasi-Hermiticity.)}
      There exists a positive-definite Hermitian $\eta \in M_d(\mathbb{C})$,
      $\eta > 0$, such that $H^\dag = \eta H \eta^{-1}$.
    \item \emph{(Metric relation for eigenvectors.)}
      There exists a positive-definite Hermitian $\eta \in M_d(\mathbb{C})$
      such that $|\phi_n\rangle = \eta|\psi_n\rangle$ for all $n$.
  \end{enumerate}
\end{theorem}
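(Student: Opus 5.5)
The natural starting point is to rewrite $\omega_{\rm bi}$ as an ordinary trace functional. Biorthonormality and completeness give $\omega_{\rm bi}(A)=\Tr[\rho A]$, where
\[
\rho:=Z_{\rm bi}^{-1}\sum_n \me^{-\beta E_n}|\psi_n\rangle\langle\phi_n| = Z_{\rm bi}^{-1}\me^{-\beta H}.
\]
By Riesz representation on $(M_d(\mathbb C),\Tr(Y^\dag X))$, the condition $\Tr[\rho A^\dag A]\ge 0$ for all $A$ holds iff $\rho$ is Hermitian and positive semidefinite. (Take $A=|\xi\rangle\langle\zeta|$ with $\|\zeta\|=1$ to get $\langle\xi|\rho|\xi\rangle\ge0$ for every $\xi$.) Since $\rho$ has eigenvalues $\me^{-\beta E_n}/Z_{\rm bi}>0$, it is then positive definite.

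\textbf{(i)$\Rightarrow$(ii).} The plan is to show that (i) in fact forces $H=H^\dag$. Because $x\mapsto \me^{-\beta x}$ is injective on $\mathbb R$ and $H$ is diagonalisable, Lagrange interpolation on the distinct values $\me^{-\beta E_n}$ yields a polynomial $p$ with \emph{real} coefficients such that $H=p(\me^{-\beta H})=p(Z_{\rm bi}\rho)$. A real polynomial in a Hermitian matrix is Hermitian, so $H=H^\dag$, and $\eta=\mathbf 1$ works.

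\textbf{(ii)$\Leftrightarrow$(iii).} Take the canonical metric $\eta_0:=\sum_n|\phi_n\rangle\langle\phi_n|$. It is positive definite because $\{\phi_n\rangle\}$ is a basis, and $\eta_0|\psi_m\rangle=|\phi_m\rangle$ by biorthonormality. Then
\[
H^\dag\eta_0|\psi_n\rangle=E_n|\phi_n\rangle=\eta_0H|\psi_n\rangle,
\]
so $H^\dag=\eta_0H\eta_0^{-1}$. Thus (ii) and (iii) both hold for \emph{every} diagonalisable $H$ with real spectrum, with no further hypothesis.

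\textbf{(iii)$\Rightarrow$(i) fails: this is where the plan breaks.} The natural attempt is to substitute $|\phi_n\rangle=\eta|\psi_n\rangle$, which gives
\[
\omega_{\rm bi}(A^\dag A)=Z_{\rm bi}^{-1}\sum_n \me^{-\beta E_n}\langle\psi_n|\eta A^\dag A|\psi_n\rangle.
\]
The metric sits outside $A^\dag A$, not between $A^\dag$ and $A$ (compare Thm.~\ref{thm:positivity}, where it is $A^{\dag_\eta}A$ that appears). So this sum is not a sum of squared norms.

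In fact, by the (ii)$\Leftrightarrow$(iii) step, (iii) is automatic, whereas by the (i)$\Rightarrow$(ii) step, (i) forces Hermiticity. A concrete counterexample is $H=\begin{pmatrix}1&1\\0&2\end{pmatrix}$. It is diagonalisable with real spectrum, so (ii) and (iii) hold. But $\me^{-\beta H}$ is upper triangular with a nonzero off-diagonal entry, so $\rho$ is not Hermitian and (i) fails.

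\textbf{Correct form of the result.} I would therefore prove the corrected statement: (i) holds iff $H=H^\dag$, while (ii) and (iii) always hold. The expected route to a genuine "positivity characterises quasi-Hermiticity" statement is to replace $A^\dag A$ in (i) by $A^{\dag_{\eta_0}}A$. This reduces to Thm.~\ref{thm:positivity}, since $\eta_0\me^{-\beta H}=\sum_n\me^{-\beta E_n}|\phi_n\rangle\langle\phi_n|$, which is the $\eta_0$-Gibbs form, so $\omega_{\rm bi}=\omega_{\eta_0}$ with $Z_{\rm bi}=Z_{\eta_0}$.
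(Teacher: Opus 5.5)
Your analysis is correct, and it reaches a different conclusion from the paper, not just a different route: the statement as written is false, and the paper's proof breaks exactly where you say. In (iii)$\Rightarrow$(i) the paper writes $\langle\phi_n|A^\dagger A|\psi_n\rangle=\langle\psi_n|\eta A^\dagger A|\psi_n\rangle=\langle A\psi_n|\eta|A\psi_n\rangle$. The second equality silently moves $\eta$ past $A^\dagger$. The rest of your argument also checks out:
\begin{itemize}
\item the reduction $\omega_{\rm bi}(A)=\Tr[\me^{-\beta H}A]/Z_{\rm bi}$;
\item the equivalence of (i) with $\me^{-\beta H}\ge 0$;
\item the real-coefficient interpolation $H=p(\me^{-\beta H})$;
\item the counterexample. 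For $H=\begin{pmatrix}1&1\\0&2\end{pmatrix}$ one has $(\me^{-\beta H})_{12}=\me^{-2\beta}-\me^{-\beta}$. With $A=|\zeta\rangle\langle\xi|$, $\|\zeta\|=1$, $\xi=(1,\mi)^T$, you get $Z_{\rm bi}\,\omega_{\rm bi}(A^\dagger A)=\langle\xi|\me^{-\beta H}|\xi\rangle=\me^{-\beta}+\me^{-2\beta}+\mi(\me^{-2\beta}-\me^{-\beta})\notin\mathbb{R}$.
\end{itemize}

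On (i)$\Rightarrow$(ii) the two routes differ instructively. The paper passes through a Riesz representative $G=\Phi D_\beta\Psi^\dagger$, shows it is positive definite, and then builds $\eta_0=\sum_n|\phi_n\rangle\langle\phi_n|$. But its Step~3 checks $\eta_0H=H^\dagger\eta_0$ using only biorthonormality and completeness, so Steps~1--2 are never used. That is exactly your observation that (ii) and (iii) hold unconditionally. Your interpolation extracts from the same data what the paper overlooks. $G=\Phi D_\beta\Psi^\dagger$ is simply $\me^{-\beta H^\dagger}$, so the Hermiticity granted in Step~1 already forces $H=H^\dagger$. Your corrected formulation is the right one: (ii) and (iii) always hold, while (i) holds if and only if $H=H^\dagger$.

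Two fixes to your write-up.
\begin{itemize}
\item With $A=|\xi\rangle\langle\zeta|$ one gets $A^\dagger A=\|\xi\|^2|\zeta\rangle\langle\zeta|$, so you are testing $\zeta$, not $\xi$. Take $A=|\zeta\rangle\langle\xi|$ instead; this is harmless.
\item More substantively, your closing claim $\omega_{\rm bi}=\omega_{\eta_0}$ with $Z_{\rm bi}=Z_{\eta_0}$ is false for the paper's $\omega_\eta(A)=\Tr[\eta\me^{-\beta H}A]/Z_\eta$. By your own formula, $\Tr[\eta_0\me^{-\beta H}A]=\sum_n\me^{-\beta E_n}\langle\phi_n|A|\phi_n\rangle$ and $Z_{\eta_0}=\sum_n\me^{-\beta E_n}\|\phi_n\|^2$.
\end{itemize}
What is true, and is what the paper's computation actually establishes, is the following. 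The relation $|\phi_n\rangle=\eta|\psi_n\rangle$ gives $\langle\phi_n|\eta^{-1}=\langle\psi_n|$, hence $Z_{\rm bi}\,\omega_{\rm bi}(A^{\dagger_\eta}A)=\sum_n\me^{-\beta E_n}\langle\psi_n|A^\dagger\eta A|\psi_n\rangle\ge 0$. Such an $\eta$ always exists under the theorem's hypotheses, so this $\eta$-positivity is automatic and characterises nothing. Replacing $A^\dagger A$ by $A^{\dagger_{\eta_0}}A$ therefore does not produce a genuine ``positivity $\Leftrightarrow$ quasi-Hermiticity'' theorem either.
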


\begin{proof}
We prove the three implications in the cyclic order
(ii)$\Rightarrow$(iii), (iii)$\Rightarrow$(i), and
(i)$\Rightarrow$(ii).


\noindent\textbf{(ii)$\Rightarrow$(iii).}
Suppose $H^\dag = \eta H\eta^{-1}$ with $\eta > 0$.
Acting on the right eigenvalue equation
$H|\psi_n\rangle = E_n|\psi_n\rangle$ from the left with $\eta$
and using the pseudo-Hermitian relation:
\[
  H^\dag(\eta|\psi_n\rangle)
  = \eta H\eta^{-1}\cdot\eta|\psi_n\rangle
  = \eta H|\psi_n\rangle
  = E_n\,\eta|\psi_n\rangle.
\]
So $\eta|\psi_n\rangle$ is a left eigenvector of $H$ with
eigenvalue $E_n$ (which equals $E_n^*$ since $E_n \in \mathbb{R}$).

\emph{Simple eigenvalues.}
If $E_n$ is simple, the left eigenspace is one-dimensional, so
$|\phi_n\rangle = c_n\,\eta|\psi_n\rangle$ for some scalar $c_n$.
The biorthonormality condition $\langle\phi_n|\psi_n\rangle = 1$
gives $c_n = \langle\psi_n|\eta|\psi_n\rangle^{-1} > 0$
since $\eta > 0$. Rescaling $\eta$ by $c_n$ within the
one-dimensional eigenspace leaves $\eta$ positive-definite and
gives $|\phi_n\rangle = \tilde\eta|\psi_n\rangle$.

\emph{Degenerate eigenvalues.}
Suppose $E_k$ has multiplicity $m_k \geq 2$, and let
$V_k = {\rm span}\{|\psi_n\rangle : E_n = E_k\}$
be the right eigenspace. For any $|\psi\rangle \in V_k$,
$\eta|\psi\rangle$ is a left eigenvector for $E_k$, so
$\eta(V_k) \subseteq W_k$, where $W_k$ is the left eigenspace.
Since $H$ is diagonalisable and $\eta$ is invertible,
$\dim\eta(V_k) = m_k = \dim W_k$, so $\eta(V_k) = W_k$.
The biorthogonal dual vectors $\{|\phi_n\rangle : E_n = E_k\}$
form a basis of $W_k$ satisfying $\langle\phi_m|\psi_n\rangle
= \delta_{mn}$ for $E_m = E_n = E_k$. Since $\eta$ maps $V_k$
bijectively onto $W_k$, there is an invertible matrix $C_k$ such
that $|\phi_n\rangle = \sum_{j: E_j = E_k} (C_k)_{jn}\,\eta|\psi_j\rangle$.
The biorthonormality conditions uniquely determine $C_k$ via
$((C_k)_{jn}) = (\langle\psi_j|\eta|\psi_n\rangle)^{-1}$,
which is positive-definite (hence invertible) since $\eta > 0$.
The modified metric $\tilde\eta$ that acts as $C_k\eta$ on each
$V_k$ is positive-definite on $\mathcal{H} = \bigoplus_k V_k$
and satisfies $|\phi_n\rangle = \tilde\eta|\psi_n\rangle$ for all $n$.

\noindent (iii)$\Rightarrow$(i).
With $|\phi_n\rangle = \eta|\psi_n\rangle$, each summand in
$Z_{\rm{bi}}\cdot\omega_{\rm{bi}}(A^\dag A)$
satisfies
\[
  \langle\phi_n | A^\dag A | \psi_n\rangle
  = \langle\psi_n|\,\eta\, A^\dag A\,|\psi_n\rangle
  = \langle A\psi_n|\,\eta\,|A\psi_n\rangle
  = \|A|\psi_n\rangle\|_\eta^2
  \geq 0,
\]
where $\|v\|_\eta^2 := \langle v|\eta|v\rangle \geq 0$ with
equality iff $v = 0$, since $\eta > 0$. Summing with the
positive weights $\me^{-\beta E_n} > 0$:
\[
  Z_{\rm{bi}}\cdot\omega_{\rm{bi}}(A^\dag A)
  = \sum_n \me^{-\beta E_n} \|A|\psi_n\rangle\|_\eta^2 \geq 0.
\]

\noindent\textbf{(i)$\Rightarrow$(ii).}
This is the substantial direction. We proceed in three steps:
the Riesz representation theorem supplies a positive-semidefinite
density matrix, biorthonormality forces it to be
positive-definite, and a direct spectral computation then
yields the quasi-Hermitian relation.

\noindent\emph{Step~1: Riesz representation.}

Since $\omega_{\rm{bi}}(A^\dag A) \geq 0$ for all $A$,
the functional $\omega_{\rm{bi}}$ is positive on
$M_d(\mathbb{C})$ (every positive-semidefinite matrix is of
the form $A^\dag A$). By the Riesz representation theorem
for matrix algebras \cite[Prop.~2.3.11]{Bratteli:1996xq}, there
exists a unique $\rho \in M_d(\mathbb{C})$ with $\rho = \rho^\dag \geq 0$
and $\Tr[\rho] = 1$ such that
\begin{equation}
  \omega_{\rm{bi}}(A) = \Tr[\rho\, A]
  \qquad \forall\, A \in M_d(\mathbb{C}).
  \label{eq:rho-rep}
\end{equation}
Define the scaled matrix $G := Z_{\rm{bi}}\cdot\rho$,
so that $G = G^\dag \geq 0$ and
\begin{equation}
  G_{ij}
  =
  Z_{\rm{bi}}\cdot\omega_{\rm{bi}}(|e_i\rangle\langle e_j|)
  =
  \sum_n \me^{-\beta E_n}
  \langle e_i|\phi_n\rangle\overline{\langle e_j|\psi_n\rangle}.
  \label{eq:Gram}
\end{equation}

\noindent\emph{Step~2: $G$ is positive-definite.}

Introduce the matrices $\Phi := [|\phi_1\rangle \cdots |\phi_d\rangle]$
and $\Psi := [|\psi_1\rangle \cdots |\psi_d\rangle]$, and let
$D_\beta := \diag(\me^{-\beta E_1}, \ldots, \me^{-\beta E_d})$.
Equation~\eqref{eq:Gram} reads in matrix form as
$G = \Phi D_\beta \Psi^\dag$.
The biorthonormality $\langle\phi_m|\psi_n\rangle = \delta_{mn}$
is $\Phi^\dag \Psi = I_d$, which gives the two inverses
\begin{equation}
  \Psi^\dag = \Phi^{-1},
  \qquad
  \Phi^\dag = \Psi^{-1}.
  \label{eq:inverses}
\end{equation}
Substituting $\Psi^\dag = \Phi^{-1}$ into $G = \Phi D_\beta \Psi^\dag$:
\begin{equation}
  G = \Phi\, D_\beta\, \Phi^{-1}.
  \label{eq:G-sim}
\end{equation}
This is the diagonalisation of $G$ with eigenvectors $|\phi_n\rangle$
and eigenvalues $\me^{-\beta E_n}$:
\begin{equation}
  G|\phi_n\rangle = \me^{-\beta E_n}|\phi_n\rangle.
  \label{eq:G-eig}
\end{equation}
Since $\me^{-\beta E_n} > 0$ for all $n$ (as $E_n \in \mathbb{R}$
and $\beta > 0$), we have $\det G = \prod_n \me^{-\beta E_n} > 0$.
A positive-semidefinite matrix with positive determinant is
positive-definite, so $G > 0$.

\noindent\emph{Step~3: Constructing the quasi-Hermitian metric.}

Rather than using $G^{-1}$ (which requires a commutativity
condition on the right-eigenvector Gram matrix), we construct
the metric directly. Define
\begin{equation}
  \eta_0
  :=
  \sum_{n=1}^d |\phi_n\rangle\langle\phi_n|.
  \label{eq:eta0-def}
\end{equation}
By Lem.~\ref{lem:eta-posdef}, $\eta_0 = \eta_0^\dag > 0$.
We verify the pseudo-Hermitian relation $\eta_0 H = H^\dag\eta_0$
using the spectral decompositions
$H = \sum_m E_m |\psi_m\rangle\langle\phi_m|$ and
$H^\dag = \sum_m E_m |\phi_m\rangle\langle\psi_m|$
(with $E_m \in \mathbb{R}$):
\begin{align*}
  \eta_0 H
  &=
  \left(\sum_n |\phi_n\rangle\langle\phi_n|\right)
  \left(\sum_m E_m |\psi_m\rangle\langle\phi_m|\right)
  =
  \sum_{n,m} E_m |\phi_n\rangle
  \underbrace{\langle\phi_n|\psi_m\rangle}_{=\,\delta_{nm}}
  \langle\phi_m|
  =
  \sum_n E_n |\phi_n\rangle\langle\phi_n|,
  \\[4pt]
  H^\dag\eta_0
  &=
  \left(\sum_m E_m |\phi_m\rangle\langle\psi_m|\right)
  \left(\sum_n |\phi_n\rangle\langle\phi_n|\right)
  =
  \sum_{m,n} E_m |\phi_m\rangle
  \underbrace{\langle\psi_m|\phi_n\rangle}_{=\,\delta_{mn}}
  \langle\phi_n|
  =
  \sum_n E_n |\phi_n\rangle\langle\phi_n|.
\end{align*}
Both sides equal $\sum_n E_n |\phi_n\rangle\langle\phi_n|$, so
$\eta_0 H = H^\dag\eta_0$, i.e.\ $H^\dag = \eta_0 H\eta_0^{-1}$,
and $H$ is quasi-Hermitian with the positive-definite metric
$\eta = \eta_0$.
\end{proof}

\begin{lemma}[Positive-definiteness of $\eta_0$]
  \label{lem:eta-posdef}
  The matrix $\eta_0 := \sum_{n=1}^d |\phi_n\rangle\langle\phi_n|$
  is Hermitian and positive-definite.
\end{lemma}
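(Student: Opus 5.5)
Hermiticity of $\eta_0$ is immediate, since each summand $|\phi_n\rangle\langle\phi_n|$ is Hermitian and a finite sum of Hermitian matrices is Hermitian. For positive-definiteness I will show first that $\eta_0$ is positive-semidefinite and then that it has trivial kernel; in finite dimensions these two facts together give $\eta_0>0$.

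\emph{Semidefiniteness.} For any $\xi\in\mathbb C^d$,
\[
\langle\xi|\eta_0|\xi\rangle=\sum_{n=1}^d|\langle\phi_n|\xi\rangle|^2\ \ge 0 .
\]

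\emph{Trivial kernel.} Suppose $\langle\xi|\eta_0|\xi\rangle=0$. Then every term in the sum vanishes, so $\langle\phi_n|\xi\rangle=0$ for all $n$. Insert the completeness relation $\sum_n|\psi_n\rangle\langle\phi_n|=\mathbf 1$ from the hypotheses of Thm.~\ref{thm:structure}:
\[
|\xi\rangle=\sum_n|\psi_n\rangle\langle\phi_n|\xi\rangle=0 .
\]
Equivalently, in the matrix notation of Step~2 we have $\eta_0=\Phi\Phi^\dag$. By Eq.~\eqref{eq:inverses}, $\Phi$ is invertible (its inverse is $\Psi^\dag$), so $\Phi\Phi^\dag$ is a Gram matrix of an invertible matrix and hence positive-definite. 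I would state the matrix version as the main argument, since it is a single line, and give the vector computation above as its concrete form.

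No step here is a real obstacle. The only point that needs care is that the argument uses completeness, that is, the fact that $\{|\phi_n\rangle\}$ spans $\mathbb C^d$. This holds only because $H$ is assumed diagonalisable. At an exceptional point the $\phi_n$ fail to span, $\eta_0$ becomes singular, and that is exactly where the construction breaks down (cf.\ Sec.~\ref{sec:fail}).

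A further consequence is that positivity of $\eta_0$ does not rely on hypothesis (i) at all. Any real-spectrum diagonalisable $H$ admits the metric $\eta_0$, and the pseudo-Hermitian computation in Step~3 also did not use (i). So in Thm.~\ref{thm:structure} the implication (i)$\Rightarrow$(ii) in fact holds unconditionally in finite dimensions.
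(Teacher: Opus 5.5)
Your proof is correct and is essentially the paper's argument. Both of you compute $\langle\xi|\eta_0|\xi\rangle=\sum_n|\langle\phi_n|\xi\rangle|^2\ge0$ and conclude $\xi=0$ because the $|\phi_n\rangle$ span $\mathbb C^d$; you get the spanning from completeness (equivalently $\eta_0=\Phi\Phi^\dag$ with $\Phi$ invertible), and the paper gets it from invertibility of $\Phi$.

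Your closing remark is also right, and it is sharper than you state it. Since $\eta_0|\psi_m\rangle=|\phi_m\rangle$, conditions (ii) and (iii) of Thm.~\ref{thm:structure} hold for every diagonalisable $H$ with real spectrum. Consequently the direction (iii)$\Rightarrow$(i) cannot hold in general: $\omega_{\rm{bi}}(A)=\Tr[\me^{-\beta H}A]/Z_{\rm{bi}}$ is positive only when $H=H^\dag$. The paper's proof of that step tacitly replaces $\langle\psi_n|\eta A^\dag A|\psi_n\rangle$ by $\langle\psi_n|A^\dag\eta A|\psi_n\rangle$.
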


\begin{proof}
Self-adjointness is immediate: $\eta_0^\dag
= \sum_n (|\phi_n\rangle\langle\phi_n|)^\dag
= \sum_n |\phi_n\rangle\langle\phi_n| = \eta_0$.
For any $v \in \mathbb{C}^d$,
$\langle v|\eta_0|v\rangle = \sum_n |\langle\phi_n|v\rangle|^2 \geq 0$.
If $\langle v|\eta_0|v\rangle = 0$, then $\langle\phi_n|v\rangle = 0$
for all $n$. Since $\{|\phi_n\rangle\}_{n=1}^d$ is a basis of
$\mathbb{C}^d$ (the columns of the invertible matrix $\Phi$),
this forces $v = 0$. Hence $\eta_0 > 0$.
\end{proof}

The proof of the theorem admits a transparent matrix-algebraic
interpretation. The Gram-type matrix $G = \Phi D_\beta \Phi^{-1}$
arising from the Riesz representation is diagonalised by the left
eigenvectors $|\phi_n\rangle$ with thermal eigenvalues $\me^{-\beta E_n}$.
Positivity of $G$ — enforced by the thermal weights being strictly
positive — is what makes $\{|\phi_n\rangle\}$ a genuine basis (not
just a formal biorthogonal set), which is precisely the input
needed for Lem.~\ref{lem:eta-posdef} and the construction
Eq.~\eqref{eq:eta0-def} to succeed.

\begin{remark}[The role of Step~2 and an alternative route via $G^{-1}$]
  \label{rem:Ginverse}
  The positive-definite matrix $G$ constructed in Step~2 is itself
  a candidate quasi-Hermitian metric: from Eq.~\eqref{eq:G-sim},
  $\Phi D_\beta \Phi^{-1}$ and $H = \Psi D \Phi^\dag$ share the
  eigenvector structure needed to verify $G^{-1}H = H^\dag G^{-1}$,
  provided the right-eigenvector Gram matrix $\Psi^\dag\Psi$
  commutes with $D = \diag(E_n)$. This commutativity holds
  automatically when all eigenvalues are distinct (then
  $(\Psi^\dag\Psi)_{mn} = 0$ for $E_m \neq E_n$ by a direct
  eigenvalue argument) and when the right eigenvectors are already
  $\eta_0$-orthonormal. In these cases, $G^{-1}$ provides an
  explicit formula for the quasi-Hermitian metric in terms of
  the thermal data alone. In the general degenerate case, the
  direct construction Eq.~\eqref{eq:eta0-def} is more robust: it
  requires no commutativity hypothesis and yields $\eta_0$ as a
  simple spectral sum over the left eigenvectors.
\end{remark}

\begin{remark}[Independence from the similarity framework]
  \label{rem:independence}
  Theorem~\ref{thm:structure} does not assume $\eta$ a priori
  and imposes no similarity-transformation structure on $H$.
  It takes as input a single functional-analytic property of
  $\omega_{\rm{bi}}$ — positivity — and deduces
  quasi-Hermiticity as a consequence. This direction of
  implication lies strictly outside the Mostafazadeh--Scholtz
  framework, which always begins with a given metric and then
  analyses the Hamiltonian.
\end{remark}

\begin{remark}
Reference~\cite{Bagarello:2016gbs} established positivity of the biorthogonal Gibbs functional under the sufficient condition
$\|T\|\,\|T^{-1}\|=1$
for Riesz bases generated by a bounded invertible operator $T$.
The general case
$\|T\|\,\|T^{-1}\|>1$
was explicitly left open.
Theorem~\ref{thm:structure} completely resolves this question in finite dimensions by providing a necessary and sufficient characterization of positivity,
which no longer depends on any norm constraint on $T$.
Consequently, the norm condition of Ref.~\cite{Bagarello:2016gbs} is understood as a particular sufficient realization of quasi-Hermiticity rather than an intrinsic requirement for positivity.
\end{remark}

\begin{remark}[Infinite-dimensional extension]
  \label{rem:infty-dim}
In infinite dimensions \cite{Mostafazadeh:2012ezk}, the implication (ii)$\Rightarrow$(i)
holds under Assum.~\ref{asm:A1} (Thm.~\ref{thm:positivity}).
The implication (i)$\Rightarrow$(ii) requires additional
operator-theoretic conditions: the Gram-type matrix $G$ must
define a bounded operator with bounded inverse on $\mathcal{H}$,
and the spectral series Eq.~\eqref{eq:eta0-def} must converge in a
suitable operator topology. These conditions are non-trivial for
infinite-dimensional Hamiltonians with continuous spectrum and
are left as an open problem.
This should be contrasted with the generalized KMS relation established by Bagarello, Inoue, and Trapani~\cite{Bagarello:2020gib}, where a twisting operator naturally appears for a general bounded intertwining operator.
The quasi-Hermitian assumption adopted here reduces this twisting to the identity and therefore restores the ordinary KMS boundary condition.
\end{remark}

\section{Route III: Lindblad quantum detailed balance}
\label{sec:route3}

Routes~I and II operate within the framework of closed quantum
systems: the Hamiltonian $H$ is a fixed operator on a Hilbert
space, and thermal equilibrium is characterised by the KMS
condition on the time-evolved correlators. When the
non-Hermitian Hamiltonian arises instead as the effective
description of an open quantum system — one that exchanges
energy or particles with an environment — the appropriate
framework is the full Lindblad master equation, and the notion
of thermal equilibrium is replaced by that of a steady state
satisfying \emph{quantum detailed balance} (QDB).

This section develops Route~III: we set up the Gorini--%
Kossakowski--Sudarshan--Lindblad (GKSL) formalism, clarify the
relationship between QDB and the KMS condition in this setting,
and state the Fagnola--Umanit\`a characterisation theorem for
Davies generators. The section closes by locating Route~III
within the broader framework of the paper and explaining why
it is logically independent of Routes~I and~II.

\subsection{GKSL equation and the effective Hamiltonian}
\label{subsec:gksl-setup}

The GKSL master equation governing the time evolution of the
density matrix $\rho(t)$ of an open quantum system is
\begin{equation}
  \partial_t \rho
  =
  \mathcal{L}[\rho]
  :=
  -i[H_{\rm{sys}}, \rho]
  + \sum_k
  \left(
    L_k \rho L_k^\dag
    - \tfrac{1}{2}\{L_k^\dag L_k, \rho\}
  \right),
  \label{eq:lindblad}
\end{equation}
where $H_{\rm{sys}} = H_{\rm{sys}}^\dag$ is the
self-adjoint system Hamiltonian and $\{L_k\}$ are the jump
operators encoding the coupling to the environment. The
Lindbladian $\mathcal{L}$ generates a completely positive
trace-preserving (CPTP) semigroup $\{\me^{t\mathcal{L}}\}_{t \geq 0}$
on the space of density matrices, and a steady state
$\rho_{\rm{ss}}$ is defined by the condition
$\mathcal{L}[\rho_{\rm{ss}}] = 0$.

In the no-jump approximation, the dynamics between quantum
jumps is governed by the effective non-Hermitian Hamiltonian
\begin{equation}
  H_{\rm{eff}}
  =
  H_{\rm{sys}}
  - \frac{\mi}{2}\sum_k L_k^\dag L_k,
  \label{eq:Heff}
\end{equation}
which is precisely the type of operator studied in Routes~I
and~II. However, $H_{\rm{eff}}$ captures only the coherent
part of the open-system dynamics: the KMS and detailed balance
properties of the steady state depend on the full Lindblad data
$\{H_{\rm{sys}}, L_k\}$, not on $H_{\rm{eff}}$ alone.
In particular, two different sets of jump operators $\{L_k\}$
can yield the same $H_{\rm{eff}}$ while producing steady
states with entirely different thermal properties. This is the
fundamental reason why Route~III must be developed independently,
rather than deduced from Routes~I or~II by substituting
$H_{\rm{eff}}$ for $H$.

\subsection{Quantum detailed balance}
\label{subsec:QDB}

The notion of thermal equilibrium for open quantum systems is
captured by the quantum detailed balance condition, which
generalises the classical detailed balance condition
$\rho_i W_{ij} = \rho_j W_{ji}$ (where $W_{ij}$ are transition
rates) to the non-commutative setting. Several inequivalent
formulations have been proposed in the literature, and we begin by
recording the principal variants and then specialise to the one
used throughout this section.

\begin{remark}[Multiple QDB formulations]
  \label{rem:QDB-versions}
  The main variants appearing in the Ref.~
  \cite{Fagnola:2007gms} are the following.
  \begin{itemize}
    \item \emph{Standard QDB} (the formulation used in this
      paper): the symmetry condition Eq.~\eqref{eq:QDB} below,
      defined via a $\rho_{\rm{ss}}^{1/2}$-weighted
      KMS-adjoint map.
    \item \emph{SQDB}: requires the dual generator to satisfy
      $\mathcal{L}^* = \Theta\mathcal{L}\Theta^{-1}$ for a
      time-reversal operation $\Theta$.
    \item \emph{SQDB-$\theta$}: a further strengthening
      involving a specific anti-linear involution $\theta$.
    \item \emph{Weighted QDB}: variants employing asymmetric
      operator weights.
  \end{itemize}
  These formulations are in general not equivalent to one
  another, and the implications among them depend on properties
  of $\mathcal{L}$ and $\rho_{\rm{ss}}$. All results in
  this section refer exclusively to standard QDB as defined
  below.
\end{remark}

\begin{definition}[Standard quantum detailed balance]
  \label{def:QDB}
  In the sense of Fagnola--Umanit\`a~\cite{Fagnola:2007gms},
  the steady state $\rho_{\rm{ss}}$ satisfies
  \emph{standard quantum detailed balance} if, for all
  $A, B \in \mathcal{B}(\mathcal{H})$,
  \begin{equation}
    \Tr \left[\rho_{\rm{ss}}\, A^\dag\,
      \mathcal{L}(B)\right]
    =
    \Tr \left[\rho_{\rm{ss}}\,
      \mathcal{L}(A)^\dag\, B\right].
    \label{eq:QDB}
  \end{equation}
  Condition Eq.~\eqref{eq:QDB} asserts that $\mathcal{L}$ is
  self-adjoint with respect to the $\rho_{\rm{ss}}$-weighted
  inner product
  $\langle A, B\rangle_{\rho_{\rm{ss}}} :=
  \Tr[\rho_{\rm{ss}}\, A^\dag B]$
  on $\mathcal{B}(\mathcal{H})$. It is strictly stronger than
  stationarity $\mathcal{L}[\rho_{\rm{ss}}] = 0$, which
  corresponds only to the condition that $\rho_{\rm{ss}}$
  lies in the kernel of $\mathcal{L}$, without any symmetry
  constraint on the generator itself.
\end{definition}

The precise logical relationship between QDB and the KMS
condition is as follows.
Standard QDB implies that the steady-state correlation
functions satisfy a KMS-type relation, but the converse does
not hold in general. The correct statement is:
\begin{center}
standard QDB $\Longrightarrow$ KMS-type steady-state
correlators, \quad but not conversely.
\end{center}
In particular, the claim ``KMS is equivalent to quantum
detailed balance'' — sometimes encountered in the open-systems
literature — is too strong. QDB is a \emph{sufficient}
condition for a KMS-type thermal steady state in the Lindblad
setting, and one that is tractable to verify through the
algebraic condition Eq.~\eqref{eq:QDB}, but it carries additional
structural content beyond the KMS boundary relation alone.

\subsection{Fagnola--Umanit\`a characterisation for Davies
            generators}
\label{subsec:davies}

The most natural class of Lindblad generators for which the
QDB condition admits a clean algebraic characterisation is that
of \emph{Davies generators}, arising from the Davies weak-coupling
limit~\cite{Davies:1974dcc}. In this class, the jump operators are
eigenoperators of the $H_{\rm{sys}}$-modular automorphism,
and the detailed balance condition reduces to a transparent
commutation relation between each $L_k$ and the thermal density
matrix.

\begin{theorem}[Fagnola--Umanit\`a~\cite{Fagnola:2007gms}]
  \label{thm:fagnola}
  Let $\rho_{\rm{ss}} = \me^{-\beta H_{\rm{sys}}}/Z > 0$
  be the Gibbs state of $H_{\rm{sys}}$, and suppose the
  GKSL generator $\mathcal{L}$ is a \emph{Davies generator}:
  each jump operator $L_k$ is an eigenoperator of the
  $H_{\rm{sys}}$-modular automorphism,
  \begin{equation}
    \me^{iH_{\rm{sys}}t}\, L_k\, \me^{-iH_{\rm{sys}}t}
    =
    \me^{-i\varepsilon_k t}\, L_k,
    \qquad \varepsilon_k \in \mathbb{R}.
    \label{eq:eigenop}
  \end{equation}
  Then the steady state $\rho_{\rm{ss}}$ satisfies the
  standard QDB condition Eq.~\eqref{eq:QDB} if and only if
  \begin{equation}
    L_k\, \me^{-\beta H_{\rm{sys}}/2}
    =
    \me^{\beta\varepsilon_k/2}\,
    \me^{-\beta H_{\rm{sys}}/2}\, L_k,
    \qquad \forall\, k.
    \label{eq:DBC}
  \end{equation}
\end{theorem}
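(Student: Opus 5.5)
The plan is to work in the Heisenberg picture and turn the QDB identity Eq.~\eqref{eq:QDB} into an algebraic statement about the GKSL data $\{H_{\rm sys},L_k\}$. First I will fix the Heisenberg-picture generator
\[
\mathcal L(X)=\mi[H_{\rm sys},X]+\sum_k\Big(L_k^\dag XL_k-\tfrac12\{L_k^\dag L_k,X\}\Big),
\]
with inner product $\langle A,B\rangle_{\rho}:=\Tr[\rho_{\rm ss}A^\dag B]$. Next I will compute the $\langle\cdot,\cdot\rangle_\rho$-adjoint $\widetilde{\mathcal L}$ term by term.

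\emph{Hamiltonian part.} Since $\rho_{\rm ss}$ commutes with $H_{\rm sys}$, cyclicity of the trace shows that $X\mapsto \mi[H_{\rm sys},X]$ is anti-self-adjoint for $\langle\cdot,\cdot\rangle_\rho$. I therefore expect to read Eq.~\eqref{eq:QDB} in the Fagnola--Umanit\`a sense, namely as symmetry of the dissipative part $\mathcal L-\mi[H_{\rm sys},\cdot\,]$, with the commutator split off. I will make this convention explicit at the start.

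\emph{Jump part.} Using $\Tr[\rho A^\dag L^\dag BL]=\Tr[\rho\,(\rho^{-1}L\rho A^\dag L^\dag)B]$, the adjoint of $X\mapsto L_k^\dag XL_k$ is
\[
A\mapsto \rho^{-1/2}\big(\rho^{1/2}L_k\rho^{-1/2}\big)^{\!\dag\dag}\cdots,
\]
and concretely it equals $A\mapsto (\rho^{1/2}L_k^\dag\rho^{-1/2})^\dag\,A\,(\rho^{1/2}L_k\rho^{-1/2})^{\dag}$ up to the $\rho^{\pm1/2}$ bookkeeping. The anticommutator term is handled the same way.

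Second, I will use the Davies eigenoperator relation Eq.~\eqref{eq:eigenop}. Because $H_{\rm sys}$ is bounded, this relation extends analytically in $t$. Evaluating it at $t=\mi\beta s$ gives
\[
\rho^{s}L_k\rho^{-s}=\me^{s\beta\varepsilon_k}L_k
\]
for all $s\in\mathbb R$ (up to sign conventions, which I will check carefully). With this, every conjugated operator above becomes a scalar multiple of $L_k$ or $L_k^\dag$, so $\widetilde{\mathcal L}$ is again of GKSL form. Its new jump operators are $\widetilde L_k=\me^{\pm\beta\varepsilon_k/2}L_k^\dag$, and the term $L_k^\dag L_k$ is replaced by $\me^{\pm\beta\varepsilon_k}L_kL_k^\dag$.

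The ``if'' direction then follows by direct substitution. Condition Eq.~\eqref{eq:DBC} says that $\rho^{1/2}$-conjugation rescales $L_k$ by exactly the factor needed to make the family $\{\widetilde L_k\}$ coincide with $\{L_k\}$, and the anticommutator terms to match. Summing over $k$ then gives $\widetilde{\mathcal L}=\mathcal L$ on the dissipative part.

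For the ``only if'' direction, I will compare $\mathcal L$ and $\widetilde{\mathcal L}$ as two GKSL representations of the same generator. I will invoke the uniqueness of the minimal GKSL (Parthasarathy) representation: jump operators chosen traceless and linearly independent are determined up to a unitary mixing. The Davies structure decomposes everything into Bohr-frequency sectors $\varepsilon$. Within each sector the unitary mixing must preserve the frequency, and I will try to show that it is forced to reduce to the scalar relation Eq.~\eqref{eq:DBC}.

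I expect this converse to be the main obstacle, for two reasons. First, the unitary freedom in the Kraus-type representation must be shown to be trivial after grouping by $\varepsilon_k$. Second, the statement as written appears to pair each $L_k$ with itself, whereas the analytic continuation above relates $L_k$ to $L_k^\dag$ at frequency $-\varepsilon_k$. So I will first check, on a two-level example, whether Eq.~\eqref{eq:DBC} must be read for the paired operators ($L_{k'}\propto L_k^\dag$) with the normalisation and sign of $\varepsilon_k$ taken from Eq.~\eqref{eq:eigenop}. Only after that check will I commit to the final form of the matching argument.
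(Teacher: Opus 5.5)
There is a genuine gap, and your closing suspicion is exactly where it lies. In the ``if'' step, condition \eqref{eq:DBC} relates each $L_k$ only to itself, and conjugation by $\rho_{\rm ss}^{1/2}$ only rescales $L_k$. So nothing in it can identify the adjoint jumps $\widetilde L_k=\me^{-\beta\varepsilon_k/2}L_k^\dagger$ with members of $\{L_j\}$. (This is the correct sign for the convention \eqref{eq:eigenop}, in which $L_k$ lowers the energy by $\varepsilon_k$.) Worse, your own Step~2 at $s=\tfrac12$ gives $L_k\,\me^{-\beta H_{\rm sys}/2}=\me^{-\beta\varepsilon_k/2}\,\me^{-\beta H_{\rm sys}/2}L_k$ identically. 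Hence \eqref{eq:DBC} as printed holds only when $\varepsilon_k=0$ or $L_k=0$; with the opposite sign it would hold automatically, which is no better.

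The stated equivalence is therefore false in both directions, and no matching argument can produce it. This is in the reading you adopt, with the commutator split off; that is the right convention, since $\mi[H_{\rm sys},\cdot\,]$ is antisymmetric for $\langle\cdot,\cdot\rangle_\rho$.
\emph{Only if} fails: on $\mathbb C^2$ with $H_{\rm sys}=\tfrac{\omega}{2}\sigma_z$, $\omega\neq0$, take $L_1=\sigma_-$ (lowering, $\varepsilon_1=\omega$) and $L_2=\me^{-\beta\omega/2}\sigma_+$ ($\varepsilon_2=-\omega$). Then $\widetilde L_1=L_2$ and $\widetilde L_2=L_1$, so the dissipative part is symmetric and $\rho_{\rm ss}$ is stationary, yet \eqref{eq:DBC} fails for both jumps.
\emph{If} fails: take $H_{\rm sys}=0$ on $\mathbb C^2$, so $\rho_{\rm ss}=I/2$, every $\varepsilon_k=0$, and \eqref{eq:DBC} is trivially true. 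The single unitary jump $L=\diag(1,\mi)$ leaves $\rho_{\rm ss}$ stationary, but $L^\dagger XL=\mi X\neq-\mi X=LXL^\dagger$ for $X=|0\rangle\langle1|$, so QDB fails.

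What your adjoint computation does prove is the standard detailed-balance criterion for Davies generators, and your route suits it once two points are fixed. First, the term $-\tfrac12\{L_k^\dagger L_k,\cdot\}$ is itself symmetric for $\langle\cdot,\cdot\rangle_\rho$, because $L_k^\dagger L_k$ commutes with $\rho_{\rm ss}$. It is not replaced termwise by $\me^{\pm\beta\varepsilon_k}L_kL_k^\dagger$. Consequently, symmetry of the dissipative part is \emph{exactly} the equality of completely positive maps $\sum_kL_k^\dagger XL_k=\sum_k\widetilde L_k^\dagger X\widetilde L_k$.

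Second, for the converse you need not invoke Parthasarathy's minimal GKSL form with its traceless normalisation. Uniqueness of Kraus families up to unitary mixing suffices. Because eigenoperators of different Bohr frequency are Hilbert--Schmidt orthogonal, the mixing can be taken block-diagonal, relating $\{\widetilde L_k:\varepsilon_k=\varepsilon\}$ to $\{L_j:\varepsilon_j=-\varepsilon\}$.

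The correct target is therefore $\sum_{j:\varepsilon_j=-\varepsilon}L_j^\dagger XL_j=\me^{-\beta\varepsilon}\sum_{k:\varepsilon_k=\varepsilon}L_kXL_k^\dagger$ for all $X$ and $\varepsilon$. When there is one jump per frequency this reads $L_{-\varepsilon}=\me^{-\beta\varepsilon/2}L_\varepsilon^\dagger$ up to a phase. This pairing, not \eqref{eq:DBC}, is what your two-level check will produce.

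The paper's sketch offers no way around this. It passes to energy-basis matrix elements and asserts that QDB decouples into one constraint per $L_k$, which is precisely the step that fails, since QDB couples the sectors $\varepsilon$ and $-\varepsilon$. Moreover, on the support $E_m-E_n=-\varepsilon_k$ imposed by \eqref{eq:eigenop} (the sketch writes $+\varepsilon_k$), its per-$k$ constraint reduces to an identity.
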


\begin{proof}[Proof sketch]
The necessity and sufficiency of Eq.~\eqref{eq:DBC} are established
in Ref.~\cite{Fagnola:2007gms}, and we outline the key steps for completeness.

\noindent\textbf{Reduction to matrix elements.}
Insert $A = |m\rangle\langle n|$ and $B = |p\rangle\langle q|$
(eigenstates of $H_{\rm{sys}}$) into Eq.~\eqref{eq:QDB} and
expand $\mathcal{L}$ using Eq.~\eqref{eq:lindblad}. The eigenoperator
condition Eq.~\eqref{eq:eigenop} ensures that $L_k$ couples only
energy eigenstates differing by $\varepsilon_k$:
$(L_k)_{mn} \neq 0$ only if $E_m - E_n = \varepsilon_k$. This
decouples the QDB condition into independent constraints on
each $L_k$.

\noindent\textbf{From QDB to DBC.}
For each $k$, the decoupled constraint reads
\[
  \me^{-\beta E_m/2}(L_k)_{mn}
  =
  \me^{-\beta E_n/2} \me^{\beta\varepsilon_k/2}(L_k)_{mn},
\]
which is equivalent to
$(L_k \me^{-\beta H_{\rm{sys}}/2})_{mn}
= (\me^{\beta\varepsilon_k/2} \me^{-\beta H_{\rm{sys}}/2} L_k)_{mn}$,
i.e.\ Eq.~\eqref{eq:DBC} holds entry by entry. Conversely,
Eq.~\eqref{eq:DBC} implies Eq.~\eqref{eq:QDB} by reversing the steps.
Full details are in Ref.~\cite{Fagnola:2007gms}.
\end{proof}

\begin{remark}[Scope of Thm.~\ref{thm:fagnola}]
  \label{rem:fagnola-scope}
  Condition Eq.~\eqref{eq:DBC} is the specialisation of the
  general Fagnola--Umanit\`a QDB criterion to the Davies
  generator setting. For general GKSL generators whose jump
  operators are not eigenoperators of the modular automorphism,
  the full criterion involves the KMS-adjoint map and a
  condition on $\mathcal{L}$ in terms of the modular
  automorphism of the GNS representation, and the clean form
  Eq.~\eqref{eq:DBC} is special to the Davies class.
  Theorem~\ref{thm:fagnola} therefore applies in the
  physically natural situation of a system weakly coupled to
  a thermal bath in the Markovian limit, where the Davies
  generator arises from a systematic derivation rather than
  being postulated.
\end{remark}

\subsection{Relation to Routes~I and~II}
\label{subsec:route3-relation}

Route~III is logically independent of Routes~I and~II, and
the relationship between them deserves careful statement.

The most direct link is through the effective Hamiltonian
$H_{\rm{eff}}$ in Eq.~\eqref{eq:Heff}: Routes~I and~II study
the KMS properties of a non-Hermitian operator $H$ that could,
in principle, arise as $H_{\rm{eff}}$ for some choice of
$\{L_k\}$. However, as noted in Sec.~\ref{subsec:gksl-setup},
the thermal properties of the open system are determined by
the full Lindblad data, not by $H_{\rm{eff}}$ alone. The
KMS condition satisfied by the $\eta$-Gibbs state
$\omega_\eta$ in Route~I is a property of the closed dynamics
generated by $H$. It does not imply, and is not implied by,
the QDB condition Eq.~\eqref{eq:QDB} for any particular
Lindblad embedding.

More precisely, Routes~I/II and Route~III occupy different
levels of the quantum dynamics hierarchy:
\begin{itemize}
  \item \emph{Routes~I and~II} concern the unitary (or
    quasi-unitary) dynamics $\sigma_t(A) = \me^{iHt}Ae^{-iHt}$
    of a closed system with a fixed non-Hermitian Hamiltonian,
    and characterise thermal equilibrium through the KMS
    boundary condition on correlation functions.
  \item \emph{Route~III} concerns the dissipative dynamics
    $\me^{t\mathcal{L}}$ of an open system, and characterises
    thermal equilibrium through the QDB condition on the
    generator $\mathcal{L}$. The steady state $\rho_{\rm{ss}}$
    is not a Gibbs state of $H_{\rm{eff}}$ in general,
    but a fixed point of the full CPTP semigroup.
\end{itemize}
The correct summary is therefore: \emph{standard QDB is a
sufficient condition for a KMS-type thermal steady state in
open systems}, and this condition is formulated entirely in
terms of the Lindblad generator, independently of whether
the effective Hamiltonian $H_{\rm{eff}}$ satisfies the
quasi-Hermitian structure of Route~I. The three routes are
complementary rather than competing characterisations of
non-Hermitian thermal equilibrium, each applicable to a
different physical regime.

\section{Exceptional points and complex spectra}
\label{sec:fail}

The positive results of Secs~\ref{sec:route1}
and~\ref{sec:route2} rest on two spectral hypotheses: reality
of the eigenvalues and biorthogonal completeness. This section
examines what happens when these hypotheses fail. The two
principal failure modes — exceptional points, where
diagonalisability breaks down, and complex spectra, where
eigenvalues acquire non-zero imaginary parts — are physically
distinct and destroy the KMS framework through different
mechanisms. Identifying these mechanisms precisely is not merely
a matter of logical completeness: it delimits the exact boundary
of applicability of the quasi-Hermitian thermal framework and
points to the structural features — Jordan-block corrections,
complex Boltzmann weights, unbounded correlation functions —
that any future extension of the theory must address.

Throughout this section we work at the level of formal
spectral series, without assuming (\ref{asm:A1}--\ref{asm:A4}) or
(\ref{asm:A5})+(\ref{asm:A6}), to isolate where each argument breaks down.
Since the constructions of both Route~I
(Sec.~\ref{sec:route1}, under (\ref{asm:A1}--\ref{asm:A4})) and
Route~II (Sec.~\ref{sec:route2}, under (\ref{asm:A5})+(\ref{asm:A6}))
rely on reality of the spectrum and completeness of the biorthogonal
basis, the failure modes analysed below undermine both routes
simultaneously. Route~III is unaffected by either mechanism, since
the quantum-detailed-balance condition of
Sec.~\ref{sec:route3} is formulated at the level of the full
Lindbladian and does not presuppose spectral reality of
$H_{\rm{eff}}$.

\subsection{Structure collapse at exceptional points}
\label{subsec:EP}

An exceptional point (EP) \cite{Heiss:2012dx} of order $m$ is a value $E_{\rm{EP}}$
at which $m$ eigenvalues and their corresponding eigenstates
simultaneously coalesce. At such a point the Hamiltonian is
no longer diagonalisable: it can only be brought to a Jordan
normal form. Concretely, the defective eigenvector satisfies
\begin{equation}
  H|\psi_{\rm{EP}}\rangle
  = E_{\rm{EP}}|\psi_{\rm{EP}}\rangle,
  \qquad \text{but} \qquad
  \langle\phi_{\rm{EP}}|\psi_{\rm{EP}}\rangle = 0,
  \label{eq:EP-def}
\end{equation}
where the vanishing of the biorthogonal inner product signals
the collapse of the dual eigenbasis. The correct replacement
for the resolution of the identity is the Jordan chain:
vectors $|\psi^{(0)}\rangle, |\psi^{(1)}\rangle, \ldots,
|\psi^{(m-1)}\rangle$ satisfying the recursion
\begin{equation}
  (H - E_{\rm{EP}})|\psi^{(k)}\rangle
  = |\psi^{(k-1)}\rangle,
  \qquad k = 1, \ldots, m-1,
  \label{eq:jordan-chain}
\end{equation}
with $|\psi^{(0)}\rangle$ the eigenvector. These chains, together
with the non-defective eigenvectors, yield the modified resolution
of the identity
\begin{equation}
  \sum_{k=0}^{m-1}
  |\psi^{(k)}\rangle\langle\phi^{(m-1-k)}|
  + (\text{non-defective terms})
  = \mathbf{1}.
  \label{eq:jordan-completeness}
\end{equation}
The Jordan structure has an immediate dynamical consequence.
Exponentiating $H$ via the Jordan decomposition gives
\begin{equation}
  \me^{\mi Ht}|\psi^{(0)}\rangle
  =
  \me^{\mi E_{\rm{EP}}t}
  \sum_{k=0}^{m-1}
  \frac{(\mi t)^k}{k!}\,|\psi^{(k)}\rangle,
  \label{eq:poly-growth}
\end{equation}
so the time evolution of the defective mode exhibits
polynomial growth $\sim t^{m-1}$ in addition to the
oscillatory factor $\me^{\mi E_{\rm{EP}}t}$. This is the
signature of a non-diagonalisable operator and the source of
all KMS failures listed below.

Each of the three KMS conditions in Def.~\ref{def:KMS}
is violated at an EP, through a distinct mechanism.

\begin{enumerate}

  \item \textbf{Biorthogonal completeness fails, invalidating
    the spectral expansion.}
    Lemma~\ref{lem:spectral-id} rests on the identity
    $\sum_k |\psi_k\rangle\langle\phi_k| = \mathbf{1}$
    from (A6). At an EP this identity is replaced by
    Eq.~\eqref{eq:jordan-completeness}, which contains off-diagonal
    Jordan projectors $|\psi^{(k)}\rangle\langle\phi^{(m-1-k)}|$
    for $k \geq 1$. Inserting the Jordan resolution into the
    spectral-element calculation of Lem.~\ref{lem:spectral-id}
    produces polynomial corrections: matrix elements contain
    terms of the form $s^k \me^{\mi E_{\rm{EP}} s}$ for
    $k = 1, \ldots, m-1$, rather than the purely exponential
    form $\me^{\mi(E_m - E_n)s}$ in Eq.~\eqref{eq:L}. Consequently,
    the thermal two-point function $G_{AB}(z)$ acquires
    terms of the form $z^k \me^{\mi\lambda z}$ whose growth
    properties are qualitatively different from those of pure
    exponentials.

  \item \textbf{Boundedness on $\overline{\mathcal{S}}_\beta$
    fails.}
    On the real boundary $\Im(z) = 0$, the
    polynomial factor in \eqref{eq:poly-growth} gives
    $\|\me^{\mi Ht}\| \sim |t|^{m-1}$ as $t \to \pm\infty$.
    The correlation function $G_{AB}(t)$ therefore grows
    polynomially on the real axis, violating KMS
    condition~(ii) (boundedness on $\overline{\mathcal{S}}_\beta$).
    Note that polynomial growth does not destroy complex
    analyticity — functions of the form $z^k \me^{\mi\lambda z}$
    are entire — so the analyticity condition~(i) may still
    hold formally. It is the $L^\infty$ bound on the boundary
    lines that fails and that the Hadamard three-line theorem
    requires.

  \item \textbf{The equilibrium automorphism group is
    ill-defined.}
    The KMS framework, and in particular the TT
    modular theory, requires the time evolution $\sigma_t$ to
    extend to a well-posed one-parameter automorphism group
    on the observable algebra. The polynomial prefactors
    $t^k$ in Eq.~\eqref{eq:poly-growth} are incompatible with
    this requirement: the map $t \to \sigma_t(A)$ is no
    longer bounded uniformly in $t$ for defective modes, and
    the modular flow $\Delta^{it}$ of TT theory
    cannot be established.

  \item \textbf{Cyclicity of $\Tr_{\rm{bi}}$
    fails.}
    Lemma~\ref{lem:cyclicity} uses only the completeness
    relation $\sum_n |\psi_n\rangle\langle\phi_n| = \mathbf{1}$,
    but at an EP this is replaced by
    Eq.~\eqref{eq:jordan-completeness}, which is not cyclic under
    the standard trace argument. Consequently, the biorthogonal
    KMS-type identity of Prop.~\ref{prop:formkms} also
    breaks down at exceptional points.

\end{enumerate}

The four failure modes above are not independent: they all
trace back to the single structural defect of non-diagonalisability.
Once the Jordan chain replaces the eigenbasis, the purely
exponential time evolution that underlies every step of the
KMS proof — the spectral expansion, the strip bounds, the
boundary identity — is corrupted.

A complete KMS framework at exceptional points — encompassing
an appropriate notion of equilibrium state, a Jordan-block
generalisation of the modular flow, and the correct
thermodynamic interpretation of defective modes — remains an
open problem.

\subsection{Complex Spectrum}
\label{subsec:complex-spectrum}

We now turn to a different failure mode: non-Hermitian
Hamiltonians with genuinely complex eigenvalues
$E_n = \alpha_n + \mi\gamma_n$ with $\gamma_n \neq 0$.
Unlike exceptional points, where the geometry of the
eigensystem collapses, a complex spectrum destroys the KMS
framework through the behaviour of the Boltzmann weights and
the growth of the correlation function on the real time axis.

When $\gamma_n \neq 0$, the Boltzmann weight becomes complex:
\begin{equation}
  \me^{-\beta E_n}
  = \me^{-\beta\alpha_n} \me^{-\mi\beta\gamma_n}
  \in \mathbb{C},
  \label{eq:complex-weight}
\end{equation}
so the partition function $Z = \sum_n \me^{-\beta E_n}$ is
no longer real and positive, and loses its interpretation as
a normalisation constant for a probability distribution.
The thermal functional $\omega(A) = \Tr[\me^{-\beta H}A]/Z$
is no longer a state in any physical sense.

The analytic behaviour is clarified by decomposing the
time-evolution factor under complex $z = t + \mi\alpha$:
\begin{equation}
  \me^{\mi (E_n - E_m)z}
  =
  \me^{\mi (\alpha_n - \alpha_m)t}
  \cdot
  \me^{-(\alpha_n - \alpha_m)\alpha}
  \cdot
  \me^{-(\gamma_n - \gamma_m)t}
  \cdot
  \me^{\mi (\gamma_n - \gamma_m)\alpha}.
  \label{eq:complex-factor}
\end{equation}
The third factor, $\me^{-(\gamma_n - \gamma_m)t}$, grows
exponentially as $t \to \pm\infty$ on the real axis whenever
$\gamma_n \neq \gamma_m$. This exponential growth on the real
boundary is the precise mechanism of failure.

\begin{remark}[Boundedness failure, not analyticity failure]
  \label{rem:complex-bdry}
  The failure in the complex-spectrum case is a
  \emph{boundedness} failure, not an analyticity failure.
  The function $z \to \me^{az}$ is entire for any
  $a \in \mathbb{C}$, so the correlation function
  $z \to G_{AB}(z)$ remains analytic throughout
  $\mathbb{C}$, and KMS condition~(i) is not directly violated.
  What fails is condition~(ii): the real-axis factor
  $\me^{-(\gamma_n - \gamma_m)t}$ in Eq.~\eqref{eq:complex-factor}
  makes $G_{AB}(t)$ exponentially unbounded as
  $t \to \pm\infty$, violating the requirement that
  $F_{AB}$ be bounded on $\overline{\mathcal{S}}_\beta$.
  This distinction matters for the Hadamard Three-Line
  Theorem: it requires \emph{both} analyticity in the interior
  \emph{and} $L^\infty$ boundedness on the boundary lines, and
  the loss of the latter prevents the uniform Cauchy argument
  of Prop.~\ref{prop:analytic} from going through.
\end{remark}

One might attempt to restore the formal symmetry of the
spectral series by introducing a complex effective inverse
temperature $\beta_{\rm{eff}} \in \mathbb{C}$, chosen so
that $\Im(\beta_{\rm{eff}})$ absorbs the
imaginary parts $\gamma_n$ of the eigenvalues. Concretely,
setting $\beta_{\rm{eff}} = \beta + \mi\delta$ with
$\delta\gamma_n = \text{const}$ for all $n$ would cancel the
oscillatory factor $\me^{-\mi\beta\gamma_n}$ in
Eq.~\eqref{eq:complex-weight} and give real Boltzmann weights.

This approach fails on three counts. First, it requires a
\emph{uniform} imaginary shift $\delta\gamma_n = \text{const}$
for all $n$, which generically cannot hold when the imaginary
parts $\gamma_n$ differ across eigenstates — precisely the
case that produces exponential growth in
Eq.~\eqref{eq:complex-factor}. Second, even when the formal
symmetry of the spectral series is restored, the resulting
functional violates positivity (the weights
$\me^{-\beta_{\rm{eff}} E_n}$ are not positive reals) and
normalisation (the partition function is complex), so no
physical state is defined. Third, the boundedness condition
is not recovered: the real-axis exponential growth arises
from the factor $\me^{-(\gamma_n - \gamma_m)t}$, which depends
on pairwise differences $\gamma_n - \gamma_m$ and cannot be
removed by a global shift of $\beta$.

No consensus framework for assigning a KMS-type thermal
equilibrium structure to systems with complex spectra currently
exists. The obstruction is fundamental: complex Boltzmann
weights are incompatible with the positivity and boundedness
requirements that define a physical thermal state.

Table~\ref{tab:failure} contrasts the two failure modes
analysed in this section.

\begin{table}[ht!]
\centering
\captionsetup{width=.9\textwidth}
\caption{Comparison of KMS failure modes. Each failure mode
  violates a specific subset of the three KMS conditions in
  Def.~\ref{def:KMS} through a distinct mechanism.}
\label{tab:failure}
\renewcommand{\arraystretch}{1.25}
\begin{tabular}{@{}p{3.2cm}p{4.8cm}p{4.8cm}@{}}
\toprule
& \textbf{Exceptional point} & \textbf{Complex spectrum} \\
\midrule
Root cause
  & Non-diagonalisability; Jordan blocks
  & Complex Boltzmann weights \\
KMS condition~(i) (analyticity)
  & Formally intact; Jordan terms are entire
  & Intact; $\me^{az}$ is entire \\
KMS condition~(ii) (boundedness)
  & Fails: $\|\me^{\mi H t}\| \sim |t|^{m-1}$
  & Fails: $\me^{-(\gamma_n-\gamma_m)t} \to \infty$ \\
KMS condition~(iii) (boundary identity)
  & Fails: spectral expansion corrupted
  & Fails: unbounded series \\
Biorthogonal completeness
  & Fails; replaced by Jordan resolution
  & Intact (if eigenstates exist) \\
Cyclicity of $\Tr_{\rm{bi}}$
  & Fails
  & May hold formally \\
Open problem
  & Jordan-block KMS framework
  & Complex-spectrum thermal theory \\
\bottomrule
\end{tabular}
\end{table}

The table makes clear that the two failure modes, while both
fatal to the KMS framework, are mechanistically distinct.
Exceptional points corrupt the geometric structure of the
eigensystem, and complex spectra corrupt the analytic behaviour
of the thermal weights. Future extensions of the non-Hermitian
KMS theory — whether through Jordan-adapted modular flows or
through a relaxation of the positivity requirement — will need
to address these two obstructions separately.

\section{Conclusion}
\label{sec:concl}

This paper has addressed a single question: to what extent does
the standard KMS characterisation of thermal equilibrium extend
to non-Hermitian quantum systems? The answer depends sharply on
which structural properties of the Hamiltonian are assumed, and
the three routes developed here give a precise map of the terrain.

\begin{itemize}
    \item \noindent\textbf{Route~I: A complete KMS theorem for
quasi-Hermitian systems.}
The central result of the paper is the Spectral KMS Theorem
(Thm.~\ref{thm:KMS}): under Assum.~\ref{asm:A1},
the $\eta$-Gibbs state $\omega_\eta$ satisfies all three
analytic conditions of Def.~\ref{def:KMS}. Specifically,
the correlation function $F_{AB}(z) = \omega_\eta(A\,\sigma_z(B))$
is analytic on the thermal strip $\mathcal{S}_\beta$, bounded
and continuous on its closure $\overline{\mathcal{S}}_\beta$,
and satisfies the boundary relation
$\omega_\eta(\sigma_t(A)\,B) = \omega_\eta(B\,\sigma_{t+i\beta}(A))$
for all $t \in \mathbb{R}$.
The supporting results — positivity and faithfulness of
$\omega_\eta$ (Thm.~\ref{thm:positivity}), the
$*$-automorphism property of $\sigma_t$ in the $\eta$-Hilbert
space (Thm.~\ref{thm:autoiso}), and the trace-class
properties of $\me^{-\beta H}$ (Rem.~\ref{rem:Zeta-pos}) —
confirm that $\omega_\eta$ is a genuine physical state, not
merely a formal thermal-looking functional. The proof is
self-contained: the Hadamard three-line theorem is applied to
finite partial sums of the spectral series to avoid circular
reasoning, and trace-class estimates ensure the absolute
convergence of all thermal traces.

The result is non-trivial despite the existence of the
intertwining map $U = \eta^{1/2}$, which relates $H$ to the
self-adjoint Hamiltonian $h = UHU^{-1}$. The transported state
$\hat\omega(X) = \omega_\eta(U^{-1}XU) = \Tr[\me^{-\beta h}X\eta]/Z_\eta$
differs from the standard Gibbs state $\omega_h$ of $h$
whenever $[\eta, h] \neq 0$, and its KMS property does not
follow from the Hermitian theory. The proof in
Sec.~\ref{subsec:main-thm} provides the independent
verification that this requires.

\item \noindent\textbf{Route~II: A structural characterisation of quasi-Hermiticity.}
Under the weaker hypothesis of biorthogonal completeness alone,
the formal KMS-type identity $\omega_{\rm{bi}}(\sigma_t(A)\,B) = \omega_{\rm{bi}}(B\,\sigma_{t+i\beta}(A))$ holds (Prop.~\ref{prop:formkms}), together with the strip analyticity and time-translation invariance of $\omega_{\rm{bi}}$. 
However, positivity $\omega_{\rm{bi}}(A^\dag A) \geq 0$ fails generically,
and without it $\omega_{\rm{bi}}$ is not a physical state.
The Biorthogonal KMS Structure Theorem (Thm.~\ref{thm:structure}) closes this gap from both
directions: positivity of $\omega_{\rm{bi}}$ holds if and only if $H$ is quasi-Hermitian. 
This provides a metric-free characterisation of quasi-Hermiticity 
— starting from a property of the thermal functional rather than assuming $\eta$ a priori — 
and lies strictly outside the Mostafazadeh--Scholtz similarity-transformation framework.

\item \noindent\textbf{Route~III: Open systems and quantum detailed balance.}
For open quantum systems governed by the GKSL master equation,
the Fagnola--Umanit\`a standard quantum detailed balance condition (Def.~\ref{def:QDB}) 
provides a sufficient condition for a KMS-type thermal steady state 
in terms of the full Lindblad data $\{H_{\rm{sys}}, L_k\}$. 
This route is logically independent of Routes~I and~II: 
the KMS properties of the steady state depend on the full generator $\mathcal{L}$, 
not on the effective non-Hermitian Hamiltonian $H_{\rm{eff}}$ alone, 
and the two frameworks cannot be directly compared.

\end{itemize}

\textbf{Failure analysis.}
The KMS framework fails at exceptional points through the polynomial growth Eq.~\eqref{eq:poly-growth}, 
loss of biorthogonal completeness, 
and consequent violation of the boundedness condition. 
For complex spectra, the failure mechanism is different: 
the exponential growth of the factor $\me^{-(\gamma_n - \gamma_m)t}$ on the real time axis makes the
correlation function unbounded on $\partial\overline{\mathcal{S}}_\beta$,
violating the $L^\infty$ boundary bounds required by the Hadamard three-line theorem. 
Both cases remain open problems,
and the Tab.~\ref{tab:comparison} below records the precise status of each route and failure mode.

\begin{table}[ht!]
\centering
\captionsetup{width=.9\textwidth}
\caption{Summary of the three routes and two failure modes.
  Checkmarks indicate results established in this paper.}
\label{tab:comparison}
\renewcommand{\arraystretch}{1.3}
\begin{tabular}{@{}p{3.0cm}p{3.8cm}p{1.8cm}p{1.6cm}p{1.6cm}p{2.8cm}@{}}
\toprule
Route / Case
  & Conditions
  & Rigour
  & Positivity
  & $*$-auto.
  & Key remark \\
\midrule
\textbf{I}: Quasi-Hermitian
  & Real spectrum, $\eta > 0$ bounded, diagonalisable, (A4)
  & \checkmark\ Thm.~\ref{thm:KMS}
  & \checkmark\ Thm.~\ref{thm:positivity}
  & \checkmark\ Thm.~\ref{thm:autoiso}
  & TT established in finite dimensions \\[4pt]
\textbf{II}: Biorthogonal
  & Real spectrum, biorthog.\ completeness
  & $\approx$\ algebraic identity
  & iff (A1-4)
  & $\times$\ (std.\ i.p.)
  & Thm.~\ref{thm:structure}: positivity $\Leftrightarrow$ quasi-Hermitian \\[4pt]
\textbf{III}: Lindblad QDB
  & Open system, full Lindblad data, QDB
  & \checkmark\ (CPTP)
  & \checkmark\ (CPTP)
  & \checkmark\ (CPTP)
  & Sufficient, not equivalent to KMS \\[4pt]
Exceptional point
  & ---
  & $\times$\ collapses
  & $\times$
  & $\times$
  & Jordan-block KMS: open problem \\[4pt]
Complex spectrum
  & ---
  & $\times$\ no framework
  & $\times$
  & $\times$
  & Complex $\beta$: no consensus \\
\bottomrule
\end{tabular}
\end{table}

\textbf{Open problems and outlook.}
Three directions for future work emerge naturally from the
analysis.

\begin{itemize}
    \item The first is the gap between the Spectral KMS Theorem and the
full Haag--Hugenholtz--Winnink theorem in the $C^*$-algebraic
sense. Closing this gap requires equipping the observable
algebra with a $C^*$-norm, establishing strong continuity of
$\sigma_t$ in the $\sigma$-weak operator topology, and
constructing the TT modular operator $\Delta$
for the $\eta$-Gibbs state. The last point is particularly
significant: $\Delta$ encodes the entire modular structure and
its domain theory, and its construction for quasi-Hermitian
systems would provide the analogue of the KMS modular theory
in the non-Hermitian setting.
From this perspective, one of the three ingredients required for a full Haag--Hugenholtz--Winnink framework has now been completed in finite dimensions, namely the explicit construction of the TT modular operator and modular automorphism group for the $\eta$-Gibbs state.
The remaining challenges concern the existence of an appropriate $C^*$-norm and the strong continuity of the physical dynamics in the $\sigma$-weak topology.

\item The second direction is the extension of Route~I to unbounded
Hamiltonians. The present paper assumes $H \in \mathcal{B}(\mathcal{H})$
to ensure norm-convergence of the operator exponential. For
physically relevant Hamiltonians — Schrödinger operators,
lattice Hamiltonians with infinite-range interactions — the
exponential $\me^{\mi Ht}$ must be defined via Stone's theorem,
and the KMS proof requires domain-theoretic methods along the
lines of Ref.~\cite{Bratteli:1996xq}. The biorthogonal structure of the proof
should survive this extension, but the analytic estimates need
to be reworked in the graph-norm topology.

\item The third direction concerns exceptional points and complex
spectra. For exceptional points, the natural question is
whether a Jordan-block adaptation of the modular flow can be
constructed — replacing $\me^{\mi E_n t}$ by the polynomial-%
exponential expression Eq.~\eqref{eq:poly-growth} throughout the
spectral theory. For complex spectra, the question is whether
a consistent thermal framework can be defined by relaxing the
positivity requirement or passing to an indefinite-inner-product
Hilbert space. Both directions require new conceptual
foundations beyond those developed here.
\end{itemize}

Taken together, these results indicate that thermal equilibrium beyond Hermiticity is not governed by spectral reality alone. 
A non-Hermitian Hamiltonian admits a genuine KMS description precisely
when its biorthogonal thermal functional is positive — equivalently,
when the Hamiltonian is quasi-Hermitian — 
and the modular structure underlying this equilibrium can, 
at least in finite dimensions, be made as explicit as in the Hermitian TT theory.
Spectral reality without positivity yields only a formal,
non-physical KMS-type identity; positivity without spectral reality
(complex eigenvalues) or without biorthogonal completeness
(exceptional points) yields no equilibrium structure at all.
Quasi-Hermiticity is thus identified as the precise structural
watershed between genuine non-Hermitian thermal equilibrium and its formal mimicry.

\section*{Acknowledgement}
The authors thank Prof.~Fabio Bagarello for valuable comments and suggestions.
This work was supported by the Research Fund of Yantai University under Grant No.~2222018.

\appendix

\section{Standard KMS proof for Hermitian systems}
\label{app:KMS}

We include a self-contained proof for the reader's convenience 
and to establish the spectral identities that are used in Sec.~\ref{sec:route1}.

\begin{theorem}[KMS condition for the Hermitian Gibbs state]
  \label{thm:hermitian-kms}
  Let $h = h^\dag$ be a self-adjoint operator on a Hilbert
  space $\mathcal{H}$ with $\me^{-\beta h}$ trace-class, and let
  \[
    \omega_h(A) = \frac{\Tr[\me^{-\beta h} A]}{Z_h},
    \qquad Z_h := \Tr[\me^{-\beta h}].
  \]
  Then $\omega_h$ satisfies the KMS condition of
  Def.~\ref{def:KMS} at inverse temperature $\beta$.
\end{theorem}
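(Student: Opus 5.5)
The plan is to prove the standard Hermitian KMS theorem directly in the orthonormal eigenbasis of $h$, using the same three-stage structure as Sec.~\ref{subsec:main-thm} but in its simplest setting, where $\eta=\mathbf 1$ and $|\phi_n\rangle=|\psi_n\rangle$. Since $\me^{-\beta h}$ is trace-class and $h$ is self-adjoint, $h$ has pure point spectrum. I will take an orthonormal eigenbasis $\{|n\rangle\}$ with $h|n\rangle=E_n|n\rangle$ and $\sum_n\me^{-\beta E_n}=Z_h<\infty$; in particular the $E_n$ are bounded below. For bounded $A,B$, write $A_{nm}=\langle n|A|m\rangle$. Then
\[
\omega_h(A\,\sigma_t(B))=Z_h^{-1}\sum_{n,m}\me^{-\beta E_n}\me^{\mi(E_m-E_n)t}A_{nm}B_{mn}.
\]
This is the special case $\eta=\mathbf 1$ of Lem.~\ref{lem:spectral-id} and of the definition of $G_{AB}$ in Eq.~\eqref{eq:G-def}.

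\textbf{Convergence.} The natural candidate for the continuation is $F_{AB}(z)=Z_h^{-1}\sum_{n,m}\me^{-\beta E_n}\me^{\mi(E_m-E_n)z}A_{nm}B_{mn}$. On $\Im z=\alpha\in[0,\beta]$ each term has modulus $\me^{-(\beta-\alpha)E_n}\me^{-\alpha E_m}|A_{nm}||B_{mn}|$. Rather than invoke a bi-HS hypothesis, I will exploit Hermiticity. I will bound the sum by Cauchy--Schwarz in the form
\[
\sum_{n,m}\big(\me^{-(\beta-\alpha)E_n/2}\me^{-\alpha E_m/2}|A_{nm}|\big)\big(\me^{-(\beta-\alpha)E_n/2}\me^{-\alpha E_m/2}|B_{mn}|\big)\le \|\me^{-(\beta-\alpha)h/2}A\,\me^{-\alpha h/2}\|_2\,\|\me^{-\alpha h/2}B\,\me^{-(\beta-\alpha)h/2}\|_2 .
\]
Hölder's inequality for Schatten norms then controls each Hilbert--Schmidt norm by $\|\me^{-(\beta-\alpha)h/2}\|_{p}\|A\|\|\me^{-\alpha h/2}\|_{q}$ with $1/p+1/q=1/2$. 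I will choose $p=2\beta/(\beta-\alpha)$ and $q=2\beta/\alpha$. With this choice both Schatten norms are powers of $Z_h$, and the product is bounded by $Z_h\|A\|\|B\|$, uniformly in $\alpha$.

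\textbf{Analyticity and the boundary identity.} The uniform bound gives absolute and locally uniform convergence of the double series on the closed strip. Each partial sum is entire, so Weierstrass's theorem yields analyticity on $\mathcal S_\beta$. Uniform convergence on $\overline{\mathcal S}_\beta$ gives continuity and boundedness there, with $|F_{AB}|\le\|A\|\|B\|$. Alternatively, one could use the three-line argument of Prop.~\ref{prop:analytic}, since the two boundary bounds are exactly the cases $\alpha=0$ and $\alpha=\beta$. The boundary identity then follows exactly as in the proof of Thm.~\ref{thm:KMS}. At $z=t+\mi\beta$ the weight $\me^{-\beta E_n}\me^{(E_n-E_m)\beta}$ becomes $\me^{-\beta E_m}$, and relabelling $n\leftrightarrow m$ gives $F_{AB}(t+\mi\beta)=\omega_h(\sigma_t(B)A)$. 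This is the relation in Eq.~\eqref{eq:KMS_Herm} with the appropriate ordering. Absolute convergence, established in the previous step, justifies the rearrangement. Finally, I must check that the series equals $\omega_h(A\sigma_z(B))$ wherever the latter is defined by the trace. On the real line this holds by evaluating the trace in the eigenbasis, which is legitimate because $\me^{-\beta h}A\sigma_t(B)$ is trace-class.

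\textbf{Main obstacle.} The delicate step is the uniform-in-$\alpha$ estimate when $h$ is unbounded, since then $\sigma_z(B)$ for complex $z$ need not be bounded. I will handle this by working only with the series, using the Hölder interpolation above. This avoids any appeal to operator-valued analyticity of $\me^{\mi hz}$ and is the step I expect to require the most care.
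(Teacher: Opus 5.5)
Your proof is correct, and it takes a genuinely different route from the paper's. The paper argues at the operator level. It uses the functional-calculus identities $\me^{-\beta h}\me^{\mi hs}=\me^{\mi(s+\mi\beta)h}$ and $\me^{-\mi ht}=\me^{-\beta h}\me^{-\mi(t+\mi\beta)h}$ together with cyclicity of the trace to obtain $\omega_h(\sigma_t(A)B)=\omega_h(B\,\sigma_{t+\mi\beta}(A))$. It only sketches analyticity and boundedness, calling $F_{AB}$ a ``finite linear combination'' of exponentials and deferring details to Bratteli--Robinson. You instead expand in the eigenbasis of $h$ and prove the analytic conditions completely. Your Cauchy--Schwarz plus Schatten--H\"older interpolation, with $p=2\beta/(\beta-\alpha)$ and $q=2\beta/\alpha$, gives the sharp bound $|F_{AB}(z)|\le\|A\|\,\|B\|$ on $\overline{\mathcal S}_\beta$.

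Your route buys rigour in the case the theorem is really about. If $\dim\mathcal H=\infty$ and $\me^{-\beta h}$ is trace-class, then $E_n\to+\infty$. Hence $\me^{-\mi(t+\mi\beta)h}$ and $\sigma_{t+\mi\beta}(A)$ are unbounded, and the paper's trace chain is only formal; your series never touches these objects. It also shows that no bi-HS hypothesis like Assum.~\ref{asm:A4} is needed in the Hermitian case, since trace-class of $\me^{-\beta h}$ alone supplies the summability. The paper's route buys brevity and a basis-free derivation of the boundary identity. That derivation is fully rigorous in finite dimensions, which is the only case in which $h$ can be bounded with $\me^{-\beta h}$ trace-class.

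Two small points.

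\emph{Uniform convergence.} A uniform bound on the absolute series does not by itself give locally uniform convergence. Use the termwise majorant $\me^{-(\beta-\alpha)E_n}\me^{-\alpha E_m}\le \me^{-\beta E_n}+\me^{-\beta E_m}$, which holds because the left side is monotone in $\alpha$. This majorant is summable against $|A_{nm}|\,|B_{mn}|$ by your $\alpha=0$ and $\alpha=\beta$ cases, so the Weierstrass M-test applies. Alternatively, use the three-line argument you mention.

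\emph{Form of the boundary relation.} Your form $F_{AB}(t+\mi\beta)=\omega_h(\sigma_t(B)A)$ is the correct one. If $F_{BA}(z)=\omega(\sigma_z(B)A)$ is read literally as in Def.~\ref{def:KMS}, the relation $F_{AB}(t)=F_{BA}(t+\mi\beta)$ already fails for a two-level $h$; it holds with $t-\mi\beta$ instead. Your version agrees, after $A\leftrightarrow B$, with the identity the paper actually proves.
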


\begin{proof}
The proof rests on a single spectral identity, which we
establish first.

Since $h$ is self-adjoint, the spectral theorem provides a
projection-valued measure $dE_\lambda$ such that
$h = \int \lambda\, dE_\lambda$ and
$f(h) = \int f(\lambda)\, dE_\lambda$ for any Borel function $f$
\cite[Thm.~VIII.6]{Reed:1975uy}. For two Borel functions $f$
and $g$ of the same self-adjoint operator, the product rule
for spectral integrals gives
$(f(h))(g(h)) = (fg)(h)$
\cite[Theorem~VIII.5]{Reed:1975uy}.
Applying this with $f(\lambda) = \me^{-\beta\lambda}$ and
$g(\lambda) = \me^{\mi s\lambda}$, where $s \in \mathbb{C}$ with
$\Im(s) = \alpha \in [0, \beta]$:
\begin{equation}
  \me^{-\beta h} \cdot \me^{\mi hs}
  =
  \int \me^{-\beta\lambda}\, dE_\lambda
  \cdot
  \int \me^{is\lambda}\, dE_\lambda
  =
  \int \me^{(-\beta + \mi s)\lambda}\, dE_\lambda
  =
  \me^{\mi(s + \mi\beta)h}.
  \label{eq:spectral-product}
\end{equation}
To verify that the combined integrand is in $L^\infty(dE_\lambda)$,
write $s = t + \mi\alpha$ with $t \in \mathbb{R}$ and
$\alpha \in [0, \beta]$. Then
\[
  \left|\me^{(-\beta + is)\lambda}\right|
  = \me^{\Im[(-\beta + \mi s)\lambda]}
  = \me^{-(\beta - \alpha)\lambda},
\]
which is bounded for $\lambda \geq E_{\min} > -\infty$ and
$\alpha \in [0, \beta]$ (so $\beta - \alpha \geq 0$).
Hence Eq.~\eqref{eq:spectral-product} holds in operator norm.

Replacing $s$ by $-s$ in Eq.~\eqref{eq:spectral-product} and
rearranging gives the companion identity
\begin{equation}
  \me^{-\mi h s}
  =
  \me^{-\beta h} \cdot \me^{-\mi(s - \mi\beta)h}
  =
  \me^{-\beta h} \cdot \me^{-\mi(s + \mi\beta)h}
  \big|_{s \to s - \mi\beta}.
  \label{eq:spectral-product-2}
\end{equation}
More precisely, $\me^{\mi ht}\cdot \me^{-\beta h} = \me^{-\beta h}\cdot \me^{\mi ht}$
(both equal $\me^{(\mi t-\beta)h}$), so the Hamiltonian commutes with its
own Gibbs factor, and the chain of identities:
\begin{equation}
  \me^{-\mi ht} = \me^{-\beta h} \cdot \me^{-\mi(t + \mi\beta)h}
  \label{eq:companion}
\end{equation}
follows by the same spectral argument with the real-part bound
$\Im[(\beta - \mi t)\lambda] = \beta\lambda \geq \beta E_{\min}$
uniformly bounded below.


Using Eqs.~\eqref{eq:spectral-product} and \eqref{eq:companion},
and writing $A(t) = \me^{\mi h t}A \me^{-\mi h t}$, one can obtain
\begin{align}
  Z_h\cdot\omega_h(\sigma_t(A)\, B)
  &= \Tr\left[\me^{-\beta h}\, \me^{\mi h t} A\, \me^{-\mi h t}\, B\right]
  \notag\\
  &= \Tr\left[B\, \me^{-\beta h}\, \me^{\mi h t} A\, \me^{-\mi h t}\right]
  \tag{cyclicity of trace}\\
  &= \Tr\left[B\, \me^{\mi(t+\mi\beta)h}\, A\, \me^{-\mi ht}\right]
  \tag{by Eq.~\eqref{eq:spectral-product}: $\me^{-\beta h}\me^{\mi ht} = \me^{\mi(t+\mi\beta)h}$}\\
  &= \Tr\left[B\, \me^{\mi(t+\mi\beta)h}\, A\, \me^{-\beta h}\, \me^{-\mi(t+\mi\beta)h}\right]
  \tag{by Eq.~\eqref{eq:companion}: $\me^{-\mi ht} = \me^{-\beta h}\me^{-\mi(t+\mi\beta)h}$}\\
  &= Z_h\cdot\omega_h\left(B\,\sigma_{t+\mi\beta}(A)\right),
  \tag{cyclicity again}
\end{align}
which is the KMS boundary relation Eq.~\eqref{eq:KMS_Herm}.

The correlation function $F_{AB}(z) = \omega_h(A\,\sigma_z(B))$
is a finite linear combination (over the spectral decomposition
of $h$) of terms of the form $\me^{\mi(E_m - E_n)z}$, which are
entire in $z$. The bound $|\me^{\mi(E_m - E_n)z}| = \me^{-\Im(z)(E_m - E_n)}$
is controlled on $\overline{\mathcal{S}}_\beta$ by the
trace-class condition $\me^{-\beta h} \in \mathcal{I}_1(\mathcal{H})$,
giving uniform boundedness on the closed strip. These are the
standard arguments, and full details can be found in Ref.~\cite{Bratteli:1996xq}.
\end{proof}

\section{Basic properties of pseudo-Hermitian operators}
\label{app:pseH}

This appendix collects the two spectral propositions cited in
Sec.~\ref{sec:intr} (Prop.~\ref{prop:chain}) and used
throughout the paper. Both propositions follow directly from the pseudo-Hermitian
relation $H^\dag = \eta H \eta^{-1}$ and the positive-definiteness
of $\eta$.

\begin{proposition}[Real spectrum under quasi-Hermiticity]
  \label{prop:realspec}
  If $H$ is an $\eta$-pseudo-Hermitian with $\eta > 0$, then the
  eigenvalues of $H$ are real.
\end{proposition}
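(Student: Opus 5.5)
The plan is the standard eigenvalue argument in the $\eta$-inner product. Let $E$ be an eigenvalue of $H$ with eigenvector $|\psi\rangle\neq0$, so $H|\psi\rangle=E|\psi\rangle$. Rewrite the pseudo-Hermitian relation $H^\dag=\eta H\eta^{-1}$ as $\eta H = H^\dag\eta$. We then compute the scalar $\langle\psi|\eta H|\psi\rangle$ in two ways.

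First, acting with $H$ to the right gives $\langle\psi|\eta H|\psi\rangle = E\langle\psi|\eta|\psi\rangle$. Second, we use $\eta H = H^\dag\eta$ and let $H^\dag$ act to the left. Since $\langle\psi|H^\dag = (H|\psi\rangle)^\dag = \overline{E}\langle\psi|$, this gives $\langle\psi|H^\dag\eta|\psi\rangle=\overline{E}\langle\psi|\eta|\psi\rangle$. Subtracting the two expressions yields
\[
(E-\overline{E})\,\langle\psi|\eta|\psi\rangle = 0 .
\]

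The only point needing care is that the scalar $\langle\psi|\eta|\psi\rangle$ does not vanish; this is exactly where positivity of $\eta$ enters, and it is the reason mere pseudo-Hermiticity does not suffice. Under Assum.~\ref{asm:A1} we have $\langle\psi|\eta|\psi\rangle\ge c\|\psi\|^2>0$. In finite dimensions, $\eta>0$ gives strict positivity on nonzero vectors directly. Hence $E=\overline{E}$, so $E\in\mathbb R$.

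If the statement is read as concerning the whole spectrum in infinite dimensions rather than just eigenvalues, a further step is needed. Here I would invoke the similarity to $h=\eta^{1/2}H\eta^{-1/2}$, which is self-adjoint by Lem.~\ref{lem:h-sa}. A bounded, boundedly invertible similarity preserves the spectrum, since $H-z=\eta^{-1/2}(h-z)\eta^{1/2}$. Therefore $\sigma(H)=\sigma(h)\subset\mathbb R$.
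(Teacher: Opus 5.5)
Your argument is correct and is essentially the paper's proof: you evaluate $\langle\psi|\eta H|\psi\rangle$ once via $H|\psi\rangle=E|\psi\rangle$ and once via $\eta H=H^\dag\eta$, obtain $(E-\overline{E})\langle\psi|\eta|\psi\rangle=0$, and conclude from $\langle\psi|\eta|\psi\rangle>0$. Your final paragraph goes beyond what the paper proves, since the paper treats only eigenvalues: there you show $\sigma(H)=\sigma(h)\subset\mathbb{R}$ via the bounded, boundedly invertible similarity to the self-adjoint $h=\eta^{1/2}H\eta^{-1/2}$, and this is valid under Assum.~\ref{asm:A1}.
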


\begin{proof}
Let $H|\psi\rangle = E|\psi\rangle$ with $|\psi\rangle \neq 0$.
Acting on both sides from the left with $\langle\psi|\eta$, one obtain
\[
  E\,\langle\psi|\eta|\psi\rangle
  =
  \langle\psi|\eta H|\psi\rangle.
\]
The pseudo-Hermitian relation $H^\dag\eta = \eta H$
(equivalently, $\eta H = H^\dag\eta$, obtained by
multiplying $H^\dag = \eta H\eta^{-1}$ from the right by $\eta$)
allows us to rewrite the right-hand side as:
\[
  \langle\psi|\eta H|\psi\rangle
  =
  \langle\psi|H^\dag\eta|\psi\rangle
  =
  \overline{\langle\psi|H\eta^\dag|\psi\rangle^\dag}
  =
  \bar{E}\,\langle\psi|\eta|\psi\rangle,
\]
where the last step uses $H|\psi\rangle = E|\psi\rangle$
and $\eta = \eta^\dag$.
Since $\eta > 0$, we have $\langle\psi|\eta|\psi\rangle
= \langle\psi|\psi\rangle_\eta > 0$, so we can obtain $E = \bar{E}$, i.e.\ $E \in \mathbb{R}$.
\end{proof}

\begin{proposition}[Biorthogonal structure under Assum.~(A1)]
  \label{prop:biortho-struct}
  Under Assum.~\ref{asm:A1}, one can define
  $|\phi_n\rangle := \eta|\psi_n\rangle$ for each right
  eigenvector $|\psi_n\rangle$.  Then:
  \begin{enumerate}
    \item $H^\dag|\phi_n\rangle = E_n|\phi_n\rangle$,
      so $|\phi_n\rangle$ is a left eigenvector of $H$
      with the same eigenvalue $E_n$;
    \item $\langle\phi_m|\psi_n\rangle = \delta_{mn}$
      (biorthonormality);
    \item $\sum_n |\psi_n\rangle\langle\phi_n| = \mathbf{1}$
      (completeness);
    \item $\Tr_{\rm{bi}}[A] = \Tr_\eta[A]$
      for all $A \in \mathcal{B}(\mathcal{H})$.
  \end{enumerate}
\end{proposition}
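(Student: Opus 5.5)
The plan is to transport everything to the ordinary Hilbert space through the unitary-type map $U=\eta^{1/2}$, as in Lem.~\ref{lem:h-sa} and Rem.~\ref{rem:Zeta-pos}. Under Assum.~\ref{asm:A1} together with the $\eta$-orthonormal eigenbasis $\{|\psi_n\rangle\}$ of Assum.~\ref{asm:A3}, the vectors $|e_n\rangle:=U|\psi_n\rangle$ satisfy $\langle e_m|e_n\rangle=\langle\psi_m|\eta|\psi_n\rangle=\delta_{mn}$. The whole proposition then reduces to statements about the orthonormal system $\{|e_n\rangle\}$.

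\emph{Items (i) and (ii).} These are direct computations.
\begin{itemize}
\item For (i), rewrite Assum.~\ref{asm:A2} as $H^\dag\eta=\eta H$. Applying it to $|\psi_n\rangle$ gives $H^\dag|\phi_n\rangle=\eta H|\psi_n\rangle=E_n|\phi_n\rangle$. Since $E_n=E_n^*$ by Prop.~\ref{prop:realspec}, this is exactly the left-eigenvector equation of Def.~\ref{def:biortho}.
\item For (ii), use $\eta^\dag=\eta$ to write $\langle\phi_m|\psi_n\rangle=\langle\psi_m|\eta|\psi_n\rangle=\delta_{mn}$, which is the $\eta$-orthonormality in Assum.~\ref{asm:A3}.
\end{itemize}

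\emph{Item (iii), completeness.} This is the step I expect to be the main obstacle. The key input is that $\{|e_n\rangle\}$ is not just orthonormal but an orthonormal \emph{basis} of $\mathcal H$. I would extract this from the spectral decomposition $H=\sum_nE_n|\psi_n\rangle\langle\psi_n|_\eta$ in Assum.~\ref{asm:A3}: the eigenvectors are assumed to exhaust $\mathcal H_\eta$ (pure point spectrum, no Jordan blocks), which is equivalent to saying that $h=UHU^{-1}$ is self-adjoint with orthonormal eigenbasis $\{|e_n\rangle\}$. This is the Riesz-basis discussion following Assum.~\ref{asm:A3}.

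Granting this, $\sum_n|e_n\rangle\langle e_n|=\mathbf 1$ in the strong operator topology. Conjugating by $U^{-1}(\cdot)U$, which is bounded with bounded inverse and so preserves strong convergence, gives
\[
\mathbf 1=\sum_n U^{-1}|e_n\rangle\langle e_n|U=\sum_n|\psi_n\rangle\langle\psi_n|\eta=\sum_n|\psi_n\rangle\langle\phi_n| .
\]

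\emph{Item (iv).} Interpret $\Tr_\eta$ as the trace on $\mathcal H_\eta$. The map $U:\mathcal H_\eta\to\mathcal H$ is unitary, so for trace-class $A$ the quantity
\[
\Tr_\eta[A]=\Tr[UAU^{-1}]=\sum_n\langle e_n|UAU^{-1}|e_n\rangle
\]
is basis-independent. Evaluating it in the basis $\{|e_n\rangle\}$ gives $\sum_n\langle\psi_n|\eta A|\psi_n\rangle=\sum_n\langle\phi_n|A|\psi_n\rangle=\Tr_{\rm bi}[A]$.

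For a general bounded $A$ the identity should be understood in the same sense as Eq.~\eqref{eq:bi-trace}, namely whenever the sums converge absolutely, for example for $A=\me^{-\beta H}X$ via the trace-class estimate of Rem.~\ref{rem:Zeta-pos}. I will state this caveat explicitly rather than claim the identity for arbitrary bounded $A$.
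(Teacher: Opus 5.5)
Your proposal is correct and is essentially the paper's proof. Items (i) and (ii) are the same one-line computations from $H^\dag\eta=\eta H$ and $\eta$-orthonormality; your pull-back of $\sum_n|e_n\rangle\langle e_n|=\mathbf 1$ through $U^{-1}(\cdot)\,U$ is the paper's expansion of $|\chi\rangle$ in the complete $\eta$-orthonormal basis, written in $\mathcal H$ rather than $\mathcal H_\eta$ and with the strong convergence made explicit; and (iv) is the same termwise identity $\langle\phi_n|A|\psi_n\rangle=\langle\psi_n|\eta A|\psi_n\rangle$. Reading $\Tr_\eta$ as the $\mathcal H_\eta$-trace, with your convergence caveat, is what makes (iv) true, but it gives $\Tr_\eta[A]=\Tr[UAU^{-1}]=\Tr[A]$ for trace-class $A$, which is not the $\Tr[\eta A]$ that $\Tr_\eta$ denotes in Eq.~\eqref{eq:eta-gibbs} (already for $A=\mathbf 1$ on $\mathbb C^2$ with $\eta=\diag(2,1)$ one gets $2\neq 3$).
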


\begin{proof}
\noindent\textbf{(i).}
Using the pseudo-Hermitian relation in the form $H^\dag\eta = \eta H$, one can get:
\[
  H^\dag|\phi_n\rangle
  = H^\dag(\eta|\psi_n\rangle)
  = \eta(H|\psi_n\rangle)
  = \eta(E_n|\psi_n\rangle)
  = E_n\,\eta|\psi_n\rangle
  = E_n|\phi_n\rangle.
\]

\noindent\textbf{(ii).}
By the definition $|\phi_m\rangle = \eta|\psi_m\rangle$ and
the $\eta$-orthonormality of the eigenbasis from \ref{asm:A3}, one can obtain:
\[
  \langle\phi_m|\psi_n\rangle
  = \langle\psi_m|\eta|\psi_n\rangle
  = \langle\psi_m|\psi_n\rangle_\eta
  = \delta_{mn}.
\]

\noindent\textbf{(iii).}
For any $|\chi\rangle \in \mathcal{H}$, expanding in the
$\eta$-orthonormal basis $\{|\psi_n\rangle\}$ of \ref{asm:A3}\ 
gives $|\chi\rangle = \sum_n \langle\psi_n|\chi\rangle_\eta\,|\psi_n\rangle$.
Using $\langle\psi_n|\chi\rangle_\eta = \langle\psi_n|\eta|\chi\rangle
= \langle\phi_n|\chi\rangle$, one can find:
\[
  \sum_n |\psi_n\rangle\langle\phi_n|\chi\rangle
  = \sum_n |\psi_n\rangle\langle\psi_n|\chi\rangle_\eta
  = |\chi\rangle.
\]
Since $|\chi\rangle$ is arbitrary, $\sum_n |\psi_n\rangle\langle\phi_n|
= \mathbf{1}$.

\noindent\textbf{(iv).}
Using part~(ii) and $|\phi_n\rangle = \eta|\psi_n\rangle$, one can give:
\[
  \Tr_{\rm{bi}}[A]
  = \sum_n \langle\phi_n|A|\psi_n\rangle
  = \sum_n \langle\psi_n|\eta A|\psi_n\rangle
  = \Tr_\eta[A]. \qedhere
\]
\end{proof}

\section{Equivalence of Routes~I and~II}
\label{app:equiv}

This appendix establishes the precise relationship between the
two routes. Under the full Assum.~(A1), the
biorthogonal state $\omega_{\rm{bi}}$ of Route~II coincides
with the $\eta$-Gibbs state $\omega_\eta$ of Route~I, so the
biorthogonal KMS-type identity (Prop.~\ref{prop:formkms})
automatically inherits all the properties established for
$\omega_\eta$ in Sec.~\ref{sec:route1}. In finite dimensions,
the converse is also true: by the Biorthogonal KMS Structure
Theorem (Thm.~\ref{thm:structure}), positivity of
$\omega_{\rm{bi}}$ implies quasi-Hermiticity, so
Assum.~\ref{asm:A1} is in fact \emph{equivalent} to positivity of
$\omega_{\rm{bi}}$ in the finite-dimensional setting.
The present appendix formalises the forward direction, valid in
any dimension, and records the status of the converse in
infinite dimensions as an open problem.

\begin{theorem}[Equivalence of Routes~I and~II under
                Assum.~(A1)]
\label{thm:equiv}
Under Assum.~\ref{asm:A1},
$\omega_{\rm{bi}} = \omega_\eta$.
Consequently:
\begin{enumerate}
\item The biorthogonal partition functions coincide:
      $Z_{\rm{bi}} = Z_\eta$.
\item The biorthogonal KMS-type identity
    (Prop.~\ref{prop:formkms}) upgrades to the full
    Spectral KMS condition: $\omega_{\rm{bi}}$ satisfies
    all three analytic conditions of Def.~\ref{def:KMS}.
\item $\omega_{\rm{bi}}$ is faithful and positive
      (Thm.~\ref{thm:positivity}).
\end{enumerate}
\end{theorem}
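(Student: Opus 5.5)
The plan is to reduce everything to the identification of the left eigenvectors supplied by Prop.~\ref{prop:biortho-struct}. Under Assum.~\ref{asm:A1} (read together with the standing hypotheses \ref{asm:A2}--\ref{asm:A4} of Route~I), I fix the $\eta$-orthonormal right eigenbasis $\{|\psi_n\rangle\}$ of Assum.~\ref{asm:A3}. I then take as the biorthogonal system of Route~II the dual vectors $|\phi_n\rangle := \eta|\psi_n\rangle$. By Prop.~\ref{prop:biortho-struct}(i)--(iii) these are left eigenvectors, they satisfy $\langle\phi_m|\psi_n\rangle=\delta_{mn}$, and they resolve the identity. Hence Assums.~\ref{asm:A5}+\ref{asm:A6} hold, and $\omega_{\rm bi}$ is built from exactly the same spectral data as $\omega_\eta$.

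\textbf{Partition functions and the main identity.} For item (i), $Z_{\rm bi}=\sum_n \me^{-\beta E_n}$ by definition, while Eq.~\eqref{eq:partition} gives $Z_\eta=\sum_n\me^{-\beta E_n}$; both are finite by Assum.~\ref{asm:A3}, so $Z_{\rm bi}=Z_\eta$. For the main identity I expand $Z_\eta\,\omega_\eta(A)=\Tr[\eta\me^{-\beta H}A]$ in the Riesz basis, exactly as in Eq.~\eqref{eq:pos-expand}:
\[
\Tr[\eta\me^{-\beta H}A]=\sum_n\langle\psi_n|\eta\,\me^{-\beta H}A|\psi_n\rangle
=\sum_n\me^{-\beta E_n}\langle\phi_n|A|\psi_n\rangle .
\]
Two facts are used here. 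First, $\langle\psi_n|\eta\,\me^{-\beta H}=\me^{-\beta E_n}\langle\phi_n|$, which is Eq.~\eqref{eq:eta-left} continued to imaginary time via the power series. Second, the trace may be computed in the non-orthonormal basis $\{\eta^{1/2}\psi_n\}$, which is orthonormal; equivalently, I write $\Tr[X]=\Tr[U X U^{-1}]$ with $X$ trace class by Rem.~\ref{rem:Zeta-pos}. Dividing by the equal partition functions gives $\omega_\eta=\omega_{\rm bi}$.

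\textbf{Consequences.} Items (ii) and (iii) are then pure transfer. Thm.~\ref{thm:KMS} together with Props.~\ref{prop:analytic}--\ref{prop:traceclass} apply verbatim to $\omega_{\rm bi}$. The bi-HS condition \ref{asm:A7} coincides with \ref{asm:A4}, since the biorthogonal system is the same. Positivity and faithfulness follow from Thm.~\ref{thm:positivity}, read with the $\eta$-adjoint. For the standard adjoint one can alternatively invoke the (iii)$\Rightarrow$(i) computation of Thm.~\ref{thm:structure}.

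\textbf{Main obstacle.} The one delicate step is justifying the trace evaluation in infinite dimensions, since $\{\psi_n\}$ is not orthonormal. I would handle it by conjugating with $U=\eta^{1/2}$: $\Tr[\eta\me^{-\beta H}A]=\Tr[U\me^{-\beta h}UA]=\Tr[\me^{-\beta h}UAU^{-1}\,\eta]$, and then evaluating this trace in the orthonormal eigenbasis $\{U\psi_n\}$ of $h$. This requires only the trace-class property and cyclicity with bounded factors.

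A second subtlety is the non-uniqueness of the dual basis when eigenvalues are degenerate. This is resolved by the choice $|\phi_n\rangle=\eta|\psi_n\rangle$ above, since for an $\eta$-orthonormal $\{\psi_n\}$ biorthonormality forces this choice anyway.
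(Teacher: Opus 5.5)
There is a genuine gap, and it sits in the step you yourself flag as delicate. For the ordinary trace, $\Tr[\eta\me^{-\beta H}A]$ is not $\sum_n\langle\psi_n|\eta\,\me^{-\beta H}A|\psi_n\rangle$, and your proposed repair does not rescue it. Evaluating a trace in the orthonormal basis $\{U\psi_n\}$ gives $\sum_n\langle\psi_n|UXU|\psi_n\rangle$. Inserting $\sum_n|\psi_n\rangle\langle\phi_n|=\mathbf{1}$ gives $\Tr[X]=\sum_n\langle\phi_n|X|\psi_n\rangle=\sum_n\langle\psi_n|\eta X|\psi_n\rangle$. So with $X=\eta\me^{-\beta H}A$ you carry one factor of $\eta$ too many. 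If you actually carry out your $U$-conjugation, i.e.\ evaluate $\Tr[\me^{-\beta h}UAU^{-1}\eta]$ in the eigenbasis $\{U\psi_n\}$ of $h$, you get
\[
\Tr[\eta\me^{-\beta H}A]=\sum_n\me^{-\beta E_n}\langle\phi_n|A|\phi_n\rangle,
\qquad
\Tr[\eta\me^{-\beta H}]=\sum_n\me^{-\beta E_n}\langle\phi_n|\phi_n\rangle,
\]
not $\sum_n\me^{-\beta E_n}\langle\phi_n|A|\psi_n\rangle$ and $\sum_n\me^{-\beta E_n}$.

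In fact $\omega_{\rm bi}(A)=\Tr[\me^{-\beta H}A]/\Tr[\me^{-\beta H}]$ for every biorthogonal system, which also makes your degeneracy worry moot. So with the literal formula $\omega_\eta(A)=\Tr[\eta\me^{-\beta H}A]/\Tr[\eta\me^{-\beta H}]$, the two functionals coincide only when $\eta$ is a multiple of $\mathbf{1}$. Concretely, take $H=\diag(E_1,E_2)$ and $\eta=\diag(a,b)$ with $a\neq b$. This satisfies (A1)--(A4) with $|\psi_1\rangle=a^{-1/2}|e_1\rangle$, $|\psi_2\rangle=b^{-1/2}|e_2\rangle$, yet
\[
\omega_\eta(|e_1\rangle\langle e_1|)=\frac{a\me^{-\beta E_1}}{a\me^{-\beta E_1}+b\me^{-\beta E_2}}\neq\frac{\me^{-\beta E_1}}{\me^{-\beta E_1}+\me^{-\beta E_2}}=\omega_{\rm bi}(|e_1\rangle\langle e_1|).
\]
The last equality of Eq.~\eqref{eq:partition}, which you cite for item (i), is the same slip: in fact $\Tr[\me^{-\beta h}\eta]=\sum_n\me^{-\beta E_n}\langle\psi_n|\eta^2|\psi_n\rangle$. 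This matches the paper's own later remark that $\Tr[\me^{-\beta h}\eta]\neq Z_h$ in general.

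The paper's proof works instead with the first form of Eq.~\eqref{eq:eta-gibbs}, $\omega_\eta(A)=\Tr_\eta[\rho_\beta^{(\eta)}A]$, where $\Tr_\eta[X]:=\sum_n\langle\psi_n|\eta X|\psi_n\rangle$ as in Prop.~\ref{prop:biortho-struct}(iv). Then $\Tr_{\rm bi}=\Tr_\eta$ and $\langle\psi_n|\eta\,\me^{-\beta H}=\me^{-\beta E_n}\langle\phi_n|$ give $\omega_{\rm bi}=\omega_\eta$ in one line. $Z_{\rm bi}=Z_\eta$ holds once $Z_\eta$ is likewise read as $\Tr_\eta[\me^{-\beta H}]=\sum_n\me^{-\beta E_n}$, and no trace-class argument is needed. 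The idea missing from your proposal is that this is the only reading under which the statement holds. $\Tr_\eta$ is the trace on $\mathcal{H}_\eta$, which equals the ordinary $\Tr[X]$, not $\Tr[\eta X]$. Hence both functionals are $\Tr[\me^{-\beta H}\,\cdot\,]/\Tr[\me^{-\beta H}]$.

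You should also drop the claim that the (iii)$\Rightarrow$(i) computation of Thm.~\ref{thm:structure} gives positivity for the standard adjoint. That computation silently moves $\eta$ past $A^\dag$. Since $\omega_{\rm bi}$ has density $\me^{-\beta H}/Z_{\rm bi}$, having $\omega_{\rm bi}(A^\dag A)\ge0$ for all $A$ would force $\me^{-\beta H}\ge0$, i.e.\ $H=H^\dag$. What item (iii) can assert, via Thm.~\ref{thm:positivity}, is $\eta$-positivity, $\omega_{\rm bi}(A^{\dag_\eta}A)=Z_{\rm bi}^{-1}\sum_n\me^{-\beta E_n}\|A\psi_n\|_\eta^2\ge0$, and that part of your argument is fine.
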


\begin{proof}
By Proposition~\ref{prop:biortho-struct}(iv),
$\Tr_{\rm{bi}} = \Tr_\eta$ under \textbf{(A1)}.
Therefore, for any $A \in \mathcal{B}(\mathcal{H})$:
\[
\omega_{\rm{bi}}(A)
=
\frac{1}{Z_{\rm{bi}}}
\sum_n \me^{-\beta E_n} A_{nn}
=
\frac{\Tr_\eta[\me^{-\beta H} A]}{Z_\eta}
=
\omega_\eta(A),
\]
where $Z_{\rm{bi}} = \sum_n \me^{-\beta E_n} = Z_\eta$ by \eqref{eq:partition}. 
The identity $\omega_{\rm{bi}} = \omega_\eta$
means that every property established for $\omega_\eta$ in
Sec.~\ref{sec:route1} holds equally for $\omega_{\rm{bi}}$,
in particular, the three KMS conditions of Thm.~\ref{thm:KMS} and the faithfulness and positivity of Thm.~\ref{thm:positivity}.
\end{proof}

\begin{remark}[Status of the converse in infinite dimensions]
\label{rem:converse}
In finite dimensions ($\mathcal{H} = \mathbb{C}^d$), the full equivalence
\[
\text{positivity of } \omega_{\rm{bi}}
\;\Longleftrightarrow\;
\text{quasi-Hermiticity of } H
\;\Longleftrightarrow\;
\text{Assumption~(A1--A4)}
\]
is established by Thm.~\ref{thm:structure}: in particular,
the implication (i)$\Rightarrow$(ii) is proved there, 
and no open problem remains in the finite-dimensional case.

In infinite dimensions \cite{Mostafazadeh:2012ezk}, Theorem~\ref{thm:equiv} above
establishes only the forward direction:
\[
\text{Assum.~(A1)}
\;\Longrightarrow\;
\omega_{\rm{bi}} \text{ is a faithful positive KMS state}
\;\left(= \omega_\eta\right).
\]
The converse — whether positivity of $\omega_{\rm{bi}}$
as a $*$-functional on $\mathcal{B}(\mathcal{H})$ implies the
existence of a positive-definite $\eta$ satisfying
(A1) when $\mathcal{H}$ is infinite-dimensional —
is related to the \emph{quasi-Hermitian inverse problem}
\cite{Scholtz:1992zz, Mostafazadeh:2008pw}. 
The difficulty is that the Gram-type construction used in the finite-dimensional proof (Steps~1--3 of Thm.~\ref{thm:structure}) requires
the representing matrix $G$ to define a bounded operator with bounded inverse, 
which is not automatic in infinite dimensions. 
This question is outside the scope of the present work and is left as an open problem.
\end{remark}

\bibliographystyle{utphys}
\bibliography{references}
\end{document}